\newcolumntype{C}{>{$}c<{$}} 
\newtheorem{claim}{Claim}
\newtheorem{computation}{Computation}
\newtheorem{theorem}{Theorem}
\newtheorem{lemma}{Lemma}
\newtheorem{defn}{Definition}
\newtheorem{proposition}{Proposition}
\newtheorem{newrule}{Rule}
\newtheorem{example}{Example}
\newtheorem{corollary}{Corollary}
\newtheorem{condition}{Condition}
\newcommand{\etal}{\text{et al.\@ }}
\DeclareMathOperator{\Sgn}{Sgn}
\newcommand{\sgn}{\Sgn}
\newcommand{\XX}{{\bm{X}}} 
\newcommand{\YY}{{\bm{Y}}} 
\newcommand{\ZZ}{{\bm{Z}}}
\newcommand{\xx}{{\bm{x}}} 
\newcommand{\yy}{{\bm{y}}} 
\newcommand{\PM}{\text{PM}}
\newcommand{\MIN}{\text{min}}
\newcommand{\Ibroja}{I_{\cap}^{\text{BROJA}}}
\newcommand{\Ired}{I_{\cap}}
\newcommand{\Imin}{I_{\cap}^{\MIN}}
\newcommand{\Iccs}{I_{\cap}^{CCS}}
\newcommand{\Ipm}{I_{\cap}^{\PM}}
\newcommand{\Ipmp}{I_{\cap}^{\PM,+}}
\newcommand{\Ipmm}{I_{\cap}^{\PM,-}}
\newcommand{\Ipmpm}{I_{\cap}^{\PM,\pm}}
\newcommand{\isx}{i_{\cap}^{\text{sx}}}
\newcommand{\Isx}{I_{\cap}^{\text{sx}}}
\newcommand{\Smin}{S^{\MIN}}
\newcommand{\Spm}{S^{\PM}}
\newcommand{\Rmin}{R^{\MIN}}
\newcommand{\Rpm}{R^{\PM}}
\newcommand{\Umin}[1]{U_{#1}^{\MIN}}
\newcommand{\Upm}[1]{U_{#1}^{\PM}}
\newcommand{\Upmp}[1]{U_{#1}^{\PM,+}}
\newcommand{\Upmm}[1]{U_{#1}^{\PM,-}}
\newcommand{\oeq}{\stackrel{\omega}{=}}
\newcommand{\dmin}{\mathfrak{d}_{\MIN}}
\newcommand{\dpm}{\mathfrak{d}_{\PM}}
\newcommand{\beginsupplement}{%
        \setcounter{table}{0}
        \renewcommand{\thetable}{S\arabic{table}}%
        \setcounter{figure}{0}
        \renewcommand{\thefigure}{S\arabic{figure}}%
     }
\begin{document}

\preprint{APS/123-QED}

\title{Signed and Unsigned Partial Information\\ Decompositions of Continuous Network Interactions}

\author{Jesse Milzman}
 \altaffiliation[]{Army Research Laboratory}
 \email{jesse.m.milzman.civ@army.mil}
\author{Vince Lyzinski}%
\altaffiliation[]{Department of Mathematics, University of Maryland}
 \email{vlyzinsk@umd.edu}




\date{\today}

\begin{abstract}

We investigate the partial information decomposition (PID) framework as a tool for edge nomination.
We consider both the $\Imin$ and $\Ipm$ PIDs, from \cite{Williams2010} and \cite{Finn2018main} respectively, and we both numerically and analytically investigate the utility of these frameworks for discovering significant edge interactions.
In the course of our work, we extend both the $\Imin$ and $\Ipm$ PIDs to a general class of continuous trivariate systems.
Moreover, we examine how each PID apportions information into redundant, synergistic, and unique information atoms within the source-bivariate PID framework.
Both our simulation experiments and analytic inquiry indicate that the atoms of the $\Ipm$ PID have a non-specific sensitivity to high predictor-target mutual information, regardless of whether or not the predictors are truly interacting.
By contrast, the $\Imin$ PID is quite specific, although simulations suggest that it lacks sensitivity.


\end{abstract}

\maketitle


\section{Introduction}
\label{section:introduction}

Information-theoretic methods are non-parametric and well-suited to capturing complex, non-linear dependency between variables.
Although less common than other methods, information theoretic tools have been widely applied in gene network analysis, and mutual information (MI) has commonly been employed in place of Pearson correlation to create association networks (\cite{Butte2000RNAchemo,Butte2003chapter,Chanda2020information}).
However, much like Pearson correlation and other naïve pairwise methods, MI cannot distinguish between direct and indirect associations.

In gene regulatory networks (GRNs) and protein-protein interaction (PPI) networks, especially,
we expect a highly interdependent structure, in which indirect associations are not immediately distinguishable from direct via any pairwise comparison.
As such, pairwise methods have an unacceptably high false discovery rate, and any networks inferred with them will include too many spurious associations for the biologically meaningful ones to stand out \cite{soranzo2007comparing}.
Moreover, complex biological systems are not dyadic, but usually contain
higher order interactions.
To overcome these challenges, the next generation of information methods for biological network inference moved beyond MI in two keys ways.
First, researchers have attempted to remove indirect associations by accounting for redundancy \cite{Timme2014experimentalist}, e.g. \cite{Margolin2006aracne,Meyer2007MRNET,Villaverde2014mider,soranzo2007comparing,Liang2008geneCMI, Zhao2008CMI, Zhang2015conditional}.
Second, attempts were made to  specifically identify interacting genes by quantifying the synergistic information of the pair that cannot be acquired from either gene alone \cite{Anastassiou2007synergy, Watkinson2008identification, Watkinson2009SACLR}.

Developments within information theory have emerged in the past two decades that address both concerns.
First the neuroscience (\cite{Gat1999synergy,Brenner2000synergy}) and later the more general network communities rediscovered the formalism of interaction information, whose first known use was by McGill in \cite{McGill1954multivariate} (see also \cite{Ting1962amount,Bell2003co}).
This quantity, initially termed synergy in the networks literature \cite{Gat1999synergy,Brenner2000synergy,Anastassiou2007synergy}, is a signed extension of MI, distinguishing synergistic and redundant information as opposites.
A positive interaction information indicates synergy, while a negative value indicates redundancy.
However, this quantity assumes that redundancy and synergy are opposing possibilities, and provides no framework for the possibility of the simultaneous presence of redundant and synergistic information between predictor variables \cite{Timme2014experimentalist}.
This would be a major concern in any highly interdependent network, such as a GRN or the human brain, where we might expect a large amount of synergy and redundancy to each occur in distinct contexts for the same pair of nodes.

This limitation motivated the development of the Partial Information Decomposition (PID) framework in the seminal paper by Williams and Beer \cite{Williams2010}.
Disentangling the synergy and redundancy of interaction information into distinct atoms \cite{Williams2010}, PID is also able to highlight the presence of unique information atoms, which would normally be cancelled out in the computation of interaction information.
For the purposes of network inference, the PID framework offers the possibility of controlling the redundancy in interdependent biological network data without neglecting the useful information remaining, be it unique to a single predictor or synergistic among multiple.
The PID framework was met with much enthusiasm within a community of researchers meeting at the intersection of information theory, networks, complex systems, and neuroscience.
However, there is still dispute and no clear consensus over the correct definition of `redundancy', from which all other PID atoms follow.
Multiple alternatives to the original redundancy measure $\Imin$ of \cite{Williams2010} have been proposed (e.g. \cite{Harder2013bivariate, Bertschinger2014,Ince2017,Finn2018main}, and recently \cite{Makkeh2021differentiable,Schick2021partial}, see also \cite{EntropySIOverview2018} for others).
It seems likely that there may not be any definition of redundancy that will be universally appropriate to all or even most applications.

In this work, we extend the $\Imin$ and $\Ipm$ PIDs to nonlinear interactions of continuous variables, and explore their capacity for the nomination of such interactions in synergy networks inspired by network biology.
Crucially, we find that for bivariate interactions, these two PIDs differ in the apportionment of the information discrepancy between the predictors.
For a noise-free system, we make this precise, and provide expressions for the bivariate $\Imin$ and $\Ipm$ PIDs in terms of ratios of the partial derivatives of the interaction kernel.
For our network simulations, this differential handling of information discrepancy results in the the $\Ipm$ PID suffering from the same non-specificity as mutual information, as it is unable to distinguish variable interactions from univariate contributions.

\section{Previous Work} 
\label{section:previousWork}

\subsection{Information Theory and Network Inference}
\label{subsection:PreviousWork.InfoNetworks}

Multiple information theoretic approaches to network inference have been developed over the past two decades. 
A recent overview of information-theoretic methods in computational biology can be found in \cite{Chanda2020information}.
This work investigates the network-inference potential of the Partial Information Decomposition (PID) framework, a branch within the broader information and complexity literature.
Our review of redundancy and synergy, and their connection to PID, owes much to that in \cite{Timme2014experimentalist}.

Mutual information (MI) has been commonly used to build associational networks, as it offers a non-linear, non-parametric alternative to correlation and other traditional statistical options.
One of the early applications of MI to gene network inference was in the development of relevance networks from gene expression data by Butte and Kohane in \cite{Butte1999mutual}.
Originally, they used pairwise Pearson correlation to infer edges \cite{Butte1999unsupervised}, before employing MI in \cite{Butte1999mutual}.
In choosing MI, they posited, among other advantages, that MI would perform better than Pearson correlation at detecting irregularly distributed variables and complex (non-linear) interactions \cite{Butte2003chapter}.
However, pairwise MI suffers from many of the same issues for network inference as pairwise correlation.
Neither can distinguish direct from indirect interactions, which becomes especially problematic in the context of dense gene networks \cite{Margolin2006aracne,soranzo2007comparing,Chanda2020information}.
In particular, when considering cancer-specific gene networks, there is good reason to suspect more cross-talk, i.e. more active network edges, and thus more indirect interaction than in normal biomolecular networks\footnote{See, for instance, the network entropy literature \cite{West2012differential,Teschendorff2014signalling,Teschendorff2015scalefree,teschendorff2017singleCell}, which suggests more active edges in the PPI network for cancerous samples, as compared to healthy controls. }.
When networks are inferred via isolated pairwise comparisons, Type II errors become a structural concern, rather than a statistical problem to be overcome with more data.

Thus, the next generation of information-theoretic inference methods aimed at controlling FDR due to indirect associations.
A head-to-head comparison of a few of these, compared to traditional MI relevance networks, can be found in \cite{EmmertStreib2012}.
Some of the more well-known methods from this next generation include ARACNE \cite{Basso2005reverse,Margolin2006aracne}, MRNET \cite{Meyer2007MRNET}, and MIDER \cite{Villaverde2014mider}.
Similar to the entropy reduction criterion in MIDER, other methods employed conditional mutual information (CMI) to control indirect associations \cite{soranzo2007comparing, Liang2008geneCMI, Zhao2008CMI, Zhang2015conditional, Liu2016inference}.
What unifies all of these works is their concern with the information contained in indirect associations, which accounts for the high FDR of isolated pairwise comparisons.
All aim, directly or indirectly, to exclude gene pairs with high MI but low CMI when conditioned on another gene or biological response.
Thematically, what ties together this literature is a methodological wariness of \textbf{redundancy}.
The highly complex dependency structure within GRNs leads to an information landscape characterized by \textbf{redundant information}, or the sharing of equivalent uncertainty among more than two variables.

Separate from this work limiting redundancy, a parallel effort aimed to capture and employ the equally elusive quality of synergy.
Whereas redundancy characterizes indirect associations that ought to be discarded, synergy metrics have been proposed in order to locate multivariate associations that are not reducible to pairwise interactions.
Synergy came into the networks literature from neuroscience, originally from the work of Brennan \etal \cite{Gat1999synergy,Brenner2000synergy}.
Within a signal, neural spikes convey less information in isolation than when they are taken together in their succession.
This formula for synergy was previously defined in the information literature as interaction information \cite{McGill1954multivariate}.
A positive value is said to indicate synergy, while a negative value indicates redundancy.
Anastassiou applied this definition of synergy to gene expression data in \cite{Anastassiou2007synergy}.
To our knowledge, Watksinon and colleagues first utilized this synergy to infer gene networks in \cite{Watkinson2008identification} and \cite{Watkinson2009SACLR}, also in the former's dissertation \cite{Watkinson2011PhD}.

Interaction information, equivalent to Brenner's synergy for three variables, was first introduced by McGill in  \cite{McGill1954multivariate}, and is formalized in contemporary notation\footnote{McGill's original formulation uses the terminology of transmission information in place of mutual information, and was defined on empirical distributions. His definition of interaction information for three variables can be found in Eq.~13-14 on p. 101 of \cite{McGill1954multivariate}. Interaction information is defined for more than three variables, however. Furthermore, up to a difference of sign convention, it is equivalent to the quantity developed by Hu Kuo Ting in \cite{Ting1962amount} and Anthony Bell in \cite{Bell2003co}. Ting defines this alternative quantity as an additive set function, said to quantify multivariate amounts of information, but does not give it a particular name \cite{Ting1962amount}. Bell designates the quantity as co-information. In this work, we discuss only interaction information in the three variable case, taking the sign convention used in \cite{McGill1954multivariate}. This is the setting in which the PID framework is situated \cite{Williams2010}.
}
as an extension of mutual information:
\begin{equation}
    I(T;X;Y) = I(T;X | Y) - I(T;X).
\end{equation}
where $T$ is the received signal and $X$ and $Y$ are the transmitted signals.  Note that the quantity is symmetric in the three arguments, and thus the designation of target signal is arbitrary.
In this definition, the three variables are said to have a positive interaction when knowledge of one variable increases the information flow between the others, i.e. when $I(T;X |Y) > I(T:X)$.
As discussed above, this phenomenon has come to be understood as signifying synergy \cite{Timme2014experimentalist, Williams2010}.
By contrast, a negative interaction occurs when knowledge of one variables reduces the information shared between the other two, i.e. $I(T;X|Y) \leq I(T;X)$.
This is understood as signifying redundancy \cite{Timme2014experimentalist,Williams2010}.

Identifying biological components that act synergistically is a fundamental problem in systems biology.
In molecular biology, to take one example, one is often interested in identifying the synergistic interactions of genes upon a biological response. 
For instance, it is a commonplace that transcription factors behave synergistically as genetic regulators \cite{Wang2007synergistic, Sauer1995synTFs}.
It is not difficult to see the shortcoming of the interaction information approach, which classifies triplets as \textbf{net synergistic} or \textbf{net redundant}.
In truth, synergy and redundancy are not incompatible opposites.
In reality, we expect the two to often co-occur within the same variables.
When there is significant cross-talk and dependency between $X$ and $Y$, then we expect some quantity of redundant information about any significant causal relationship $X \to T$ or $Y \to T$.
On the other hand, we expect, for many collections of components of common interest to the same biological function, some element of synergy to their contribution.
GRNs, for instance, are generally characterized by a good deal of pathway redundancy and robustness, and this becomes especially pronounced in cancer cells; see, for instance, literature on network entropy \cite{Demetrius2005robustness,Manke2006entropic} and on signaling entropy \cite{Teschendorff2014signalling,Teschendorff2015scalefree,teschendorff2017singleCell}. 
Thus, between alternative feedback mechanisms and topological cyclicity involving genes outside $\{ X, Y, T\}$, we do not generally expect a situation where $X \to Y \to T$ is a Markov chain.
To the contrary, the knowledge of both $X$ and $Y$ together will often better inform us regarding activated processes than either alone, providing synergistic information about $T$.
Synergy should estimate that information as an explicit quantity beside, and not in opposition to, redundancy induced by cross-talk between the genes.
The PID framework attempts to disentangle these two quantities of synergy and redundancy \cite{Williams2010, Timme2014experimentalist}, thus offering an attractive framework for the inference of response-specific synergy networks.
We will now review the development of PID and its current state in the literature.

\subsection{Partial Information Decomposition (PID)}
\label{subsection:PreviousWork.PID}

Williams and Beer introduced the PID paradigm in \cite{Williams2010}.
Their work was motivated by interaction information, and aimed to address the conflation of synergy and redundancy within $I(T;X;Y)$.
In that vein, they sought to decompose the information provided about $T$ by predictors $X$ and $Y$ into four atoms of information:
redundancy, synergy, and unique information for each predictor.
The atoms would fulfill the following equations:
\begin{align}
\label{eq:PID_1}
\tag{E1}
I(T;X,Y) &= \underbrace{R(T;X,Y)}_{\text{Redundant Info.}} + \underbrace{S(T;X,Y)}_{\text{Synergistic Info.}} \\
\nonumber & + \underbrace{U_X(T;X,Y)  + U_Y(T;X,Y) }_{\text{Unique Infos.}} \\
\label{eq:PID_2}
\tag{E2}
I(T;X) &= R(T;X,Y) + U_X(T;X,Y)\\
\label{eq:PID_3}
\tag{E3}
I(T;Y) &= R(T;X,Y) + U_Y(T;X,Y)
\end{align}
This particular decomposition is generally referred to as `bivariate', since it includes only two predictor variables \cite{Bertschinger2013shared}.
The framework in \cite{Williams2010} is defined for an arbitrarily large (though finite) number of predictors.

In general, a PID can be uniquely defined by its redundancy function $\Ired$.
An $\Ired$ is an order-preserving function on a lattice of ascending collections of \textbf{sources}, the designation for subsets of predictor variables.
Williams and Beer cited, among others, the information lattice structure from \cite{Bell2003co}, in developing their own lattice formalism in \cite{Williams2010}.
We will not be utilizing this lattice structure in our current work except in passing, as we are restricting ourselves to the bivariate case, with the four PID elements $R, U_X, U_Y, S$ corresponding to the lattice elements $\{X\}\{Y\}, \{X\}, \{Y\},$ and $\{X, Y\}$ respectively \cite{Williams2010}.

If one of these four atoms is defined, then the other three will follow from Eqs~\ref{eq:PID_1}-\ref{eq:PID_3}.
Williams and Beer introduced one measure of redundancy, the $\Imin$ function, often denoted as $I_{\text{min}}$ in the literature,
which we present in Def.~\ref{defn:redundant-info.Imin}.
This redundancy function, in turn, defines all four of the atoms $\Rmin, \Smin, \Umin{X},$ and $\Umin{Y}$.
With their definition, Williams and Beer achieved the aim of decomposing interaction information into synergy and redundancy:
\begin{equation}
    I(T;X;Y) = \Smin(T;X,Y) - \Rmin(T;X,Y)
\end{equation}
Their work opened a novel approach to multivariate information.
It offered a welcome break from the previous perspective on synergy and redundancy, which held them to be mutually exclusive quantities as captured by the sign of interaction information \cite[and above]{Timme2014experimentalist, EntropySIOverview2018}.
This is especially attractive to our interest in biological network analysis, where we expect both redundancy and synergy in any pair of interacting genes.

Moreover, the $\Imin$ redundancy function demonstrated many desirable properties.
These include:
\begin{enumerate}
    \item[0.] \textbf{Nonnegativity.}
    Redundant information is a nonnegative quantity.
    \begin{equation}
        \label{eq:WBaxiom.P}
        \tag{P}
        \Ired(T;\XX_1,...,\XX_n) \geq 0
    \end{equation}
    \item \textbf{Symmetry.} Redundant information is invariant under permutations of sources:
    \begin{equation}
        \label{eq:WBaxiom.Sym}
        \tag{S}
        \Ired(T;\XX_1, ..., \XX_n) = \Ired(T;\XX_{\sigma(1)}, ..., \XX_{\sigma(n)})
    \end{equation}
    \item \textbf{Monotonicity.} Every additional source included can only decrease the amount of redundant information. 
        \begin{equation}
        \label{eq:WBaxiom.M}
        \tag{M}
        \Ired(T; \XX_1, ..., \XX_n) \leq \Ired(T; \XX_1, ..., \XX_{n-1})
        \end{equation}
    \item \textbf{Self-Redundancy.} The redundant information of a single source is the mutual information between that source and the target.
    \begin{equation}
    \tag{SR}
        \label{eq:WBaxiom.SR}
        \Ired(T;\XX) = I(T;\XX)
    \end{equation}
\end{enumerate}
(\ref{eq:WBaxiom.Sym}),(\ref{eq:WBaxiom.M}), and (\ref{eq:WBaxiom.SR}) would later become identified as the Williams-Beer axioms \cite{Bertschinger2013shared,EntropySIOverview2018}.
Every subsequent PID redundancy would satisfy (\ref{eq:WBaxiom.Sym}) and (\ref{eq:WBaxiom.SR}), at a minimum.

However, the $\Imin$ redundancy function was not widely accepted as quantifying redundant information \cite{EntropySIOverview2018}, and multiple alternatives have been proposed \cite{Harder2013bivariate,Bertschinger2014,Ince2017,Finn2018main}.
The most common complaint is that the $\Imin$ function does not distinguish the same information from the same \textit{amount} of information.
Specifically, regarding each target outcome $T=t$,  $\Imin$ compares the \textit{expected} amount of information that each predictor provides about that target event, regardless of the degree of overlap
in elementary events.
Harder et al.\@ \cite{Harder2013bivariate} constructed a particularly striking example of the drawback to this approach  (see also \cite{Griffith2014quantifying}), later dubbed the `two-bit copy' problem \cite{Ince2017,EntropySIOverview2018}.
When $T = (X,Y)$ is a two-bit copy of i.i.d. one-bit binary variables $X$ and $Y$, 
the $\Imin$ function assigns one bit each of redundant and synergistic information to the decomposition of $I(T;X,Y)$.
For any $t=(x,y) \in \{0,1\}^2$, we have that $X$ and $Y$ provide the same amount of information about that outcome, on average.
All the information contained within $I(T;X)$ (and $I(T;Y)$) is assigned to redundancy, and the synergy atom follows, though it is perhaps more intuitive for a PID to designate $R=S=0$ and $U_X = U_Y = 1$.
Beyond this example, a more thorough review of the objections and alternatives to the $\Imin$ PID can be found in the review \cite{EntropySIOverview2018}, the overview of a 2018 special issue of \textit{Entropy} on the PID framework.

Despite the rich development of the PID framework, there are two areas of research that have remained less explored until relatively recently.
First, the literature on the application of PID to non-neural biological network inference has been relatively sparse.
The most visible effort has been in the development by Chan \etal of the PID and context (PIDC) methodology for gene regulatory network (GRN) inference \cite{Chan2017Cell,Chan2018PhD}.
More recently, Cang and Qin applied a similar approach to spatial single-cell sequencing data \cite{Cang2020inferring}.

Second, the PID literature has largely been confined to the study of discrete or discretized variables \cite{EntropySIOverview2018}.
Barrett contributed an early foray into the PID analysis of continuous variables in \cite{Barrett2015}, in which he considered the extension of the PID framework to jointly Gaussian variables, both static and dynamic.
In that context, Barrett demonstrated that --- for a large class of potential PIDs, including those in \cite{Williams2010}, \cite{Griffith2014quantifying}, and \cite{Bertschinger2014}, but excluding \cite{Finn2018main} and \cite{Ince2017} --- the PID of a trivariate jointly Gaussian system must necessarily take the same form, which they term the minimal mutual information (MMI) PID.
Our computation of the $\Imin$ PID for a linear interaction can be viewed as a natural limit of the MMI PID in \cite{Barrett2015}, though
we present a stand-alone proof that aligns with the narrative focus upon the more general interaction kernels $g(X,Y)$ that we will be exploring in this effort. 

Besides the MMI PID for Gaussians, there are other, much more recent works relevant to extending PID to continuous variables.
Ari Pakman et al. has extended the $\Ibroja$ PID (\cite{Bertschinger2014}) to continuous variables \cite{Pakman2021estimating}.
Kyle Schick-Poland et al. \cite{Schick2021partial} developed a signed continuous PID in \cite{Schick2021partial}, based upon the pointwise differentiable $\isx$ PID developed by Makkeh et al. in \cite{Makkeh2021differentiable}.
This PID shares much in common with the $\Ipm$ PID, including a pointwise approach to decomposing MI as well as the separation of the signed PID lattice into positive and negative sublattices, each nonnegative for discrete variables.

The contribution of our work, then, is two-fold.
First, we explore the potential of PID as a tool for the inference of synergy networks, particularly the PM PID of \cite{Finn2018main} (see Definition \ref{defn:redundant-info.Ipm} for the relevant definition of PID in this setting), as well as the original $\Imin$ PID \cite{Williams2010} (Def.~\ref{defn:redundant-info.Imin}).
The $\Ipm$ PID is unusual in that it allows for signed information atoms, as do the $\Iccs$ and $\Isx$ PIDs.
We explore the implications of this in both simulation and analysis, and find that it allows for counter-intuitive behavior that undermines the specificity of the synergy defined by the $\Ipm$ PID.
We see this in simulation, and also analytically for a degenerate system of random variables $(X,Y,T)$, in which $T$ is a deterministic response --- that is, $T=g(X,Y)$ --- of the jointly Gaussian predictors $X$ and $Y$.
Our second contribution is to be the first work to extend both the $\Imin$ and $\Ipm$ PIDs to such a system, and to the continuous setting writ large.
We highlight in Theorem~\ref{theorem:GenericKernel.UniqueInfo} of Section~\ref{section:GenericResults}, that we are able to provide an explicit integral formula for the unique information atoms $U_X$ and $U_Y$ in such systems for a general class of kernels $g$.
This formula depends upon the relative magnitude of the partial derivatives of $g$, connecting the analytic properties of $g$ to the probabilistic properties of $(X,Y,T)$.
We note that our work parallels (and importantly diverges from) the recent work done by Pakman \etal in \cite{Pakman2021estimating} for the $\Ibroja$ PID \cite{Bertschinger2014}, in that both our work and theirs is concerned with computing the unique information atom for continuous variables.
There is strong resemblance between some of the noise-free interaction that we are considering (Eq.~\ref{eq:sigSw}) and the example neural models (III.1) and (III.2) in \cite{Pakman2021estimating}.
Besides considering different PIDs ($\Imin$ and $\Ipm$ rather than $\Ibroja$), our approach, focus, and conclusions are also distinct.

\section{Preliminaries}
\label{section:preliminaries}

In this section are provided the prerequisites from probability, information theory, and partial information.
We will also take this opportunity to clarify all the notation to be used in this work.
We also introduce a shorthand for non-discrete variables (marginally continuous versus jointly continuous in Def.~\ref{defn:randomSources}) in order to positively differentiate non-degenerate continuous sources from degenerate sources.

\subsection{Random Variables and Random Sources}
\label{subsection:preliminaries.RandomSources}

Let $(\Omega, \mathcal{B}, \mu)$ be our assumed probability space. A random variable $X$ is a measurable mapping $\Omega \to \mathcal{A}_X$, where $\mathcal{A}_X$ is the alphabet of $X$.
In this paper, all alphabets are subsets of $\mathbb{R}^k$ for some $k$.
We say that $X$ is a discrete variable if $\mathcal{A}_X$ is countable.
We say that $X$ is continuous if it admits a density $p_X$ on $\mathbb{R}$.
More precisely, we say that $X$ is continuous if it induces a measure $\mu_X$ on $\mathbb{R}$ that is absolutely continuous with respect to Lebesgue measure $\lambda$ (denoted $\mu_X \ll \lambda$ ), in which case the density is the Radon-Nikodym derivative $p_X = \frac{d \mu_X}{d \lambda}$.

Following \cite{Williams2010}, we define a \textbf{random source} to be a finite collection of random variables.
One can think of a random source $\XX = \{X_1, ..., X_N \}$, with associated alphabets $\mathcal{A}_{X_1}, ..., \mathcal{A}_{X_N}$ as a random variable $\XX$ on the product alphabet $\mathcal{A}_{\XX} = \otimes_i \mathcal{A}_{X_i}$. Nonetheless, we distinguish the two concepts for clarity, and reserve `random variable' to refer to real-valued variables.
Moreover, we may elect to treat a random source $\XX = \{ X_i \}_i$ as a random vector $\XX = (X_i)_i$ when convenient, e.g. when we want to admit a density $p_{\XX}$ on $\mathbb{R}^k$
\footnote{We implicitly assume throughout that every function of random sources that makes use of their joint densities is invariant under permutation of labels. For instance, when we define entropy $H(\{X_1, ..., X_k\})$, we use the density $p_{X_1, ..., X_k}(x_1, ..., x_k)$, but the entropy would be the same if we used $p_{X_{\sigma(1)}, ..., X_{\sigma(k)}}(x_{\sigma(1)}, ..., x_{\sigma(k)})$ instead. }.
We similarly use $(\mathcal{A}_{\XX}, \mu_{\XX})$ to denote the induced probability space in $\mathbb{R}^k$.
We will denote the image of an event $E \subset \Omega$ in $\mathcal{A}_{\XX}$ as $\pi_{\XX}(E)$, where $\pi_{\XX}$ is here synonymous with the random vector $\XX$ as a mapping $\Omega \to \mathcal{A}_X$.
Likewise, for $U \subset \mathcal{A}_{\XX}$ we may then use $\pi_{\XX}^{-1} U$ to represent the pullback of the event $\{ \XX \in U \}$.

\begin{defn}[Random Sources]
\label{defn:randomSources}
A finite collection of random variables $\XX = \{X_1, ..., X_k\}$ on the space $(\Omega, \mu)$ is called a \textbf{random source}.
\begin{itemize}
    \item If every variable $X_i$ is discrete, we say that $\XX$ is a discrete source.
    \item If every variable $X_i$ is continuous, we say that $\XX$ is \textbf{marginally continuous}.
    \item We say that $\XX$ is \textbf{(jointly) continuous} if the joint distribution of $X_1, ..., X_k$ admits a joint density $p_{\XX}$. Equivalently, $\XX$ is jointly continuous if it induces a measure $\mu_{\XX}$ on $\mathbb{R}^k$ such that $\mu_{\XX} \ll \lambda^k$, in which case $p_{\XX} = \frac{d \mu_{\XX}}{d \lambda^k}$.
    \item We say that $\XX$ is \textbf{non-degenerate} if it is discrete or continuous. It is \textbf{degenerate} otherwise.
\end{itemize}
\end{defn}

For the density of a continuous source, we may represent it as $p(\XX) := p_{\XX}(\XX)$ when the subscript is obvious from context.
We may still allow singular sets of measure zero that are possible, in the sense that arbitrary neighborhoods have positive probability
\footnote{For instance, in when considering a noise-free interaction $T=g(X,Y)$ for kernel (\ref{eq:sigSw}), we will have that the random source $\{ Y, T \}$ admits a density $p_{Y,T}$ almost everywhere, but not when $Y=0$, an event with zero probability.
Nonetheless, any neighborhood of $(Y,T) = (0,0)$--- i.e. any event $E=\pi_{Y,T}^{-1} B_{\epsilon}(0,0)$, where $B_\epsilon$ denotes an open $\epsilon$-disk --- will have positive probability.
Thus, we include $(0,0) \in \mathcal{A}_{X,T}$.}.

We highlight a few immediate properties.
\begin{proposition}[Remarks on Continuous Sources]
For random sources $\XX, \bm{Y}$, we have the following
\begin{enumerate}
    \item If $\XX$ is continuous, then it is marginally continuous.
    \item If $\XX$ is jointly (respectively, marginally) continuous, then every subset $\XX' \subset \XX$ is also jointly (respectively, marginally) continuous.
    \item If $\XX$ and $\bm{Y}$ are both marginally continuous, then so is $\XX \cup \bm{Y}$.
    \item If $\XX$ and $\bm{Y}$ are both continuous, it is not necessary that $\XX \cup \bm{Y}$ be continuous.
\end{enumerate}
\end{proposition}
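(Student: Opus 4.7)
The plan is to prove claims (1)--(3) using standard measure-theoretic tools (Tonelli's theorem and the Radon--Nikodym derivative), and to establish (4) via an explicit counterexample. Throughout, I will work directly from the definitions given in Def.~\ref{defn:randomSources}: joint continuity of $\XX = \{X_1, \ldots, X_k\}$ means $\mu_{\XX} \ll \lambda^k$, while marginal continuity means each coordinate measure $\mu_{X_i}$ is absolutely continuous with respect to $\lambda$.

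For (1) and (2), the core observation is that integrating out a subset of variables preserves absolute continuity with respect to Lebesgue measure. Specifically, if $\XX$ is jointly continuous with density $p_{\XX}$ and $\XX' = \{X_i : i \in I\} \subset \XX$, I would define the candidate density
\[
p_{\XX'}(\xx') := \int_{\R^{k - |I|}} p_{\XX}(\xx) \, \prod_{j \notin I} dx_j,
\]
verify via Tonelli's theorem that this is well defined and integrates to $1$, and then check for an arbitrary measurable $U \subset \R^{|I|}$ that $\mu_{\XX'}(U) = \int_U p_{\XX'} \, d\lambda^{|I|}$, so that $p_{\XX'}$ is the Radon--Nikodym derivative $d\mu_{\XX'}/d\lambda^{|I|}$. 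The case $|I| = 1$ yields claim (1); general $|I|$ yields the jointly continuous half of (2). The marginally continuous half of (2) is immediate, since every variable of $\XX'$ is already a variable of the marginally continuous source $\XX$, hence continuous. Claim (3) follows directly from unpacking definitions: each variable in $\XX \cup \YY$ belongs to either $\XX$ or $\YY$, both marginally continuous, so each such variable is itself continuous, and $\XX \cup \YY$ is marginally continuous.

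For (4), I would exhibit a simple counterexample showing that joint continuity is not preserved under union. Let $X$ be any continuous random variable on $(\Omega, \mathcal{B}, \mu)$---for instance, a standard normal---and set $Y := X$. As singleton sources, both $\{X\}$ and $\{Y\}$ are jointly continuous with density $p_X$. However, the induced measure $\mu_{\{X, Y\}}$ on $\R^2$ is concentrated on the diagonal $D = \{(x, x) : x \in \R\}$, since $\mu(\{X = Y\}) = 1$. Because $\lambda^2(D) = 0$, we have $\mu_{\{X, Y\}}(D) = 1 \neq 0$, so $\mu_{\{X, Y\}} \not\ll \lambda^2$, and $\{X, Y\}$ fails to be jointly continuous.

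The hard part is essentially expository: everything reduces to a single Tonelli computation together with one diagonal counterexample. The only step warranting genuine care is the verification in (1)--(2) that the integrated function is finite $\lambda^{|I|}$-almost everywhere and genuinely represents the Radon--Nikodym derivative (as opposed to merely being a nonnegative $L^1$ function with total mass $1$); this follows from standard Fubini--Tonelli bookkeeping applied to the nonnegative integrand $p_{\XX}$.
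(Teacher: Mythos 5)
Your proof is correct and takes essentially the same route as the paper: marginalization for (1)--(2), unpacking definitions for (3), and a degenerate, perfectly-correlated pair for (4). Indeed, the paper's counterexample is the bivariate Gaussian with $\rho = \pm 1$, and your diagonal argument ($\mu_{\{X,Y\}}(D) = 1$ while $\lambda^2(D) = 0$) makes the failure of absolute continuity more explicit than the paper's appeal to a Dirac-delta ``density''; your Tonelli/Radon--Nikodym bookkeeping for (1)--(2) is likewise more careful than the paper's one-line sketch. One caveat: since the paper defines a random source as a \emph{collection} $\{X_1,\ldots,X_k\}$ written in set notation, literally setting $Y := X$ risks collapsing $\{X\} \cup \{Y\}$ to the singleton $\{X\}$, which \emph{is} continuous, so the counterexample as stated is vulnerable to a strict set-theoretic reading. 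The fix is immediate: take $Y = -X$ (the paper's $\rho = -1$ case), or let $Y$ agree with $X$ only almost surely; your argument then goes through verbatim with the diagonal replaced by the anti-diagonal.
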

The first two properties follow from the fact that the integration of a continuous function is continuous.
The third property is trivial.
The final property is demonstrated by the following counterexample.

\begin{example}[Non-Degenerate and Degenerate Bivariate Gaussian]
\label{example:infoTheory.bivariateGaussian}
Suppose $X, Y \sim N(\bm{0}, \Sigma)$ where
$$\Sigma = \begin{bmatrix}
  \sigma^2 & \sigma^2 \rho\\
  \sigma^2 \rho & \sigma^2
\end{bmatrix}.$$
Then $\{X, Y\}$ is a continuous source if and only if $|\rho| < 1$. If we allow $\rho = \pm 1$, we have identical Gaussians with the joint distribution $p_{X,Y}(x,y) = \frac{1}{\sqrt{2 \pi}} \delta(y \mp x) e^{-x^2/2}$, which is a distribution (generalized function) and not a density.
We still have that $\{ X \}$ and $\{ Y \}$ are continuous sources, but $\{ X, Y \}$ is only marginally continuous and degenerate.
\end{example}

Finally, we will often need to compute expectations by switching between continuous sources and alphabets that are not jointly continuous (e.g. $\mathcal{A}_{X,T}$ and $\mathcal{A}_{X,Y}$ for noise-free interactions in Sec.~\ref{section:GenericResults}).
We use the following notation for almost-sure equality of random variables:
\begin{align}
f(\XX) &\oeq g(\YY) \text{ if}\\
f(\XX(\omega)) &= g(\YY(\omega)) \text{ for $\mu$-a.e. } \omega \in \Omega. \nonumber
\end{align}
This refers to the essential equality of the pullback functions $\pi^{\star}_{\XX} f$ and $\pi^{\star}_{\YY} g$ on $\Omega$.
Almost-sure equality is more than sufficient for equality in expectation:
\begin{align}
f(\XX) \oeq g(\YY) \Longrightarrow \mathbb{E} f(\XX) = \mathbb{E} g(\YY).
\end{align}

We may now define conditional sources for non-degenerate distributions, using their mass and density functions.
\begin{defn}[Conditional Random Sources]
Let $\XX, \YY$ be random sources, either both discrete or jointly continuous. Let $\yy \in \mathcal{A}_{\YY}$ such that
$p_{\YY}(\yy) > 0$.

The conditional source $\XX$ given $\YY = \yy$ is denoted $\XX_{|\YY = \yy} = \XX_{|\yy}$, and is the discrete (continuous) random vector with alphabet $\mathcal{A}_{\XX | \yy} \subset \mathcal{A}_{\XX}$, with mass function (density):
\begin{align}
    p(\xx_i | \yy_j) &:= \frac{p(\xx_i, \yy_j)}{p(\yy_j)} &\text{ (discrete sources)}\\
    \nonumber
    p_{\XX | \YY} (\xx | \yy) &:= \frac{p_{\XX,\YY} (\xx, \yy)}{p_{\YY}(\yy)}
    &\text{(continuous sources)}.
\end{align}
\end{defn}

Pointwise information of a random variable is quantified by a logarithmic transformation of the of the distribution functions, typically referred to as the surprisal.
\begin{defn}
Let $\XX$ be a discrete or continuous random source.
Then the pointwise surprisal $s_{\XX}$ is defined as the function
\begin{align}
   \label{eq:defn:surprisal}
    s_{\XX}(\xx)
    &=
    \begin{cases}
    \log \frac{1}{p(\xx_i)}  &\text{ (discrete)}\\
    \log \frac{1}{p_{\XX}(\xx)}
    &\text{ (continuous)}
\end{cases}
\end{align}
For discrete sources, $s_{\XX}$ takes $\mathcal{A}_{\XX}$ as its domain.
For continuous sources, $s_{\XX}$ is defined where $\mu_{\XX}$ admits a nonzero density $p_{\XX}$.

If $\XX$ and $\YY$ are both discrete or jointly continuous, we define the conditional surprisal of $\xx$ given $\YY = \yy$ as
\begin{align}
   \label{eq:defn:surprisal.conditional}
    s_{\XX|\YY}(\xx | \yy)
    &=
    \begin{cases}
    \log \frac{1}{p(\xx_i | \yy_j)}  &\text{ (discrete)}\\
    \log \frac{1}{p_{\XX | \YY}(\xx | \yy)}
    &\text{ (continuous)}
\end{cases}
\end{align}
\end{defn}

Surprisal measures the unlikelihood of a particular outcome.
The conditional surprisal is the difference between the joint and individual surprisals, that is,
\begin{equation}
    s_{\XX | \YY} = s(\xx, \yy) - s(\xx).
\end{equation}

\subsection{Information Theory}
\label{subsection:preliminaries.InfoTheory}

The Shannon entropy, originally defined in \cite{Shannon1948}, quantifies the uncertainty in the value of a variable $X$.
We assume the convention whereby discrete and differential entropy are distinguished as $H$ and $h$, respectively, due to the incommensurability of the two notions, as seen when sequences of continuous (discrete) variables converge to a discrete (continuous) distribution \cite[Sec. 8.3]{CoverThomas2005}.

\begin{defn}[Shannon Entropy for Random Sources]
\label{defn:entropy}
Let $\XX$ be a random source, either discrete or continuous. Then the entropy of $\XX$ is defined as:
\begin{align}
    \label{eqn:defn.entropy.discrete.source}
    H(\XX) &= \mathbb{E} s_{\XX}(\xx) \\
    \nonumber & = - \sum p(\XX_i) \log p(\XX_i) 
    &\text{ (discrete)}\\
    h(\xx) &= \mathbb{E} s_{\XX}(\xx) \\
  \nonumber & = - \int p_{\XX}(\xx) \log p_{\XX}(\xx) d\bm{x}
    &\text{ (continuous)}
\end{align}

If $\YY$ is another source, such that either both $\XX$ and $\YY$ are discrete or $\XX \cup \YY$ is continuous, then the conditional entropy of  $\XX$ given $\YY$ is
\begin{align}
    \label{eqn:defn.entropy.conditional.discrete.source}
    H(\XX | \YY) &= \mathbb{E} s_{\XX| \YY}(\xx|\yy) \\
    \nonumber &= - \sum_{i,j} p(\bm{x}_i, \bm{y}_j) \log \frac{p(\bm{x}_i, \bm{y}_j)}{p(\bm{y}_j)}
    \\
    h(\xx | \yy) &= \mathbb{E} s_{\XX| \YY}(\xx|\yy) \\
  &= \nonumber - \int p_{\bm{X},\bm{Y}}(\bm{x}, \bm{y}) \log \frac{p_{\bm{X},\bm{Y}}(\bm{x}, \bm{y})}{p_{\bm{Y}}(\bm{y})} d(\bm{x},\bm{y})
\end{align}

\end{defn}

We may now define mutual information (MI) and KL divergence for sources that are either discrete or sufficiently continuous.
If entropy quantifies the uncertainty of a source $\XX$, then we may say that MI quantifies the shared uncertainty between $\XX$ and $\YY$.
Equivalently, MI is the reduction of uncertainty possible in one source from knowledge of the other.

\begin{defn}[Mutual Information for Random Sources]
\label{defn:MI}
Let $\bm{X}$ and $\bm{Y}$ be random sources such that either both are discrete or $\bm{X} \cup \bm{Y}$ is continuous. Then the mutual information of $\bm{X}$ and $\bm{Y}$ is given by
\begin{subequations}
\begin{align}
    \label{eq:MI.defn}
    I(\XX; \YY)
    &= \mathbb{E} \log \frac{ s_{\XX}(\XX)s_{\YY}(\YY)  }{ s_{\XX, \YY}(\XX,\YY) }\\
    \label{eq:MI.entropicDifference}
    &=
    \begin{cases}
    H(\XX) - H(\XX|\YY)  &\text{ (discrete)}\\
    h(\XX) - h(\XX|\YY)
    &\text{ (continuous)}
    \end{cases}
\end{align}
\end{subequations}
If $\ZZ$ is another source and either $\XX,\YY,\ZZ$ are all discrete or $\XX \cup \YY \cup \ZZ$ is continuous, then the conditional mutual information of $\XX$ and $\YY$ given $\ZZ$ is
\begin{subequations}
\begin{align}
    I(\XX; \YY | \ZZ) 
    &= \mathbb{E} \log \frac{ s_{\XX|\ZZ}(\XX|\ZZ)s_{\YY|\ZZ}(\YY|\ZZ)  }{ s_{\XX, \YY|\ZZ}(\XX,\YY|\ZZ) }\\
     &=
    \begin{cases}
    H(\XX|\ZZ) - H(\XX|\YY,\ZZ)  &\text{ (discrete)}\\
    h(\XX|\ZZ) - h(\XX|\YY,\ZZ)
    &\text{ (continuous)}
    \end{cases}
\end{align}
\end{subequations}
\end{defn}

Unlike entropy, we do not distinguish between discrete and continuous MI in notation.
This convention is justified as both the discrete and continuous definition presented in Def.~\ref{defn:MI} are special cases of the master definition in Def.~\ref{defn:MI.general}.

Kullback-Liebler (KL) divergence is colloquially described as quantifying the entropic `distance' between two distributions.
From an early application of Shannon information theory to statistics, it was originally conceptualized as a measure of the `information for discrimination' between competing hypotheses \cite{Kullback1951}.
\begin{defn}[Kullback-Liebler Divergence]
\label{defn:KL_divergence}
Suppose $\XX$ and $\YY$ are random sources, either both discrete or individually continuous, on the same alphabet $\mathcal{A} = \mathcal{A}_{\XX} = \mathcal{A}_{\YY}$.
Then the Kullback-Liebler (KL) divergence between their distributions
is given by
\begin{align}
    D(p_{\XX} | p_{\YY})
    &= \mathbb{E} \left[ s_{\YY}(\XX) - s_{\XX}(\XX) \right] \\
    \nonumber &=
    \begin{cases}
     \sum_{\xx_i \in \mathcal{A}} p_{\XX}(\xx_i) \log \frac{p_{\XX}(\xx_i)}{p_{\YY}(\xx_i)} &\text{ (discrete)}\\
     \int p_{\XX}(\xx) \log \frac{p_{\XX}(\xx)}{p_{\YY}(\xx)} d\xx
    &\text{ (continuous)}
    \end{cases}
\end{align}
\end{defn}
KL divergence is not a true distance or even semimetric, as it is not symmetric and does not obey the triangle inequality.
Moreover, it is infinite when the first distribution is not absolutely continuous with respect to the second.


We have covered the usual definitions from information theory.
However, since we will be working with degenerate distributions in Section~\ref{section:GenericResults}, we will briefly outline the more general framework for information for variables of arbitrary measure.
In particular, for continuous random variables, we have that
\begin{equation}
    I(T;X,Y) = \infty
\end{equation}
when $T$ is determined by $X$ and $Y$.

For any alphabet $\mathcal{A}$, let $\Lambda(\mathcal{A})$ be the collection of all countable partitions of $\mathcal{A}$.
For a continuous variable $\XX$ and partition $\text{\emph{P}} \in \Lambda(\mathcal{A}_{\text{\emph{P}}})$, let  $[\XX]_{\text{\emph{P}}}$ be the discretized variable:
$$
\forall P_i \in P, \;
[\XX]_{\text{\emph{P}}}(\omega) = i 
\Leftrightarrow
\XX (\omega) \in P_i.
$$

\begin{defn}
\label{defn:MI.general}
Let $\XX$ and $\YY$ be two random sources, with alphabets $\mathcal{A}_{\XX}$ and $\mathcal{A}_{\YY}$.
The mutual information between $\XX$ and $\YY$ is given by
\begin{equation}
\label{eq:MI.general}
    I(\XX; \YY) = \sup_{\text{\emph{P}},\text{\emph{Q}}} I([\XX]_\text{\emph{P}}; [\YY]_{\text{\emph{Q}}})
\end{equation}
where the supremum is taken over all discretizations $(P,Q) \in \Lambda(\mathcal{A}_{\XX}) \times \Lambda(\mathcal{A}_{\YY})$.
\end{defn}
The supremum in Eq.~(\ref{eq:MI.general}) is monotonically increasing for successively finer partitions $\text{\emph{P}}$ and $\text{\emph{Q}}$.
Thus, this definition can be approached with standard methods from real analysis and measure theory.
For our purposes, we use this definition only to demonstrate the following proposition.
\begin{proposition}
\label{prop:infiniteMI}
Let $\XX, \YY$ be two random sources, each individually continuous, but such that $\XX \cup \YY$ is not continuous. Then $I(\XX;\YY) = \infty$.
\end{proposition}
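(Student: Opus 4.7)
The plan is to show directly that the supremum in Def.~\ref{defn:MI.general} is unbounded, by constructing, for any prescribed tolerance $\eta>0$, a product partition on $\mathcal{A}_{\XX}\times\mathcal{A}_{\YY}$ whose discretized mutual information grows like $\log(1/\eta)$. Conceptually, the hypothesis says that the joint distribution $\mu_{\XX,\YY}$ is concentrated on a Lebesgue-null set, while the product of marginals $\mu_{\XX}\otimes\mu_{\YY}$ is absolutely continuous with respect to $\lambda^{k_1+k_2}$ (since each marginal density is, by Fubini). This mismatch forces the KL divergence $D(\mu_{\XX,\YY}\,\|\,\mu_{\XX}\otimes\mu_{\YY})$, and hence $I(\XX;\YY)$, to be infinite.

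First I would invoke the Lebesgue decomposition theorem: since $\XX\cup\YY$ is not jointly continuous, there exists a Borel set $A\subset\mathbb{R}^{k_1+k_2}$ with $\lambda^{k_1+k_2}(A)=0$ and $c:=\mu_{\XX,\YY}(A)>0$. By absolute continuity of $\mu_{\XX}\otimes\mu_{\YY}$ with respect to Lebesgue, for any $\eta>0$ there is $\delta>0$ such that every Borel set of Lebesgue measure $<\delta$ has $(\mu_{\XX}\otimes\mu_{\YY})$-measure $<\eta$. Using outer regularity of $\lambda^{k_1+k_2}$, cover $A$ by an open set $U$ with $\lambda^{k_1+k_2}(U)<\delta$, write $U$ as a countable union of open rectangles $A_j\times B_j$, and extract a finite sub-union $V=\bigcup_{j=1}^{N}A_j\times B_j$ with $\mu_{\XX,\YY}(V)\geq c/2$. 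Then $\mu_{\XX,\YY}(V)\geq c/2$ while $(\mu_{\XX}\otimes\mu_{\YY})(V)<\eta$.

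Next I would build the matching product partition: let $P$ consist of the atoms of the finite algebra on $\mathbb{R}^{k_1}$ generated by $\{A_j\}_{j\leq N}$, and $Q$ the analogous finite partition on $\mathbb{R}^{k_2}$ generated by $\{B_j\}_{j\leq N}$. By construction, $V$ is an exact finite union of cells $P_i\times Q_l$; let $S$ index those cells. Writing $p_{il}=\mu_{\XX,\YY}(P_i\times Q_l)$, $p_i=\mu_{\XX}(P_i)$, $q_l=\mu_{\YY}(Q_l)$, the log-sum inequality applied separately to $S$ and $S^c$ yields
\[
I([\XX]_P;[\YY]_Q)\;\geq\;\frac{c}{2}\log\frac{c/2}{\eta}\;+\;\Bigl(1-\tfrac{c}{2}\Bigr)\log\Bigl(1-\tfrac{c}{2}\Bigr),
\]
with the second term bounded below by $-1/e$. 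Letting $\eta\to 0$ drives the right-hand side to $+\infty$, so the supremum in Def.~\ref{defn:MI.general} is infinite, establishing the proposition.

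The hard part is the measure-theoretic bookkeeping in the middle step: ensuring (i) that the finite rectangular approximation $V$ of $U$ still captures a uniform fraction of the singular mass $\mu_{\XX,\YY}(A)$, and (ii) that $P$ and $Q$ resolve $V$ as an exact union of product cells so the log-sum inequality can be applied cleanly to both $S$ and its complement. The divergence itself is robust once these are in place: the ratio $\mu_{\XX,\YY}(V)/(\mu_{\XX}\otimes\mu_{\YY})(V)$ can be made arbitrarily large because the numerator is bounded below by $c/2$ while the denominator shrinks with the Lebesgue content of $V$.
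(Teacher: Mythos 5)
Your proof is correct, and it is worth noting that it is more self-contained than the paper's own argument. The paper's proof establishes $\mu_{\XX,\YY} \not\ll \mu_{\XX}\times\mu_{\YY}$ (by the same observation you make, that the product of continuous marginals is absolutely continuous with respect to Lebesgue measure), extracts a single set $P$ with $\mu_{\XX,\YY}(P)>0$ and $(\mu_{\XX}\times\mu_{\YY})(P)=0$, and then asserts that the two-cell partition $\{P, P^{C}\}$ yields infinite discretized mutual information, deferring the justification to the fact that the argument is ``conceptually identical'' to Lemma 7.4 of \cite{Gray2011entropy}. Strictly speaking, $\{P,P^{C}\}$ is a partition of the joint alphabet, not a pair of partitions $(\text{\emph{P}},\text{\emph{Q}}) \in \Lambda(\mathcal{A}_{\XX})\times\Lambda(\mathcal{A}_{\YY})$ as Def.~\ref{defn:MI.general} requires, so the paper's final step genuinely leans on the cited lemma. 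Your proof supplies exactly this missing construction: you cover the singular set by an open set of small Lebesgue (hence small product-measure) content, factor it into rectangles, pass to the finite algebras generated by the rectangle sides to obtain honest product partitions that resolve a set carrying at least half the singular mass, and then apply the log-sum inequality to get the quantitative lower bound $\tfrac{c}{2}\log\tfrac{c/2}{\eta} - 1/e$, which diverges as $\eta \to 0$. The two approaches rest on the same core fact (joint mass sitting where the product of marginals has almost no mass forces the divergence to blow up), but yours is fully rigorous relative to Def.~\ref{defn:MI.general} as stated, at the cost of the measure-theoretic bookkeeping you identify; the paper's is shorter but only a sketch that outsources the key partition argument. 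Two cosmetic points: the existence of the null set $A$ with $\mu_{\XX,\YY}(A)>0$ follows directly from the definition of non-absolute-continuity (no need to invoke the Lebesgue decomposition theorem), and in your displayed bound the coefficient of the second term should be $1-\mu_{\XX,\YY}(V)$ rather than $1-\tfrac{c}{2}$, though the uniform lower bound $-1/e$ that you actually use is valid either way.
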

The proof is conceptually identical to that of Lemma 7.4 in \cite{Gray2011entropy}.
\begin{proof}
Let $n = |\XX|$ (the dimension of $\XX$), and $m = |\YY|$. Since $\XX$ and $\YY$ are each continuous, their induced measures $\mu_\XX$ and $\mu_\YY$ are absolutely continuous with respect to Lebesgue measure $\lambda^n$ and $\lambda^m$. Thus, $\mu_\XX \times \mu_\YY \ll \lambda^{n+m}$. If $\mu_{\XX, \YY} \ll \mu_\XX \times \mu_\YY$, then we would have $\mu_{\XX,\YY} \ll \lambda^{n+m}$, which would imply $\XX \cup \YY$ continuous. Thus, $\mu_{\XX, \YY} \not\ll \mu_\XX \times \mu_\YY$.

This implies there is a set $P \subset \mathbb{R}^{n+m}$ such that
$$
\mu_{\XX, \YY}(P) > 0 \text{ and } (\mu_\XX \times \mu_\YY(P)) = 0 .
$$ For the partition $\text{\emph{P}} = \{P_0, P_0^C \}$, we have that $I(\XX_{\text{\emph{P}}};\YY_{\text{\emph{P}}}) = \infty$, and our result follows from Def.~\ref{defn:MI.general}.
\end{proof}

\begin{corollary}
\label{corollary:infiniteMI.sumOfDiracs}
Let $\XX, \YY$ be two random sources with joint distribution
\begin{equation}
    p_{\XX,\YY}(\xx, \yy) = \sum_{i=1}^M \delta(g_i(\xx, \yy)) p_{\XX}(\xx)
\end{equation}
where $\{g_i\}$ is a collection of real-valued functions with isolated zeros. Then
\begin{equation}
    I(\XX, \YY) = \infty.
\end{equation}
\end{corollary}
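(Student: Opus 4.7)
The plan is to reduce the corollary to Proposition~\ref{prop:infiniteMI} by verifying its three structural hypotheses: $\XX$ is (jointly) continuous, $\YY$ is continuous, but $\XX \cup \YY$ is not. Once all three are established, the infinitude of $I(\XX;\YY)$ follows immediately.

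First I would dispatch the easy piece. The form of the joint distribution makes it transparent that $\XX$ has marginal density $p_{\XX}$: integrating out $\yy$, the implicit normalization $\sum_i \int \delta(g_i(\xx,\yy))\, d\yy = 1$ (required for $p_{\XX,\YY}$ to integrate to one) ensures that $\mu_{\XX} \ll \lambda^{|\XX|}$ in the sense of Def.~\ref{defn:randomSources}. So $\XX$ is continuous by inspection.

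Next, I would show that $\XX \cup \YY$ fails to be jointly continuous, which is the conceptual heart of the result. The support of $\mu_{\XX,\YY}$ lies in $\bigcup_{i=1}^M g_i^{-1}(0)$, where each $g_i^{-1}(0)$ is, by the isolated-zero/regularity hypothesis on $g_i$, a set of codimension at least one and hence of vanishing Lebesgue measure in $\R^{|\XX|+|\YY|}$. Since $\mu_{\XX,\YY}$ assigns total mass one to a Lebesgue-null set, it cannot be absolutely continuous with respect to $\lambda^{|\XX|+|\YY|}$, so $\XX \cup \YY$ is not (jointly) continuous.

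The harder step, and the main obstacle, is showing that $\YY$ alone is continuous, since the Dirac-delta structure of $p_{\XX,\YY}$ obscures whether the marginal $\mu_{\YY}$ is absolutely continuous with respect to $\lambda^{|\YY|}$. I would integrate out $\xx$ using the standard sifting identity for $\delta \circ g_i$, which under the isolated-zero assumption (interpreted as $\nabla_{\xx} g_i \neq 0$ on $g_i^{-1}(0)$, so that the zero set is a smooth hypersurface) yields
\begin{equation}
p_{\YY}(\yy) \;=\; \sum_{i=1}^M \int_{\{\xx : g_i(\xx,\yy)=0\}} \frac{p_{\XX}(\xx)}{\lvert \nabla_{\xx} g_i(\xx,\yy) \rvert}\, d\sigma(\xx),
\end{equation}
a genuine (almost-everywhere defined, nonnegative, integrable) density on $\R^{|\YY|}$. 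Care will be needed at the exceptional set where the sifting formula breaks down, but since that set is itself Lebesgue-null in $\yy$, it does not disturb absolute continuity of $\mu_{\YY}$. With all three conditions verified, Proposition~\ref{prop:infiniteMI} delivers $I(\XX;\YY) = \infty$.
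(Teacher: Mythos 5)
Your proof is correct and takes exactly the route the paper intends: Corollary~\ref{corollary:infiniteMI.sumOfDiracs} is stated in the paper with no proof of its own, as an immediate consequence of Proposition~\ref{prop:infiniteMI}, and your argument supplies precisely the missing verification of that proposition's three hypotheses ($\XX$ continuous, $\YY$ continuous, $\XX \cup \YY$ not jointly continuous). One caution: your closing claim that a Lebesgue-null exceptional set of $\yy$'s ``does not disturb absolute continuity'' is not valid in general---a null set of bad $\yy$'s can still carry positive $\mu_{\YY}$-mass, which is exactly how continuity of $\YY$ fails for degenerate kernels such as a $g_i$ that does not depend on $\xx$---but the remark is harmless here because under your reading of ``isolated zeros'' ($\nabla_{\xx} g_i \neq 0$ on $g_i^{-1}(0)$, the reading needed for the intended application to noise-free interactions) there is no exceptional set at all.
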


In such a situation, we may say that $\XX$ provides \textbf{perfect} (or \textbf{infinite}) \textbf{information} information about $\YY$.
We will use the two words, perfect and infinite, interchangeably, depending on whether we want to emphasize either the collapse of uncertainty or the quantitative irregularity (i.e. infinite-valued information functions).

This concludes our review of information theory.
We may now introduce our extension of the $\Imin$ and $\Ipm$ PIDs to continuous interactions of random variables.

\subsection{Partial Information Decomposition}
\label{subsection:preliminaries.PID}

One of the aims of this work is to extend the partial information framework of Williams and Beer \cite{Williams2010}, and the follow-up work by Finn and Lizier in \cite{Finn2018main}, to interactions of continuous variables.
To that end, we define the $\Imin$ and $\Ipm$ PIDs for both discrete variables and variables each jointly continuous to the target.
Our definitions are straight-forward extensions of those in \cite{Williams2010} and \cite{Finn2018main}.
The main contribution of our work towards a theory of continuous PID is to formally define continuous PID's and analytically explore the implications of continuity on the (often counter-intuitive) distribution of the information atoms (which is further reinforced by simulation), as developed in Sec.~\ref{section:GenericResults}.
We will ultimately connect the the information-theoretic dependencies quantified in the PID framework to the analytic dependency of a target variable upon its predictors --- that is, to the partial derivatives within a deterministic context.

Recall from Eqs.~(\ref{eq:PID_1}-\ref{eq:PID_3}) that, given target $T$ and predictors $X$ and $Y$, the bivariate PID is the decomposition of $I(T;X,Y)$ into the quadruple
\begin{equation}
I(T;X,Y) = R + U_X + U_Y + S.
\end{equation}
We denote this tuple $\mathfrak{d}$.
Every PID is uniquely defined by its redundancy function.
For a given redundancy function $\Ired^{\alpha}$, labeled by the string symbol $\alpha$, we denote the corresponding bivariate PID
\begin{align}
    \mathfrak{d}_\alpha &= (R^{\alpha}, S^{\alpha}, U_X^{\alpha}, U_Y^{\alpha}), 
\end{align}
where
\begin{align}
    R^{\alpha} &= I_{\cap}^{\alpha}(T;X,Y),\\
    U_X^{\alpha} &= I(T;X) - I_{\cap}^{\alpha}(T;X,Y),\\
    U_Y^{\alpha} &= I(T;Y) - I_{\cap}^{\alpha}(T;X,Y),\\
    S^{\alpha} &= I(T;X,Y) - I(T;X) - I(T;Y) + I_{\cap}^{\alpha}(T;X,Y).
\end{align}
In this work, we focus upon two choices for the redundancy function: $\Imin$ and $\Ipm$.
We will also consider the $\Ibroja$ and $\Iccs$ PIDs in Sec.~\ref{section:SynergyNetworks}.

Williams and Beer defined the redundancy function $I_{\cap}^{\min}$ in \cite{Williams2010}.
We expand their definition to include sources jointly continuous with the target variable.
We will first define specific information, a predictor-specific function of the target variable used to define $\Imin$ in \cite{Williams2010}.
Specific information will serve to assign mutual information from a predictor source $\XX$ to each target outcome $\{T=t\}$.

\begin{defn}[Specific Information]
\label{defn:specificInfo}
Let $\XX$ and $\YY$ be discrete or jointly continuous. Then for any $\yy \in \mathcal{A}_{\YY}$ such that $p_{\YY}(\yy) > 0$, we define the specific information of $\XX$ about $\YY=\yy$ to be
\begin{align}
    I_{\XX}(\yy)
     &= D \left( \, p_{\XX|\YY}(\XX|\yy) \; || \; p_{\XX}(\XX) \right).
\end{align}
Moreover, we let $I_{\XX}(\yy) = 0$ whenever $p_{\YY}(y) = 0$.
\end{defn}

Considered generally, for two sources $\XX$ and $\YY$, specific information is an intermediate integral between mutual information --- defined in expectation of both sources --- and the pointwise mutual information function  $\log \frac{ p(\xx,\yy)  }{ p(\xx) p(\yy) }$ --- defined pointwise on the joint alphabet $\mathcal{A}_{\XX \times \YY}$. 
Formally, specific information is always pointwise function of one of the two sources, taken in expectation of the other.
In the context of PID, it is defined pointwise on the target variable $T$, as an expectation on each predictor source $\XX_k$.

\begin{defn}[Definition of $I_{\cap}^{\min}$]
\label{defn:redundant-info.Imin}
Let $T$ be a target variable and $\XX_1, ... \XX_m$ be a collection of sources. Suppose further either that all are discrete, or each $X_i$ is jointly continuous with $T$, i.e. $\XX_i \cup \{ T \}$ is continuous for each $i$. Then the redundant information provided by the sources $\{ \XX_i \}$ about $T$ is given by:
\begin{equation}
    I_{\cap}^{\min}(T; \XX_1, ..., \XX_m) = \mathbb{E} \min_k I_{\XX_k}(T) .
\end{equation}
\end{defn}

Intuitively, we can be think about this definition of redundant information in the following way.
At every pointwise instantiation of $T$, we consider how much uncertainty in $T$ is eliminated, locally near a specific outcome $T=t$, by each $\XX_i$ individually.
Then, when we take the minimum, we take the value corresponding to the minimal reduction in local uncertainty possible when learning one of the sources.
As discussed earlier in Sec.~\ref{subsection:PreviousWork.PID}
and in the literature \cite{Bertschinger2013shared,Finn2018main,EntropySIOverview2018},
this approach conflates variables containing the same information about a target outcome with the same \textit{amount} of information about that outcome.
This is clearly seen in the `two-bit copy' problem (see Sec.~\ref{subsection:PreviousWork.PID}).

From Def.~\ref{defn:redundant-info.Imin} and Eqs.~\ref{eq:PID_1}-\ref{eq:PID_3}, the definitions of the other atoms of the bivariate $\Imin$ PID $\dmin$ follow: $\Umin{X}, \Umin{Y},$ and $\Smin$.
The $\Imin$ PID fulfills all three Williams-Beer PID axioms discussed previously.
\begin{proposition}
\label{prop:Imin.basicProperties}
The redundancy function $\Imin$ fulfills the properties (\ref{eq:WBaxiom.P}), (\ref{eq:WBaxiom.Sym}),(\ref{eq:WBaxiom.SR}), and (\ref{eq:WBaxiom.M}) in Sec.~\ref{subsection:PreviousWork.PID}.
Moreover, in the continuous bivariate setting $\Rmin$, $\Umin{X}$, and $\Umin{Y}$ are all
nonnegative.
In the discrete bivariate setting $\Rmin$, $\Umin{X}$, $\Umin{Y}$, and $\Smin$ are all
nonnegative.
\end{proposition}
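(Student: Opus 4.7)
The plan is to dispatch the four Williams-Beer properties by direct verification from Def.~\ref{defn:redundant-info.Imin}, then to deduce nonnegativity of $\Rmin$, $\Umin{X}$, and $\Umin{Y}$ from pointwise manipulations of specific information, reserving the nonnegativity of $\Smin$ in the discrete case for a separate step that invokes the original argument of Williams and Beer.

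Property (\ref{eq:WBaxiom.P}) is immediate because each specific information $I_{\XX_k}(T)$ is a KL divergence (Def.~\ref{defn:specificInfo}) and hence pointwise nonnegative, so $\min_k I_{\XX_k}(T) \geq 0$ and the expectation inherits this. Property (\ref{eq:WBaxiom.Sym}) follows because the minimum operation is invariant under permutation of its arguments. Property (\ref{eq:WBaxiom.SR}) reduces, after substituting into Def.~\ref{defn:redundant-info.Imin}, to the standard identity $I(T;\XX) = \mathbb{E}_T D(p_{\XX|T}(\cdot|T) \,\|\, p_{\XX}(\cdot))$, which holds in both the discrete and jointly-continuous settings by direct expansion of Defs.~\ref{defn:MI} and \ref{defn:KL_divergence}. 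For (\ref{eq:WBaxiom.M}), adjoining a source can only enlarge the index set over which the minimum is taken, so $\min_{1 \leq k \leq n} I_{\XX_k}(t) \leq \min_{1 \leq k \leq n-1} I_{\XX_k}(t)$ pointwise, and integration against $p_T$ preserves the inequality.

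Nonnegativity of $\Rmin$ in the bivariate setting is a special case of (\ref{eq:WBaxiom.P}). For the unique atoms, I would apply the pointwise identity $a - \min(a,b) = (a-b)^+$ to write
\begin{equation*}
\Umin{X} = \mathbb{E}\bigl[ I_X(T) - \min(I_X(T), I_Y(T)) \bigr] = \mathbb{E}\bigl[ (I_X(T) - I_Y(T))^+ \bigr] \geq 0,
\end{equation*}
and symmetrically $\Umin{Y} \geq 0$. Because this argument uses only (\ref{eq:WBaxiom.SR}) and the definition of $\Rmin$, it applies without modification in both the discrete and continuous bivariate settings, treating any infinite specific information values as nonnegative extended reals.

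The main obstacle is the nonnegativity of $\Smin$ in the discrete case. Rearranging (\ref{eq:PID_1})--(\ref{eq:PID_3}) together with (\ref{eq:WBaxiom.SR}) gives $\Smin = I(T;Y|X) - \Umin{Y}$, so the claim reduces to the local-positivity inequality $I(T;Y|X) \geq \mathbb{E}[(I_Y(T) - I_X(T))^+]$. This does not follow from a simple pointwise bound; however, it is precisely the inequality established by Williams and Beer in \cite{Williams2010} via the explicit form of discrete specific information and the chain rule for entropies, which I would invoke directly. That argument leans on nonnegativity of discrete entropies and does not transfer to the continuous setting, which is consistent with the asymmetric scope of the proposition.
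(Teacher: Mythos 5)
Your proposal is correct and follows essentially the same route as the paper: nonnegativity of $\Rmin$ from the KL-divergence form of specific information, monotonicity from the pointwise minimum (your positive-part identity $\Umin{X}=\mathbb{E}[(I_X(T)-I_Y(T))^+]$ is just the integrated form of the paper's inequality $\Imin(T;X,Y)\leq \Imin(T;X)=I(T;X)$), and deferral of the discrete-case nonnegativity of $\Smin$ to the appendix of Williams and Beer. No gaps.
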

The proofs of these properties in the discrete case can be found in the Appendix of \cite{Williams2010}.
The only subtlety in our case in that we are working with continuous variables.
\begin{proof}
Self-redundancy (\ref{eq:WBaxiom.SR}) follows from Def.~\ref{defn:redundant-info.Imin} and Def.~\ref{defn:MI}.
For non-negativity (\ref{eq:WBaxiom.P}), we note that since $I_{\XX}(t)$ is a KL-divergence, it is nonnegative on its domain. 
Thus, $\min_k I_{\XX_k}(t)$ is likewise non-negative, and nonnegativity of $\Rmin(T; X,Y)$ follows.
Monotonicity (\ref{eq:WBaxiom.M}) follows from pointwise monotonicity:
$$
\min_{k=1,..., m}(I_{\XX_k}(t)) \leq \min_{k=1,..., m-1}(I_{\XX_k}(t)) .
$$
In the bivariate setting, the nonnegativity of $U_X^{\text{min}}$ (and analogously, that of $U_Y^{\text{min}}$) follows from the monotonicity that we have just demonstrated:
\begin{align*}
    \Rmin(T;X,Y) = \Imin(T;X,Y) \leq \Imin(T;X) = I(T;X).
\end{align*}
\end{proof}

We turn now to the other PID considered in this work: the $\Ipm$ PID from \cite{Finn2018main}.
Consider the following decompositions of pointwise MI into surprisals:
\begin{align}
\label{eq:preliminaries.PID.surprisals.PM.1}
    \log \frac{p(\xx,t)}{p(\xx)p(t)}
    &= \log \frac{1}{p(t)} - \log \frac{1}{p(t | \xx)}\\
    \label{eq:preliminaries.PID.surprisals.PM.2}
    &= \log \frac{1}{p(\xx)} - \log \frac{1}{p(\xx | t)}
\end{align}
Such decompositions are the pointwise analogue of the relationship between MI and (conditional) entropy as in (\ref{eq:MI.entropicDifference}).

Whereas the surprisal functions (i.e.~pointwise entropy) are always nonnegative for discrete sources, pointwise mutual information can be negative even for discrete sources, if $p(t|\xx) < p(t)$ (equivalently, $p(\xx|t) < p(\xx)$).
Using the decomposition in Eq.~(\ref{eq:preliminaries.PID.surprisals.PM.2}), Finn and Lizier \cite{Finn2018main} provided an alternative definition of redundant information that relies upon the pointwise axiomatic approach developed in \cite{Finn2018axioms}.
In their definition, redundancy minimizes \textit{each entropic component separately}.
This has the consequence of allowing the signed nature of pointwise mutual information to carry over to their PID atoms in expectation.
Their framework allows each information atom in any PID lattice to be decomposed into the difference of two components, which they term \textbf{specificity} and \textbf{ambiguity}.
We first define their redundancy function, $\Ipm$. \\

\begin{defn}
\label{defn:redundant-info.Ipm}
Let $T$ be a target variable and $\XX_1, ... \XX_m$ be a collection of sources, such that the conditions of Definition~\ref{defn:redundant-info.Imin} hold.
Then the redundant information provided by the sources $\{ \XX_i \}$ about $T$ is defined as the difference of the redundant \textbf{specificity} $I_{\cap}^{\PM,+}$ and the redundant \textbf{ambiguity} $I_{\cap}^{\PM,-}$:
\begin{align}
\Ipm(T; \XX_1, ..., \XX_m)
&= \Ipmp(T; \XX_1, ..., \XX_m) \\ \nonumber & - \Ipmm(T; \XX_1, ..., \XX_m), 
\intertext{where}
    \Ipmp &:=
    \mathbb{E} \min_{k=1,...,m} s_{\XX_k}(\XX_k), 
\\
\Ipmm &:=
    \mathbb{E} \min_{k=1,...,m} s_{\XX_k|T}(\XX_k|T). 
\end{align}
\end{defn}

The $\Ipm$ PID allows one to decompose every atom of information from our bivariate PID $\mathfrak{d}$ into its specificity and redundancy:
\begin{subequations}
\begin{align}
     \label{eq:redundant-info.Ipm:eqn:PID.R}
    R^{\PM} &= R^{+} - R^{-} \\
    \label{eq:redundant-info.Ipm:eqn:PID.Ux}
    U_X^{\PM} &= U^{+} - U^{-} \\
    \label{eq:redundant-info.Ipm:eqn:PID.Uy}
    U_Y^{\PM} &= U^{+} - U^{-} \\
    \label{eq:redundant-info.Ipm:eqn:PID.S}
    S^{\PM} &= S^{+} - S^{-} \\
\intertext{where} 
    \label{eq:redundant-info.Ipm:eqn:sublattice.R}
    R^{\pm}(T; X,Y) &= \Ipmpm(T; X,Y) \\
    \label{eq:redundant-info.Ipm:eqn:sublattice.Ux}
    U^{\pm}_X(T; X,Y) &= H^{\pm}(T; X) - \Ipmpm(T; X,Y)\\
    \label{eq:redundant-info.Ipm:eqn:sublattice.Uy}
    U^{\pm}_Y(T; X,Y) &= H^{\pm}(T; Y) - \Ipmpm(T; X,Y)\\
    \notag
    S^{\pm}(T; X,Y)
    &= H^{\pm}(T; X,Y) - H^{\pm}(T; X)\\
    \label{eq:redundant-info.Ipm:eqn:sublattice.S}
    &\hspace{3mm}- H^{\pm}(T; Y) +  \Ipmpm(T; X,Y)
    \intertext{where we are using $H^{\pm}$ to denote the entropic components:}
    H^{+}(T; \XX) &= H(\XX),\\
    H^{-}(T; \XX) &= H(\XX|T).
\end{align}
\end{subequations}

As we will see, by decoupling the entropic components in mutual information, this latter PID definition allows for an unusual emphasis on the negative element in mutual information and its decomposition, as it occurs in conditionally independent or near-independent predictors.


Further, although the disentanglement between redundant and synergistic information is recognized as a major contribution of the PID framework to information theory, it is well-known
that given any two prospective PIDs of the same variables, we have the following conservation between the four bivariate atoms:
\begin{align}
    \label{eq:conservationBivariatePID}
    \Delta R = \Delta S = - \Delta U_X = -\Delta U_Y.
\end{align}
In other words, any inter-PID gain in synergy is a gain in redundancy, and a loss in unique information.
Thus, for our investigation, the differences in redundant and unique information between the $\Imin$ and $\Ipm$ PIDs correspond to the difference in synergy.

\section{PID-based Inference of Synergistic Networks or Synergy Network Inference}
\label{section:SynergyNetworks}

\subsection{Interaction Network}
\label{subsection:interaction_network}

Motivating our investigations of PID is the task of inferring a synergy network from high-dimensional predictor data, coupled to a target or response variable.
We have in mind, for instance, the goal of developing tools for edge nomination in cancer biology applications.
To create clinically useful models of drug action and sensitivity, it is important to identify key molecular agents and interactions, and to distinguish them from auxiliary pathways unrelated to phenomena of interest.
The PID extension to information theory offers an agnostic, non-parametric approach to identifying the most informative, synergistic combinations of predictor variables.
Consider, for example, a drug response metric (e.g. the half inhibitory concentration $\text{IC}_{50}$) as a target, and the normalized expression levels of active genes as candidate predictors.
The PID framework then gives us a means of identifying and coupling \textit{synergistic} biomarkers while eliminating redundant pathway information.

In this section, we will be working with a simple Gaussian model of biological interaction networks.
\begin{defn}[Interaction Network]
\label{defn:gene_interaction_network}
Let $(\mathcal{V}, \mathcal{E})$ be an undirected graph of $n = |\mathcal{V}|$ \textbf{nodes}, and $\sgn: \mathcal{E} \to \{ \pm 1 \}$ an edge attribute signifying positive or negative regulation. Let $\bm{X} \sim N(\bm{0}, \Sigma)$ be an $n$-dimensional Gaussian vector,  where each $X_i$ signifies the \textbf{activity of node i}. For a given constant $\rho \in (0,1)$, $\bm{X}$ has covariance structure:
\begin{equation}
    \Sigma_{i,j} =
    \begin{cases}
    1, & i=j\\
    0, & \{i,j\} \not\in \mathcal{E}\\
    \pm \rho, & \sgn(\{i,j\}) = \pm 1
    \end{cases}
\end{equation}
Let $\mathcal{E}' \subset \{ (i,j) | \{i,j\} \in \mathcal{E} \}$ be a subset of directed edges, called \textbf{interactions}. 
The \textbf{response} $T$ is the real-valued variable of the form
\begin{equation}
\label{eq:networkMotivation.response}
T = \sum_{(i,j) \in \mathcal{E}'} g(X_i, X_j)
\end{equation}
where $g: \mathbb{R}^2 \to \mathbb{R}$ is the \textbf{interaction kernel.}
The interacting genes $\mathcal{E}'$ model the subset of the total interactions that significantly alter the response $T$. 
We refer to this full collection of objects as an \textbf{interaction network}, denoted~$\mathcal{N}$.

We may refer to a \textbf{mixed interaction network} when, for some subset of vertices $\mathcal{S} \subset \mathcal{V}$, and associated coefficient vector $\bm{\beta} \in \mathbb{R}^{|\mathcal{S}|}$, the response $T$ takes the form
\begin{equation}
\label{eq:networkMotivation.response.mixedNetwork}
T =  \sum_{(i,j) \in \mathcal{E}'} g(X_i, X_j) + \sum_{s \in \mathcal{S}} \beta_s X_s
\end{equation}
for a non-linear kernel $g$.
\end{defn}

Given an interaction network $\mathcal{N}$, our goal is to identify the pairs of interacting nodes $\mathcal{E}'$ from sample data drawn from $(\XX, T)$. 
Here, we seek a statistic that discriminates interacting pairs from non-interacting pairs, i.e., between the hypotheses $(i,j) \in \mathcal{E}'$ and $(i,j) \not\in \mathcal{E}'$. 
Given the expression of two genes $X_i$ and $X_j$ and the response $T$, the PID framework ideally allows us to decompose the mutual information between predictors and response into the four atoms of information in the bivariate PID lattice, as in (\ref{eq:PID_1}-\ref{eq:PID_3}).

To this inference task, we are interested in nominating edges via the different bivariate synergies induced by competing PIDs.
We will begin by employing these discrete functions upon discretized network simulations, to demonstrate their general behavior.
Discretizing continuous variables is typical in biological network science.
Although some of the early applications of information theoretic tools to biological network inference were justified by the natural extension to continuous variables, e.g. \cite{Butte2000RNAchemo}, it remains common practice to discretize many data types before applying information metrics\footnote{This practice should not necessarily be understood as a convenient concession. GRNs are commonly conceived as multi-valued logics, and discretization is consistent with such models.}.

\subsection{Interaction Network Simulations}
\label{subsection:network-simulations}

In order to evaluate competing PID synergies as tools for the inference of network interactions, we now present the results from simulations of interaction networks (Def.~\ref{defn:gene_interaction_network}).
For our interaction kernel $g$, we will use the following sigmoidal form:
\begin{equation}
    \label{eq:sigSw}
    g(x,y) = \frac{y}{1+e^{\alpha-x}}
\end{equation}
where $\alpha$ is a real parameter to be specified.
We will often refer to the first argument of this function as the `switch' node.
Sigmoidal functions are a typical choice, among others (including Hill functions), when modeling biomolecular activation in transcriptional networks \cite{Wang2007,Wang2014}. 

We simulate two very simple network topologies, as depicted in Fig~\ref{fig:network-topologies}.
In both, we have 50 nodes corresponding to predictors $(X_i)$, with edges signifying positive or negative correlation between them, depending on the sign attribute.
The network response $T$ is a function of a subset of these nodes, as in Eqs.~(\ref{eq:networkMotivation.response}, \ref{eq:networkMotivation.response.mixedNetwork}).
In Network A, 25 of the nodes are arranged in five 4-stars.
We take the edges on one of the stars as interactions.
That is, for one of the hubs $Y$ and its corresponding spokes $(X_i)_{i=1}^4$, we have $(X_i,Y) \in \mathcal{E}'$ and so the network has associated response:
\begin{equation}
\label{eq:networkResponse.networkA}
T = \sum_{i=1}^4 g(X_i,Y) = \sum_{i=1}^4 \frac{Y}{1+e^{\alpha-X_i}} .
\end{equation}
Network B is a mixed gene interaction network (Def.~\ref{defn:gene_interaction_network}).
In Network B, 20 of the nodes are arranged in two 10-stars, with hubs $Y_1$ and $Y_2$, and spokes $(X_i)_{i=1}^{10}$ and $(X_i)_{i=11}^{20}$, respectively.
We take $k$ of the spokes on $Y_1$ as interactions, i.e. $(X_i,Y_1) \in \mathcal{E}'$ for $1 \leq i \leq k$.
Moreover, we take the second hub $Y_2$ as a univariate signal to the response, weighted by real coefficient $\beta$.
Thus, we have the total network response:
\begin{equation}
\label{eq:networkResponse.networkB}
T = \frac{Y_2}{1+e^{\alpha-X_i}}  + \beta Y_2 .
\end{equation}
For Network B, we will often refer to the non-interacting edges $(X_i, Y_2)$ as false or fake interactions.
By adjusting $\beta$, we can ensure that a true interaction $(X_i, Y_1)$ and a false interaction $(X_j, Y_2)$ share the same mutual information with the response $T$.
Thus, these false interactions serve as an informationally-equivalent baseline.

\begin{figure}[h]
    \centering
    \includegraphics[scale=0.4, trim={1cm 3cm 1cm 2.5cm},clip]{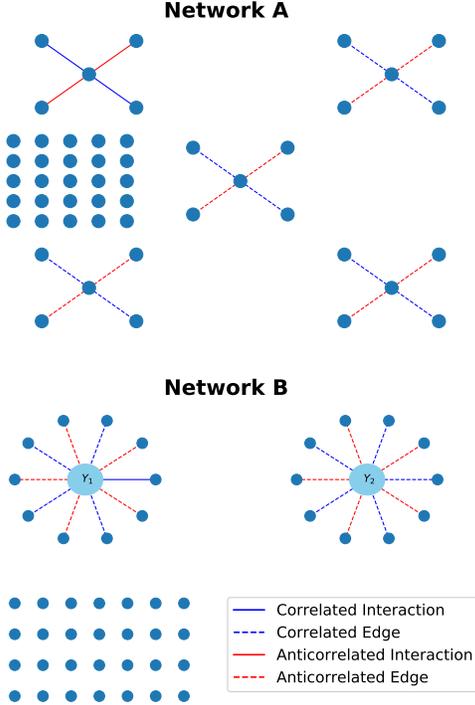}
    \caption{\textbf{Network Topologies A and B.} Represented are the topologies of Networks A and B, used in our simulations in Sec.~\ref{subsection:network-simulations}.}
    \label{fig:network-topologies}
\end{figure}

In Experiment I, we simulated 100 batches of $N=200$ realizations from Network A, setting the kernel parameter $\alpha=0$.
For each batch, we discretized each predictor $X_i$ and response $T$ independently into 3 equal-width bins.
We computed the full bivariate PID from the empirical distribution of $(X_i, X_j, T)$ for each pair of predictors $(X_i, X_j)$.
This allows us to assess the significance of each information atom for true interactions, relative to the distribution for null pairs.
We use four competing definitions of PID: the original $\Imin$ PID from \cite{Williams2010}, the $\Ipm$ PID from \cite{Finn2018main}, the $\Ibroja$ PID from \cite{Bertschinger2014}, and the $\Iccs$ PID from \cite{Ince2017}.

We present the ranked synergy scores of true interactions for each PID in Fig.~\ref{fig:experiment1}.
We see clearly that the $\Ipm$ synergy, alone, consistently ranks the synergy of interacting nodes above the 95th percentile of all synergy scores in a given batch.
Superficially, this might recommend $\Ipm$ synergy for this edge nomination task.
However, this result does not hold upon inspection.
In this experiment, MI itself $I(T;X_i, Y)$ likewise ranks true interactions above the 95th percentile.
Thus, $\Ipm$ synergy does not clearly improve upon the better-known information measure.
Pairwise MI, however, is well-known to be a problematic tool for network inference as it is unable to distinguish direct and indirect associations \cite{soranzo2007comparing,Chanda2020information}.
Thus, in Experiment II, we will instead simulate Network B in order to control for the role of mutual information.

\begin{figure}[h]
    \includegraphics[scale=0.6]{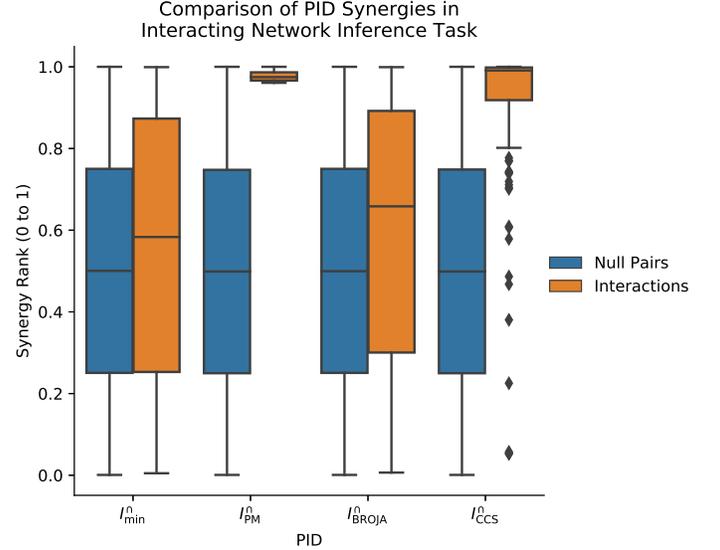}
     \caption{
     \textbf{Performance comparison of PID synergies for Experiment~I}
     We compare the performance of different PID synergies as discriminators of network interactions, as part of our first experiment with Network A in Fig~\ref{fig:network-topologies}. For each batch of 200 replicates, we computed the synergy atom of the bivariate PID (Eq.~\ref{eq:PID_1}) for each pair of genes in our network, and then converted these values into ranked scores from 0 to 1.  Shown here are the distributions of these scores, for interacting pairs and non-interacting pairs, the latter serving to approximate an empirical null distribution. As can be seen, only $\Spm$ consistently ranks true interactions in the top $5\%$ of gene pair synergies.}
    \label{fig:experiment1}
\end{figure}

As a final note on Experiment I, we comment upon the behavior of both synergy $S$ and the unique information $U_{X_i}$ PID atoms for the switch spoke $X_i$.
The $\Ipm$ PID assigns a higher absolute (non-ranked) amount of synergy to true interactions, while simultaneously assigning significant negative unique information $U_{X_i}$ to the switch node (Fig.~\ref{fig:supp.expI}).
Put differently, the inflated synergy $\Spm$ is due to the allowance of $\Upm{X}<0$ by the $\Ipm$ PID.
These two values balance each other out (\ref{eq:conservationBivariatePID}), as the four atoms are constrained by the mutual informations in (\ref{eq:PID_1}-\ref{eq:PID_3}).
The $\Imin$ and $\Ibroja$ PIDs obey the PID axiom of monotonicity (\ref{eq:WBaxiom.M}), which implies nonnegative unique information atoms in the bivariate decomposition.
The $\Ipm$ and $\Iccs$ PIDs do not obey the same restriction.

In Experiment II, we simulated Network B (Fig.~\ref{fig:network-topologies}), with a single interaction $(X_1, Y_1)$ and univariate signal $\beta Y_2$, varying $\beta$ in order to control the information provided by $Y_2$ (and thus all false interactions) toward the response $T$.
In Fig.~\ref{fig:experiment2}, we present the ranked synergy scores for the $\Imin$ and $\Ipm$ PID, as well as the ranked MI, for true and false interactions, i.e. $(X_1, Y_1)$ and $(X_j, Y_2)$, respectively.
We see that, as $\beta$ increases, both the MI and $\Ipm$ synergy of false interactions overtake true interactions.
The $\Imin$ synergy, however, consistently distinguishes true and false interactions.
The $\Imin$ synergy performs better when there is only one interaction to detect, unlike in the previous experiment.
We confirmed this with further Network B experiments in which we increased the number of interactions $k$.

\begin{figure}
    \centering
    \includegraphics[scale=0.4]{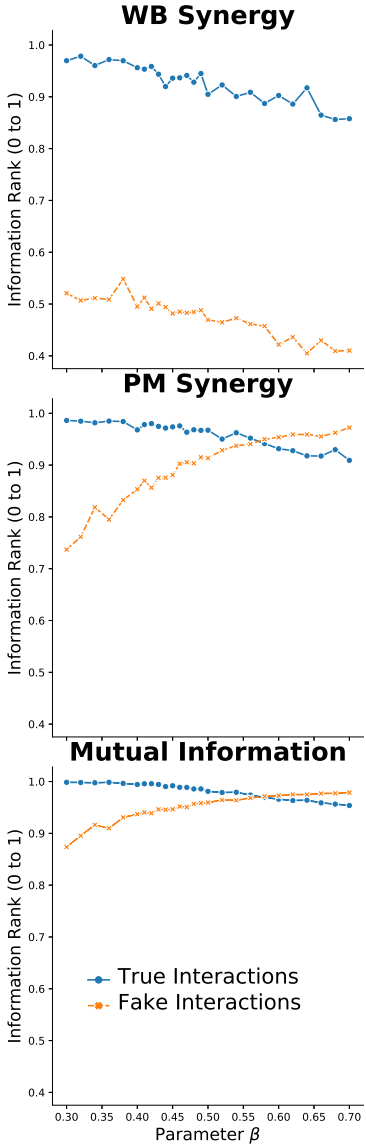}
\caption{
    \textbf{Relative synergy ($\Smin$ and $\Spm$) of true and false interactions in Exp.~II, compared to mutual information}
    From our simulations of Network B, we examine the ranked scores of synergy and mutual information as we vary the parameter $\beta$ from Eq.~(\ref{eq:networkResponse.networkB})). As $\beta$ increases, the univariate signal from $Y_2$ becomes more informative of the response than the interaction $g(X_1,Y_1)$. Hence, the MI of false pairs $I(T;X_j,Y_2)$ is greater than $I(T;X_1,Y_1)$ despite the irrelevance of $X_j$ to the response. $\Spm$ follows the same misleading trend, while $\Smin$ does not. Rather, $\Smin$ seems to firmly distinguish between true and false interactions even when the true interaction contributes less information about the response than $Y_2$.\\
    }
    \label{fig:experiment2}
\end{figure}

These results demonstrate that the $\Ipm$ synergy tracks closely to the mutual information, and has issues with specificity.
Since a major challenge in many network inference tasks lies in the difficulty of distinguishing direct and indirect associations, this situation suggests a major drawback to applying the $\Ipm$ PID to network inference.
This non-specificity can be further seen in Suppl. Fig.~\ref{fig:supp.expII}, in which all bivariate atoms of the $\Ipm$ PID seem to track with mutual information, for true and false interactions, while the $\Imin$ PID demonstrates qualitatively different behavior toward the two.
Although the $\Imin$ synergy may not have demonstrated sensitivity in our previous simulations, it specifically distinguishes bivariate interactions from univariate signals.
As we will see, this is a structural difference between the two PIDs.
As mutual information $I(T;X,Y)$ increases, we again see an increase in $\Ipm$ synergy (and redundancy) at the expense of negative unique information on the less informative predictor $X$, now for both actual interactions and univariate signals (Fig.~\ref{fig:supp.expII}).

We have seen that the $\Imin$ PID distinguishes between true and false interactions.
What has been demonstrated is the special case of a more general principle, which applies to all nonnegative PIDs (e.g. discrete $\Imin$ and $\Ibroja$, but not $\Ipm$).
Such PIDs will necessarily respect the conditional independence of predictors, which is a crucial property for the analysis of high-dimensional data.
We present this elementary result.

\begin{proposition}
\label{prop:PID.NN_implies_zero_synergy}
Let $X, Y, T$ be jointly discrete or continuous random variables.
Suppose further that $X$ is conditionally independent of $T$ given $Y$, i.e. $X \perp T | Y$.
Let $\mathfrak{d}$ be a bivariate PID of $I(T; X,Y)$, induced by a redundancy function $\Ired$ that induces a fully non-negative PID.
Then we have that:
\begin{align*}
    I(T; X, Y) &= I(T; Y) \\
    S(T; X, Y) &= U_X(T; X, Y) = 0
\end{align*}
\end{proposition}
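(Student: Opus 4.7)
The plan is to combine a standard information-theoretic identity with the two PID constraints (\ref{eq:PID_1}) and (\ref{eq:PID_3}), and then invoke the fully nonnegative hypothesis to pin down the vanishing atoms.

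First, I would establish the mutual information identity $I(T;X,Y) = I(T;Y)$. By the chain rule for mutual information, we have $I(T;X,Y) = I(T;Y) + I(T;X\mid Y)$, and the conditional independence assumption $X \perp T \mid Y$ gives $I(T;X\mid Y) = 0$. This argument is valid in both the discrete and jointly continuous regimes covered by the Preliminaries; in the latter case, $I(T;X\mid Y)$ is defined via Def.~\ref{defn:MI} for jointly continuous $\XX \cup \YY \cup \ZZ$.

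Next, I would subtract (\ref{eq:PID_3}) from (\ref{eq:PID_1}) to obtain
\begin{equation}
I(T;X,Y) - I(T;Y) = S(T;X,Y) + U_X(T;X,Y).
\end{equation}
Combining with the previous step, the left-hand side vanishes, so $S(T;X,Y) + U_X(T;X,Y) = 0$.

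Finally, since $\mathfrak{d}$ is assumed fully nonnegative, both $S(T;X,Y) \geq 0$ and $U_X(T;X,Y) \geq 0$; the only way their sum can be zero is for both to vanish individually. This gives the claimed conclusion. There is no real obstacle here: the content of the proposition is essentially that any nonnegative PID must allocate all of $I(T;X,Y)$ to the $\{Y\}$-reachable atoms $R$ and $U_Y$ whenever $X$ is conditionally redundant given $Y$. The only subtlety worth flagging is that the argument relies crucially on the full nonnegativity assumption --- for signed PIDs such as $\Ipm$ or $\Iccs$, the cancellation $S + U_X = 0$ does not force each atom to be zero, which foreshadows the non-specificity phenomena observed numerically in Experiments I and II.
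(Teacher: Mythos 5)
Your proposal is correct and follows essentially the same route as the paper's own proof: the chain rule plus conditional independence gives $I(T;X,Y)=I(T;Y)$, subtracting (\ref{eq:PID_3}) from (\ref{eq:PID_1}) yields $S+U_X=0$, and full nonnegativity forces both atoms to vanish. The closing remark about signed PIDs ($\Ipm$, $\Iccs$) escaping this conclusion is a nice observation that matches the paper's discussion surrounding the proposition, but it adds nothing beyond the paper's own framing.
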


More generally, if an $\Ired$ induces a nonnegative $\Pi$ function as in \cite{Williams2010}, i.e. the function that assigns all PID atoms in an information lattice, then we will have analogous results for higher-order conditional independencies.
Here, we demonstrate the idea in the bivariate case.

\begin{proof}
By the chain rule for MI, we have that
\begin{equation}
    I(T;X,Y) = I(T;Y) + I(T;X|Y).
\end{equation}
Since $X$ and $T$ are conditionally independent, we have that $I(T;X|Y) = 0$, and thus
\begin{equation}
    I(T;X,Y) = I(T;Y).
\end{equation}
Usings Eqs.~(\ref{eq:PID_1})-(\ref{eq:PID_3}), we have
\begin{equation}
    R + U_X + U_Y + S = R + U_Y.
\end{equation}
Since all atoms are non-negative, it follows that $U_X = S = 0.$
\end{proof}

In Experiment II, the false interactions $(X_j, Y_2)$ form such a conditional independence $X_j \perp T | Y_2$, but the $\Ipm$ does not satisfy the non-negativity condition.
Thus, we see in Fig.~\ref{fig:supp.expII} that neither $U_X$ nor $S$ are zero for false interactions.

\subsection{Analytic and Informational Balance Between Predictors}
\label{subsection:experimentC}

Before we turn to extending PID to continuous interactions in the next section, we may bridge this gap by examining a correspondence between the analytic properties of the sigmoidal kernel (\ref{eq:sigSw}) and the discretized PIDs of our network simulations. Although we may have discretized our simulation data, they are the realization of a continuous interaction network model, with a response smoothly determined by the random predictors.
In this idealized setting, we may ask how differing conceptions of PID handle this smooth determination, particularly with respect to kernel derivatives.
Crucially, in the analytic `information' that determines our response --- i.e. the coefficients of the power series expansion of (\ref{eq:networkMotivation.response}) , which fully determine $T$  --- we may cleanly distinguish between univariate derivatives and cross-terms.
While there is no direct correspondence between the atoms of PID and the kernel interaction derivatives, nevertheless, as we will show below, there is a delicate relationship between the two.

Our choice of a sigmoidal switch interaction kernel corresponds to an approximation of Boolean or multivalued logic circuits within the 'omic regulatory system of a biological organism.
The gene $X$ in Eq.~(\ref{eq:sigSw}) is the `switch' gene, with higher values of $X$ turning the interaction on, and lower values turning it off.
We allow the parameter $\alpha$ to recenter this transition, which is equivalent to recentering $X$.
In Experiment~I, we chose $\alpha=0$, which allows both the `on' and `off' regime to be observed with high probability.
However, by choosing another $\alpha$, the interaction can be made to default to one state or another, thereby shifting the balance of information between the predictors.
For low $\alpha$, the effect of the $Y$ predictor is somewhat suppressed, and small changes in the switch $X$ will affect the response $T$.
For high $\alpha$, the sigmoidal function is `on' (with high probability) despite perturbations in $X$, and so we expect $I(T;Y) \gg I(T;X)$.
We would expect this difference in relative information to be qualitatively apparent in any PID.

Before adjusting $\alpha$ in simulations, we may approach this $\alpha$-dependent information balance heuristically.
A second-order Taylor expansion of our kernel from Eq.~(\ref{eq:sigSw}) about 
$(0,0)$ supplies
\begin{align}
\label{eq:sigSw.2ndOrderExpansion}
    g(x,y) \approx \underbrace{\frac{1}{1 + e^{\alpha}}}_{\partial_{y} g} y + \underbrace{\frac{e^{\alpha}}{(1 + e^{\alpha})^2}}_{\partial_{x,y} g} xy .
\end{align}
We may imagine the coefficient $\partial_y g$ as analogous to the unique information PID atom $U_Y(T;X,Y)$.
The coefficient $\partial_{x,y}$ is more ambiguous, as it may be argued to contribute to both redundant and synergistic information.
We will return to this matter shortly.
The term $\partial_y g$ dominates $\partial_{x,y}g$ for lower $\alpha$, but the terms converge as $\alpha$ increases.
That is,
\begin{align}
    (\partial_{y} g, \partial_{x,y} g) &\sim c(\alpha) \left( \frac{1}{2}, \frac{1}{2} \right) \text{ as } \alpha \to \infty,\\
    (\partial_{y} g, \partial_{x,y} g) &\sim c(\alpha) \left( 1, 0 \right) \text{ as } \alpha \to -\infty,\\
    \vspace{6pt} \nonumber \text{ where }  c(\alpha) &= |\partial_{y} g| + |\partial_{x,y} g|.
\end{align}
Our response for Network A in Eq.~(\ref{eq:networkResponse.networkA}) is the sum of four such interactions:
$$
T = f(X_1,X_2,X_3,X_4, Y) = \sum_i g(X_i,Y).
$$
Expanding this gives us
\begin{align}
\label{eq:networkResponse.networkA.2ndOrderExpansion}
    f(x_1,x_2,x_3,x_4,y) \approx \underbrace{\frac{4}{1 + e^{\alpha}}}_{\partial_{y}} y
    + \sum_{i=1}^4 \underbrace{\frac{e^{\alpha}}{(1 + e^{\alpha})^2}}_{\partial_{x_i,y} g} x_i y
\end{align}
and thus
\begin{align}
    \label{eq:networkResponse.networkA.2ndOrder.ratio.highAlpha}
    (\partial_{y}f, \partial_{x_i,y}f) &\sim c(\alpha) \left(\frac{1}{2} , \frac{1}{8} \right) \text{ as } \alpha \to \infty,\\
    \label{eq:networkResponse.networkA.2ndOrder.ratio.lowAlpha}
    (\partial_{y}f, \partial_{x_i,y}f) &\sim c(\alpha) \left(1, 0 \right) \text{ as } \alpha \to -\infty,\\
    \nonumber \text{ where }  c(\alpha) &= |\partial_{y} g| + \sum_{i=1}^4 |\partial_{x_i,y} g|.
\end{align}
More sigmoidal interactions on the same hub will divide the `switch' information among the spokes.
Nonetheless, we have the same qualitative difference between high- and low-$\alpha$ regimes as with a single interaction kernel.

We might expect, then, that in simulations of Network A, a PID of the information of an interaction pair $I(T;X_i, Y)$ would locate most of the MI in the unique information atom $U_{Y}$ for low values of $\alpha$.
By contrast, for higher values of $\alpha$, information
ought to become more entangled with the switch nodes $X_i$.
In Experiment III, we simulated Network A again while varying the parameter $\alpha$, in order to explore the changing balance of PID information between nodes.

We see in Fig.~\ref{fig:experiment3} that $\Imin$ \textit{and not} $\Ipm$ locates most of the information in $\Umin{Y}$ for low values of $\alpha$.
We directly compare the normalized values of $\Umin{Y}$ and $\Upm{Y}$ to the relative value of $\partial_y f$ as in Eqs.\@ (\ref{eq:networkResponse.networkA.2ndOrder.ratio.highAlpha}) and (\ref{eq:networkResponse.networkA.2ndOrder.ratio.lowAlpha}), and see that $\Umin{Y}$ nearly tracks this expression, unlike $\Upm{Y}$.
This result favors the $\Imin$ framework for demonstrating intuitive sensitivity to the analytic predominance of a single predictor.

\begin{figure*}
    \centering
    \includegraphics[scale=0.4]{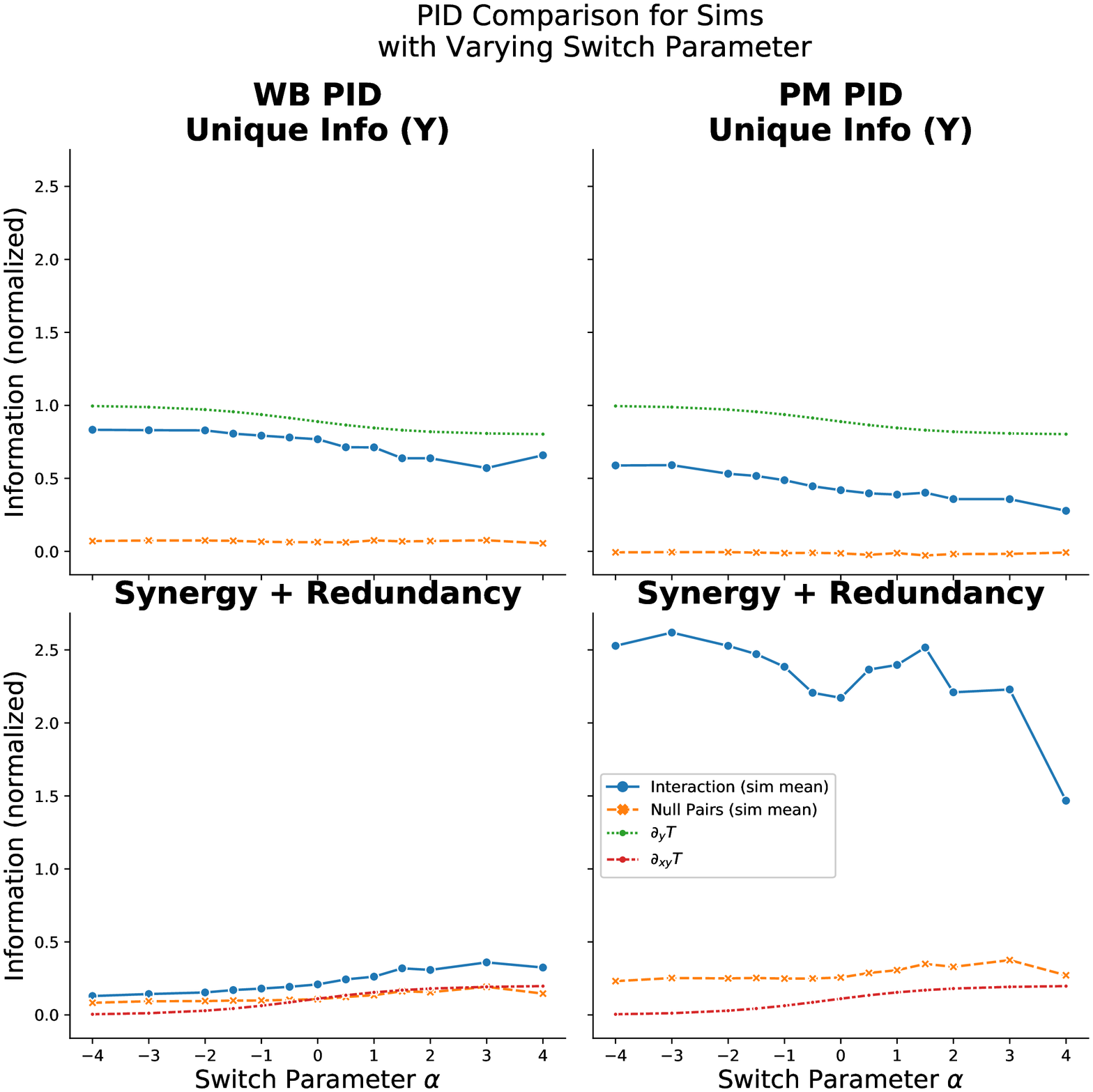}
     \caption{
     \textbf{Relationship between bivariate PID atoms and sigmoidal kernel derivatives in Experiment~III}
     We examine both the unique info $U_X$ of the hub gene $Y$ and the sum of synergy and redundancy $S+R$ for our interaction pairs $(X_i, Y)$, in light of the expansion of the response in Eq.~(\ref{eq:networkResponse.networkA.2ndOrderExpansion}). We see that, as we vary $\alpha$, $\Umin{Y}$ tracks the normalized term $ \partial_y f /c(\alpha),$ where $ c(\alpha) = |\partial_y f| + |\partial_{xy} f|$, while $\Smin+\Rmin$ tracks $\partial_{xy} f/c(\alpha)$. The former term corresponds  to the linearized sensitivity of the response $T$ to $Y$ alone near $X=Y=0$, while the latter tracks the sensitivity to the product $XY$.
     }
    \label{fig:experiment3}
\end{figure*}

In Fig.~\ref{fig:experiment3}, we also see that, for the $\Imin$ PID, the sum of redundant and synergistic information $\Rmin + \Smin$ follows a similar pattern as $\partial_{xy} f$, increasing monotonically in $\alpha$, while $\Rpm+\Spm$ remains constant.
Both $\Rmin$ and $\Smin$ increase individually with $\alpha$, with $\Rmin > \Smin$ (Fig.~\ref{fig:supp.expIII}).
As mentioned before, it not immediately clear whether the joint sensitivity of $T$ to both nodes in an interaction, captured by the coefficient $\partial_{xy} f$ in Eqs.~\ref{eq:sigSw.2ndOrderExpansion}-\ref{eq:networkResponse.networkA.2ndOrderExpansion}, ought to be accounted for as synergistic or as redundant information.
In expectation, that term controls the dependency of $T$ on the covariance $\mathbb{E} X Y$. 
If linear correlation between predictors is understood as information shared between them, then this would suggest that $\partial_{x,y} f$ ought to correspond to PID redundancy.
On the other hand, as a measure of mutual sensitivity of the kernel $g$, this term also captures the two-dimensional effect of small perturbations in both variables.
If perturbations in $X$ likewise increase sensitivity of $g$ to perturbations in $Y$, an argument could be made for synergy.

Despite this ambiguity in the PID-based interpretation of $\partial_{x,y} g$, the behavior demonstrated by the $\Imin$ PID better aligns with the analytic sensitivities of our target variable.
By contrast, the $\Ipm$ PID divides MI into stable proportions independent of any changes to relative sensitivity of the internal kernel upon the predictors in question.
In other words, the $\Ipm$ PID lacks specificity in its assignment of predictor-target mutual information to the redundant and synergistic atoms, at the expense of the unique information atoms.
This concords with our observations from Experiment II in Sec.~\ref{subsection:interaction_network}, where the $\Imin$ but not the $\Ipm$ PID synergy could distinguish true and false interactions, conflating high pair-wise mutual information $I(T;X,Y)$ with truly synergistic information.
As we will see in the next section, this behavior is structural to the competing definitions of the $\Imin$ and $\Ipm$ PIDs.
By extending these definitions to continuous variables, we may demonstrate that $\Imin$ but not $\Ipm$ demonstrates a respect for conditionally independent predictors.
Moreover, we may characterize the allotment of unique versus redundant information using only the relative sensitivities of a target interaction outcomes (the pointwise functions $\partial_x g, \partial_y g$), as well as the marginal information content of the predictors themselves.

\section{Unique Information of the $\Imin$ and $\Ipm$ PIDs for Continuous Interactions}
\label{section:GenericResults}

As we saw in the previous section, the $\Ipm$ PID is non-specific in evaluating network interactions, while the $\Imin$ PID is more specific but insensitive.
In Experiment II, the $\Ipm$ PID could not distinguish between interacting nodes jointly affecting the response variable and a pair in which only one node contributed a univariate signal.
In other words, despite the conditional independence $T \perp X | Y$, $\Spm$ and $\Rpm$ were given a greater share of the total information $I(T;X,Y) = I(T;Y)$.
In Experiment III, we demonstrated a heuristic approach to analytically separating the unique and joint (redundant + synergistic) information in sigmoidal interactions, and saw in simulation that the $\Imin$ PID agreed with this separation, while the $\Ipm$ PID did not.
In other words, the $\Imin$ PID agreed with the analytic balance of information between predictors, and the $\Ipm$ PID ignored changes to this balance.
In this section, we extend both PIDs to continuous interactions in order to demonstrate the general principles underlying this difference.
These principles apply to a more general class of interaction kernels, and extend beyond the simple networks used in our simulations.

\subsection{Noise-free Bivariate Interactions}
\label{subsection:NFBI}

Due to Eqs.~(\ref{eq:PID_1}-\ref{eq:PID_3}), the difference between two finite PIDs in the values assigned to each atom is conserved, according to Eq.~(\ref{eq:conservationBivariatePID}).
Thus, we use a noise-free interaction model that allows us to directly understand how the redundant and unique information atoms behave toward continuous interaction kernels.
We will examine a trivariate system of random variables, two predictors $X$ and $Y$, and target $T$, such that $T$ is deterministic interaction of $X$ and $Y$ for some kernel $g$:
\begin{equation}
\label{eq:motivation.noisfreeModel}
    T = g(X,Y).
\end{equation}
We will refer to such a system as a noise-free bivariate interaction.

\begin{defn}[Noise-free Bivariate Interaction]
\label{defn:bivarIntxn.noisefree}
Let $X,Y, T$ be real-valued random variables and $g$ be a continuous function such that Eq.~(\ref{eq:motivation.noisfreeModel}) holds for every realization, i.e.
\begin{equation}
    T(\omega) = g(X,Y)(\omega) \text{ for every } \omega \in \Omega.
\end{equation}
We call the triplet $(X,Y,T)$ a \textbf{noise-free bivariate interaction (NFBI)}, and denote it $X,Y \to T$.
If $(X,Y) \sim N(\bm{\mu}, \Sigma)$, we refer to the interaction as \textbf{Gaussian}, denoted $X,Y \to_\Sigma T$. If $\bm{\mu} = \bm{0}$ and $\Sigma = (1, \rho; \rho, 1)$, we refer to the interaction as \textbf{normalized Gaussian}, denoted $X,Y \to_\rho T$.
\end{defn}
A NFBI will necessarily have a degenerate distribution on the triplet $(X,Y,T)$.
Even when $\mu_{X,Y} \ll \lambda^{(2)}$ and $(X,Y)$ admit a density, the triplet will have a point-mass distribution:
\begin{equation}
    \label{eq:motivation.noisefreeStatModel.pxyz}
    p_{X,Y,T}(x,y,t) = p_{X,Y}(x,y) \delta(t - g(x,y)) .
\end{equation}
Thus, $I(T;X,Y) = \infty$ for any NFBI (Cor.~\ref{corollary:infiniteMI.sumOfDiracs}).
However, when $I(T;X)$ and $I(T;Y)$ are finite due to the nonsingular smoothness of $g$, we suspect any (non-indeterminate) PID will assign finite information to every atom except synergy.

We will always assume that the predictors $(X,Y)$ of a NFBI admit a smooth density $p_{X,Y}$ that is absolutely continuous with respect to Lebesgue measure.

\begin{condition}[Well-behaved Gaussian NFBI]
\label{condition}
Let $(X,Y) \rightarrow T$ be a NFBI with associated kernel $g$.
We say that $(X,Y) \rightarrow T$ is \textbf{well-behaved} if there exists an open set $U \subset \mathcal{A}_{X,Y}$ of full measure $\mu_{X,Y}(U)=1$ such that $g$ satisfies the following conditions:
\begin{itemize}
\item[i.] The function $g$ is continuously differentiable on $U$, i.e. $g \in C^{(1)}(U)$;
\item[ii.]  The partials
$|\partial_y g(x,y)|>0$,  
$|\partial_x g(x,y)|>0$ for $(x,y)\in U$.
\end{itemize}
\end{condition}
\noindent This condition 
is sufficiently flexible to encompass, for instance, our kernel in Eq.~(\ref{eq:sigSw}).
For our purpose, we need this condition to guarantee the following change of variable formula.

\begin{proposition}
\label{proposition:GenericKernel.CoV}
Let $(X,Y) \rightarrow T$ be a well-behaved NFBI as per Condition~\ref{condition}.
Then the densities on the induced probability spaces $(\mathcal{A}_{X,T},\mu_{X,T})$ and $(\mathcal{A}_{Y,T},\mu_{Y,T})$ are defined a.e. (where we write $\partial_y g$ for $\partial_y g(x,\tilde y)$ and similarly for $\partial_x g$)
\begin{align}
    \label{eq:GenericKernel.CoV.pxt}
    p_{X,T}(x,t) &= \frac{1}{|\partial_y g|} p_{X,Y}(x, \tilde{y} ) \text{ a.e. } [\mu_{X,T}],\\
    \label{eq:GenericKernel.CoV.pyt}
    p_{Y,T}(y,t) &= \frac{1}{|\partial_x g|} p_{X,Y}(\tilde{x} , y) \text{ a.e. } [\mu_{Y,T}],
    \intertext{where}
    \tilde{y}(x,t) &= (g(x,\cdot))^{-1}(t),\\
    \tilde{x}(y,t) &= (g(\cdot,y))^{-1}(t).
\end{align}
Equivalently, we may regard these densities as random variables that are almost surely equal (where we write $\partial_y g$ for $\partial_y g(X,Y)$ and similarly for $\partial_x g$):
\begin{align}
    p_{X,T}(X,T) &\oeq \frac{1}{|\partial_y g|} p_{X,Y}(X, Y ) \\
    p_{Y,T}(Y,T) &\oeq \frac{1}{|\partial_x g|} p_{X,Y}(X , Y)
\end{align}
\end{proposition}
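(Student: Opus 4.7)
The plan is to apply the standard multivariate change-of-variables formula for densities, via the maps $\Phi_X(x,y) = (x,g(x,y))$ and (symmetrically) $\Phi_Y(x,y) = (g(x,y),y)$, restricted to the open set $U$ of full measure supplied by Condition~\ref{condition}. First I would verify that $\Phi_X$ is a $C^1$ local diffeomorphism on $U$: its Jacobian matrix
\begin{equation*}
J_{\Phi_X}(x,y) = \begin{pmatrix} 1 & 0 \\ \partial_x g(x,y) & \partial_y g(x,y) \end{pmatrix}
\end{equation*}
has determinant $\partial_y g(x,y)$, which is nonzero throughout $U$. By the inverse function theorem applied to $g(x,y)=t$ with $x$ treated as a parameter, the section $y\mapsto g(x,y)$ admits a local $C^1$ inverse $\tilde{y}(x,t)$ for a.e.\ $(x,t)\in\mathcal{A}_{X,T}$, yielding the expression claimed in the statement.

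Next I would push the law $\mu_{X,Y}$ forward through $\Phi_X$. For any Borel set $A\subset\mathcal{A}_{X,T}$, the identity $\mu_{X,T}(A) = \mu_{X,Y}(\Phi_X^{-1}(A)\cap U)$, combined with the pointwise density-transformation formula applied on each coordinate patch of $U$, gives
\begin{equation*}
\mu_{X,T}(A) = \int_A \sum_{y:\,g(x,y)=t} \frac{p_{X,Y}(x,y)}{|\partial_y g(x,y)|}\,dx\,dt.
\end{equation*}
Uniqueness of the Radon-Nikodym derivative then yields Eq.~(\ref{eq:GenericKernel.CoV.pxt}) once $\tilde{y}$ is interpreted appropriately. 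A completely symmetric argument, using $\Phi_Y$ whose Jacobian determinant is $\partial_x g$, produces Eq.~(\ref{eq:GenericKernel.CoV.pyt}). For the almost-sure equalities between the two sides as random variables, observe that $T\oeq g(X,Y)$ by the definition of an NFBI, so $\tilde{y}(X,T)\oeq Y$ on the branch of $\tilde{y}$ containing the realized sample, and substitution into Eq.~(\ref{eq:GenericKernel.CoV.pxt}) is immediate.

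The main obstacle I anticipate is the branch ambiguity in defining $\tilde{y}$: Condition~\ref{condition} asserts only $|\partial_y g|>0$, not strict monotonicity of $y\mapsto g(x,y)$. On each connected component of a $y$-section of $U$ the map is monotonic and a diffeomorphism onto its image, but separate components may supply distinct preimages $y$ for the same pair $(x,t)$, in which case the density is genuinely a sum over branches rather than the single-branch expression displayed in the proposition. The cleanest way to reconcile this with the claimed formula is to select, for each $(x,t)$, the branch of $\tilde{y}$ containing the realized value $Y(\omega)$; this selection is well-defined $\mu$-a.s.\ and is precisely what is needed for the a.s.\ version of the identity and for the downstream expectation computations. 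For the sigmoidal kernel in (\ref{eq:sigSw}), $g(x,\cdot)$ is linear in $y$ and hence globally invertible, so this subtlety does not arise in the main example of interest.
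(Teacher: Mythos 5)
Your core argument is exactly the paper's own proof: the paper also takes $\Phi(x,y) = (x, g(x,y))$ on the full-measure open set $U$, observes $|\det J_{\Phi}| = |\partial_y g| > 0$ there, and invokes the standard bivariate change-of-variables formula (citing Section~4.3 of Casella and Berger), with the symmetric map handling $p_{Y,T}$. Your pushforward/Radon--Nikodym phrasing is just a more explicit rendering of that same step, so up to there you and the paper agree.

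Where you go beyond the paper is the branch-ambiguity discussion, and there your proposed fix does not work. You are right that Condition~\ref{condition} only makes $y \mapsto g(x,y)$ injective on each connected component of the section $U_x = \{y : (x,y) \in U\}$, not globally: for instance $g(x,y) = x + (y^2-1)^2$ with $U = \mathbb{R} \times (\mathbb{R} \setminus \{0,\pm 1\})$ satisfies Condition~\ref{condition}, yet a.e.\ $(x,t)$ in the support has four preimages, so the true density $p_{X,T}(x,t)$ is a four-term sum. But your proposal to resolve this by selecting the branch containing the realized $Y(\omega)$ cannot rescue the almost-sure identity: $p_{X,T}$ is a fixed function determined by the law of $(X,T)$ alone, so $p_{X,T}(X(\omega),T(\omega))$ equals the full sum over branches regardless of which branch was realized, and it strictly exceeds $p_{X,Y}(X,Y)/|\partial_y g(X,Y)|$ wherever the remaining branches carry positive density. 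No selection convention changes the left-hand side, so in the multi-branch case the proposition's a.s.\ statement is simply false, and since the downstream result (Theorem~\ref{theorem:GenericKernel.UniqueInfo}) computes specific informations from the \emph{true} conditional densities, feeding in a single-branch surrogate would corrupt those expectations rather than serve them. The honest repair---which applies equally to the paper, whose proof silently assumes the transformation is one-to-one---is to strengthen Condition~\ref{condition} to require injectivity of the sections $g(x,\cdot)$ and $g(\cdot,y)$ on $U$; this holds for the linear kernel and for the sigmoidal kernel in Eq.~(\ref{eq:sigSw}), and under it both your argument and the paper's are complete.
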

\begin{proof}
Let $U\subset \mathcal{A}_{X,Y}$ be the full measure open set as in the statement of Condition~\ref{condition}.
Consider the transformation $\Phi:U \to \mathbb{R}^2$ given by $\Phi: (x,y) \mapsto (x,g(x,y))$.
Since $g \in C^{1}(U)$, the Jacobi matrix $J_\Phi$ is defined, and we have that $|\det J_{\Phi}| = |\partial_y g| > 0$ on $U$.
We arrive at our density $p_{X,T}$ by a standard change of variables (see, for instance, Section~4.3 of Casella and Berger \cite{CasellaBerger}).
\end{proof}

\subsection{Continuous PIDs of a Well-behaved Noise-free Bivariate Interaction}
\label{subsection:PIDofNFBI}

Now that we have defined NFBIs and provided their joint marginal representations, we may state our main result, Theorem~\ref{theorem:GenericKernel.UniqueInfo}.
This theorem describes the bivariate PID of NFBIs by characterizing the unique information atoms in terms of the first-order derivatives of the interaction kernel $g$, as well as the marginal distributions $p_X(X)$ and $p_Y(Y)$.
As we will discuss, this characterization will allow us to understand the non-specificity of the $\Ipm$ PID and the insensitivity of the $\Imin$ PID as tools of interaction nomination.

\begin{theorem}[Unique Information for Continuous $\Imin$ and $\Ipm$ PIDs]
\label{theorem:GenericKernel.UniqueInfo}
Let $(X,Y) \rightarrow T$ be a well-behaved noise-free bivariate interaction with associated kernel $g$, as per Condition~\ref{condition}.
Then the unique informations for the decompositions $\dmin$ and $\dpm$ are given by

\begin{align}
   \Umin{X}(T;X,Y) &=
    \mathbb{E} \left[ \mathds{1}_{A}
    \left( \log \frac{p_Y(Y)}{p_X(X)} - \log \frac{|\partial_y g|}{|\partial_x g|}
    \right) \right]
    \label{eq:GenericKernel.Thm.Umin}\\
    \Upm{X}(T;X,Y) &=
    \underbrace{\mathbb{E} \left[
    \mathds{1}_{B} \log \frac{p_Y(Y)}{p_X(X)}
    \right]}_{(\Upm{X})^+} -
    \underbrace{\mathbb{E} \left[
    \mathds{1}_{C} \log \frac{|\partial_y g|}{|\partial_x g|}
     \right]}_{(\Upm{X})^-}
    \label{eq:GenericKernel.Thm.Upm}
\end{align}

where $\mathds{1}$ is the indicator function for the events:
\begin{align*}
    A &= \left\{I_X(T) \geq I_Y(T)  \right\}\\
    B &= \left\{p_X(X) \leq p_Y(Y) \right\}\\
    C
    &= \left\{|\partial_x g|(X,Y) \leq |\partial_y g|(X,Y) \right\} 
\end{align*}

In particular, if $(X,Y)_\rho \rightarrow T$ is a normalized Gaussian interaction, then we have that
\begin{align}
\Umin{X}(T;X,Y)
&=
\mathbb{E} \left[ \mathds{1}_{A} \left( \frac{\log(e)}{2} (X^2 - Y^2) - \log \frac{|\partial_y g| }{|\partial_x g| } \right) \right],
\label{eq:GenericKernel.Thm.Umin.Gaussian}\\
 \Upm{X}(T;X,Y)
        &= \frac{\log(e)}{2} \sqrt{1-\rho^2}
        -
        \mathbb{E} \left[
        \mathds{1}_{C} \log \frac{|\partial_y g|}{|\partial_x g|}
         \right] .
         \label{eq:GenericKernel.Thm.Upm.Gaussian}
\end{align}
\end{theorem}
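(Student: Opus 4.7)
My plan is to reduce both decompositions to a single change-of-variables computation (Proposition~\ref{proposition:GenericKernel.CoV}) combined with the elementary identity $a - \min(a,b) = \mathds{1}_{\{a\geq b\}}\,(a-b)$. The key observation is that, under Condition~\ref{condition}, the conditional densities $p_{X|T}$ and $p_{Y|T}$ admit \emph{explicit} almost-sure expressions in terms of $p_{X,Y}$, $p_T$, and the partials of $g$; every specific information and conditional surprisal therefore collapses to a manageable form in which the ``common pieces'' between the $X$- and $Y$-versions cancel.

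For $\Umin{X}$, unpacking the definition gives
\begin{equation*}
    \Umin{X}(T;X,Y) = I(T;X) - \Imin(T;X,Y) = \mathbb{E}\bigl[\mathds{1}_A\,(I_X(T) - I_Y(T))\bigr],
\end{equation*}
since $\mathbb{E}\,I_X(T) = I(T;X)$ and $I_X(T) - \min(I_X(T),I_Y(T)) = \mathds{1}_A\,(I_X(T)-I_Y(T))$. Proposition~\ref{proposition:GenericKernel.CoV} furnishes the a.s.\ identities $\log p_{X|T}(X|T) \oeq \log p_{X,Y}(X,Y) - \log p_T(T) - \log|\partial_y g|$ and the $Y$-analog with $|\partial_x g|$. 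Substituting into $I_X(t) = \mathbb{E}[\log p_{X|T}(X|T) - \log p_X(X)\mid T=t]$ and subtracting the $Y$-version cancels the shared $\log p_{X,Y}$ and $\log p_T$ terms, leaving
\begin{equation*}
    I_X(T) - I_Y(T) = \mathbb{E}\!\left[\log \tfrac{p_Y(Y)}{p_X(X)} - \log \tfrac{|\partial_y g|}{|\partial_x g|}\,\Big|\,T\right].
\end{equation*}
Since $A$ is $\sigma(T)$-measurable, the tower property delivers~(\ref{eq:GenericKernel.Thm.Umin}).

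For $\Upm{X}$, Def.~\ref{defn:redundant-info.Ipm} yields
\begin{equation*}
    \Upm{X} = \bigl\{\mathbb{E}\,s_X(X) - \mathbb{E}\min(s_X, s_Y)\bigr\} \;-\; \bigl\{\mathbb{E}\,s_{X|T}(X|T) - \mathbb{E}\min(s_{X|T}, s_{Y|T})\bigr\}.
\end{equation*}
Applying the $\min$-identity to the first bracket, together with $s_X(X) - s_Y(Y) = \log(p_Y(Y)/p_X(X))$, produces $\mathbb{E}[\mathds{1}_B\,\log(p_Y(Y)/p_X(X))]$ where $B = \{s_X \geq s_Y\} = \{p_X(X) \leq p_Y(Y)\}$. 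For the second bracket, Proposition~\ref{proposition:GenericKernel.CoV} yields the a.s.\ identity $s_{X|T}(X|T) - s_{Y|T}(Y|T) \oeq \log(|\partial_y g|/|\partial_x g|)$, and the corresponding event coincides with $C = \{|\partial_x g| \leq |\partial_y g|\}$; the second bracket therefore reduces to $\mathbb{E}[\mathds{1}_C\,\log(|\partial_y g|/|\partial_x g|)]$, as required by~(\ref{eq:GenericKernel.Thm.Upm}).

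The Gaussian specializations~(\ref{eq:GenericKernel.Thm.Umin.Gaussian})--(\ref{eq:GenericKernel.Thm.Upm.Gaussian}) follow by plugging $p_X(x) = (2\pi)^{-1/2} e^{-x^2/2}$ into the general formulas: $\log(p_Y(Y)/p_X(X)) = \tfrac{\log e}{2}(X^2 - Y^2)$ and $B = \{X^2 \geq Y^2\}$. The only nontrivial step is evaluating $\mathbb{E}[(X^2-Y^2)_+]$, handled by the reparametrization $U = X+Y$, $V = X-Y$ --- these are independent mean-zero Gaussians with variances $2(1+\rho)$ and $2(1-\rho)$, so the moment reduces to a product of Gaussian absolute moments. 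The main bookkeeping obstacle is Proposition~\ref{proposition:GenericKernel.CoV} itself: one must consistently treat its output as an \emph{almost-sure} identity of random variables on $\Omega$ (valid on the full-measure set guaranteed by Condition~\ref{condition}) rather than a pointwise identity on the degenerate product alphabet $\mathcal{A}_{X,Y,T}$, so that the subsequent conditional-expectation and tower-property manipulations are rigorous.
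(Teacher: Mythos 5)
Your proposal is correct and follows essentially the same route as the paper's own proof: the identity $a-\min(a,b)=\mathds{1}_{\{a\geq b\}}(a-b)$ applied to the specific informations, Proposition~\ref{proposition:GenericKernel.CoV} to convert conditional-density ratios into $\log\frac{|\partial_y g|}{|\partial_x g|}$ almost surely, the tower property via $\sigma(T)$-measurability of $A$, the specificity/ambiguity split from Def.~\ref{defn:redundant-info.Ipm} for $\Upm{X}$, and the $X\pm Y$ independence trick for the Gaussian constant, which is exactly the computation in Lemma~\ref{lemma:linearIntxn.PM.lattice.specificity} that the paper's proof cites. One remark: your evaluation of $\tfrac{\log e}{2}\,\mathbb{E}[(X^2-Y^2)_+]$ yields $\tfrac{1}{\pi}\sqrt{1-\rho^2}$ (in nats), agreeing with Lemma~\ref{lemma:linearIntxn.PM.lattice.specificity} rather than the constant $\tfrac{\log(e)}{2}\sqrt{1-\rho^2}$ printed in Eq.~(\ref{eq:GenericKernel.Thm.Upm.Gaussian}); that discrepancy is internal to the paper, not an error in your argument.
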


From this theorem, we immediately see how $\Ipm$ can assign negative unique information. If the ratio of partial derivatives $\frac{|\partial_x g|}{|\partial_y g|}$ is arbitrarily small on most of the probability space, as we would expect with a kernel with very assymetric dependency on $X$ and $Y$, then $\Upm{X}$ will be negative.
Moreover, the unique specificity $\Upmm{X}$ will be infinite if $\partial_x g = 0$ on a non-trivial set for which $\partial_y g \neq 0$.
In this case, $\Upm{X} = - \infty$.
We will return to this topic in Sec.~\ref{subsection:GenericResults.asymptoticCI}.

\begin{proof}[Proof of Theorem~\ref{theorem:GenericKernel.UniqueInfo}]
Let $g$ and $U$ be given as in the statement of Condition~\ref{condition}.
On the almost sure event $E = \pi_{X,Y}^{-1}(U)$, note that via Proposition~\ref{proposition:GenericKernel.CoV}, the densities $p_{X,Y}, p_{X,T},$ and $p_{Y,T}$ are well-defined.
Moreover, since we can choose $U$ to ensure $p_T(t)>0$ over $(x,y)\in U$, the conditional densities $p_{X|T}$ and $p_{X|T}$ are likewise defined.

We begin with (\ref{eq:GenericKernel.Thm.Umin}).
We recall from Def.~\ref{defn:redundant-info.Imin}
that $\Imin(T;X,Y) = \mathbb{E} \min(I_X(T),I_Y(T))$, so
\begin{align*}
\Umin{X} &= I(T;X) - \Imin(T;X,Y)\\
&= \mathbb{E} I_X(t) - \mathbb{E} \min(I_X(T),I_Y(T)) \\
&= \mathbb{E} \left[ \mathds{1}_{A} (I_X(T) - I_Y(T)) \right]
\end{align*}
where $A$ is the event in which $I_X(t) - I_Y(t)>0$.
If $p_T(t) = 0$, $I_X(t) = I_Y(t) = 0$, so we may assume $p_T(t) > 0$.
Then observe that
\begin{align}
    &\left( I_X(T) - I_Y(T) \right)\notag\\
    &= \mathbb{E}\left( \log \frac{p_{X,T}(X,T)}{p_X(X) p_T(T)}\,\bigg|\,T\right) \notag\\
    \nonumber
    &\hspace{10mm}- \mathbb{E}\left( \log \frac{p_{Y,T}(Y,T)}{p_Y(Y) p_T(T)}\,\bigg|\,T\right) \\
    \nonumber
    &=  \mathbb{E}\left( \log \frac{p_{X,T}(X,T)}{p_X(X)}  
    - \log \frac{p_{Y,T}(Y,T)}{p_Y(Y)} \,\bigg|\,T \right) \\
    \nonumber
    &= \mathbb{E}\left( \log \frac{p_Y(Y)}{p_X(X)} 
    - \log \frac{p_{Y,T}(Y,T)}{p_{X,T}(X,T)} \,\bigg|\,T \right)
\end{align}

By Proposition~\ref{proposition:GenericKernel.CoV}, we have that
\begin{align*}
    \log \frac{p_Y(Y)}{p_X(X)} 
    - \log \frac{p_{Y,T}(Y,T)}{p_{X,T}(X,T)}
    &\oeq
    \log \frac{p_Y(Y)}{p_X(X)} 
    - \log \frac{|\partial_y g|}{|\partial_x g|}
\end{align*}
Thus, considered as random functions of $T$,
\begin{align}
I_X(T) - I_Y(T) \oeq \mathbb{E} \left( \log \frac{p_Y(Y)}{p_X(X)} 
    - \log \frac{|\partial_y g|}{|\partial_x g|} \,\bigg|\,T\right).
\end{align}
Now, $\mathds{1}_A$ is a function of $T$ so that
\begin{align*}
    &\mathbb{E} \left[ \mathds{1}_{A}
    \mathbb{E}\left( \log \frac{p_Y(Y)}{p_X(X)} - \log \frac{|\partial_y g|}{|\partial_x g|}\,\big|\, T
    \right) \right]\\
    &=\mathbb{E} \left[ 
    \mathbb{E}\left( \mathds{1}_{A}\left(\log \frac{p_Y(Y)}{p_X(X)} - \log \frac{|\partial_y g|}{|\partial_x g|}\,\big|\, T
    \right)\right) \right]\\
    &=\mathbb{E} \left[ 
     \mathds{1}_{A}\log \frac{p_Y(Y)}{p_X(X)} - \log \frac{|\partial_y g|}{|\partial_x g|} \right]
\end{align*}
and Eq.~\ref{eq:GenericKernel.Thm.Umin} follows. 

The expression for unique specificity $\Upmp{X}$ in Eq.~\ref{eq:GenericKernel.Thm.Upm} follows easily from Def.~\ref{defn:redundant-info.Ipm} and Eq.~\ref{eq:redundant-info.Ipm:eqn:sublattice.Ux}.
To find the expression for unique ambiguity in Eq.~\ref{eq:GenericKernel.Thm.Upm}, we have that
\begin{align*}
\Upmm{X} &= \Ipmm(T;X) - \Ipmm(T;X,Y)\\
&= \mathbb{E} \log \frac{1}{p_X(X)} - \mathbb{E} \min(\frac{1}{p_{X|T}(X|T)},\frac{1}{p_{Y|T}(Y|T)}) \\
&= \mathbb{E} \left[ \mathds{1}_{C}  \log \frac{p_{Y|T}(Y|T)}{p_{X|T}(X|T)} \right] 
\end{align*}
where $C \subset E$ is the event
$$
C = \{ p_{X|T}(X|T) > p_{Y|T}(Y|T) \} \cap E.
$$
Note that $C$ may be empty or probability zero, in which case $\Upmm{X} = 0$.
Prop.~\ref{proposition:GenericKernel.CoV} gives us the almost sure equality of the random variables
\begin{align}
    \log \frac{p_{X|T}(X|T)}{p_{Y|T}(Y|T)} \oeq \log \frac{|\partial_x g (X,Y)|}{|\partial_y g (X,Y)|}.
\end{align}
Thus, our expression in Eq.~(\ref{eq:GenericKernel.Thm.Upm}) follows, as does our definition of the event $C$ stated in the theorem.

For the specific case where $(X,Y)_\rho \to T$, then we have that both marginals are the standard normal Gaussian density $p_X(z) = p_Y(z) = \frac{1}{\sqrt{2\pi}} e^{-z^2/2}$. Thus,
$$
\log \frac{p_Y(y)}{p_X(x)} = \frac{\log(e)}{2}(x^2 - y^2)
$$
and Eq.~(\ref{eq:GenericKernel.Thm.Umin.Gaussian}) follows.
The expression for $\Upmp{X}$ was demonstrated in Lemma~\ref{lemma:linearIntxn.PM.lattice.specificity} of Supp.~\ref{subsection:appendix.computeLinearPIDs}.
\end{proof}

\subsection{Asymptotic Conditional Independence of Predictors in the Continuous $\Imin$ and $\Ipm$ PIDs}
\label{subsection:GenericResults.asymptoticCI}

Throughout Section~\ref{section:SynergyNetworks}, the question was raised regarding the appropriate allocation of MI to PID information atoms, taking into consideration the analytic sensitivity of the response $T$ on the predictors.
In Experiment II (Sec.~\ref{subsection:network-simulations}), we went to the extreme case in which only one of the two predictors under examination, $Y_2$, had any impact on the response $T$, while the other (any spoke $X_j$ on $Y_2$) was conditionally independent of the response, given $Y_2$.
Intuitively, we argued that $\Imin$, by assigning the bulk of the total response-predictor mutual information to the unique information PID atom $U_Y$, was respecting conditional independence, while $\Ipm$ was not.
In Sec.~(\ref{subsection:experimentC}), we took a more careful look at the balance of information between variables.
By adjusting the kernel parameter $\alpha$, we continuously varied the relative dependency of the response on each variable (Fig.~\ref{fig:experiment3}).
We saw that the $\Imin$ PID respected these  changes in relative dependency, continuously adjusting its allocation with $\alpha$.
By contrast, the $\Ipm$ PID assigned consistent proportions of the total information to all PID atoms, irrespective of the changing balance of information between variables.
Taken together, our experiments highlight the difference in how each PID treats conditionally independent (and almost independent) predictors.

To explain this difference, we now develop the asymptotic consequences of our result from the previous subsection.
For a response $T$ to be conditionally independent of $X$, we must have that $|\partial_x g| \oeq 0$, which would violate Condition~\ref{condition}.
Thus, our results in this subsection serve to extend Theorem~\ref{theorem:GenericKernel.UniqueInfo} to the limit points of admissible kernels.
This will include univariate signals as in Sec.~\ref{subsection:network-simulations}.

\begin{corollary}
\label{corollary:GenericKernel.limits}
Let $\{ X,Y \to_{\rho} T_{(n)} \}_{n \in \mathbb{N}}$ be a sequence of noiseless Gaussian interactions, associated to the sequence of kernels $\{ g_{n} \}$.
Let us further assume that each $g_{n}$ satisfies Condition~\ref{condition}, on the same common, full measure open set $U \subset \mathcal{A}_{X,Y}$.
Assume also that $\Umin{X},\Umin{Y}, \Upm{X},\Upm{Y}$ are all finite for $X,Y \to_{\rho} T_{(1)}$.

Then the following monotonic limit of random variables
\begin{equation}
\label{eq:GenericKernel.limits.assumed_limit}
\frac{|\partial_x g_{(n)}|}{|\partial_y g_{(n)}|}(X,Y) \downarrow 0 \text{ as } n \to \infty \text{ almost surely}
\end{equation}
implies the following information limits:
\begin{align}
\label{eq:GenericKernel.limits.UminX}
\Umin{X}(T_{(n)}; X,Y) \to 0,\\
\label{eq:GenericKernel.limits.UpmX}
\Upm{X}(T_{(n)}; X,Y) \to - \infty, \\
\label{eq:GenericKernel.limits.UpmmX}
\Upmm{X}(T_{(n)}; X,Y) \to \infty,\\
\label{eq:GenericKernel.limits.UminY}
\Umin{Y}(T_{(n)}; X,Y) \to \infty,\\
\label{eq:GenericKernel.limits.UpmY}
\Upm{Y}(T_{(n)}; X,Y) \to c(\rho) ,\\
\label{eq:GenericKernel.limits.UpmmY}
\Upmm{Y}(T_{(n)}; X,Y) \to 0,
\intertext{where}
\nonumber c(\rho) = \frac{\log(e)}{2} \sqrt{1-\rho^2}.
\end{align}
\end{corollary}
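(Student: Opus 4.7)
The plan is to apply Theorem~\ref{theorem:GenericKernel.UniqueInfo} to express each PID atom as an expectation on the underlying space $\Omega$, then pass to the limit using monotone and dominated convergence. Throughout, set $L_n := \log(|\partial_y g_{(n)}|/|\partial_x g_{(n)}|)(X,Y)$, so that the hypothesis (\ref{eq:GenericKernel.limits.assumed_limit}) is precisely $L_n \uparrow +\infty$ almost surely, and in particular $L_n$ is monotonically increasing in $n$.

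The four $\Ipm$ limits follow quickly. The proof of Theorem~\ref{theorem:GenericKernel.UniqueInfo} yields $\Upmm{X}(T_{(n)};X,Y) = \mathbb{E}[(L_n)^+]$ and, symmetrically, $\Upmm{Y}(T_{(n)};X,Y) = \mathbb{E}[(-L_n)^+]$. The MCT applied to the monotonically increasing sequence $(L_n)^+$ gives (\ref{eq:GenericKernel.limits.UpmmX}); the DCT applied to the monotonically decreasing $(-L_n)^+ \downarrow 0$, dominated by $(-L_1)^+$ with finite expectation $\Upmm{Y}(T_{(1)};X,Y) < \infty$, gives (\ref{eq:GenericKernel.limits.UpmmY}). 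Since the specificities $\Upmp{X}, \Upmp{Y}$ depend only on the marginals of $X, Y$ and equal $c(\rho)$ in the normalized Gaussian case (by Lemma~\ref{lemma:linearIntxn.PM.lattice.specificity}), they are constant in $n$, and the limits (\ref{eq:GenericKernel.limits.UpmX}) and (\ref{eq:GenericKernel.limits.UpmY}) follow.

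The $\Imin$ statements are more delicate because their defining expectations condition on the changing $\sigma$-algebra $\sigma(T_{(n)})$. Set $Z_n := \log(p_Y(Y)/p_X(X)) - L_n$, so Theorem~\ref{theorem:GenericKernel.UniqueInfo} gives $\Umin{X}(T_{(n)};X,Y) = \mathbb{E}\bigl[(\mathbb{E}[Z_n \mid T_{(n)}])^+\bigr]$. Applying Jensen's inequality to the convex map $x \mapsto x^+$ pushes the positive part inside the conditional expectation to yield $\Umin{X}(T_{(n)};X,Y) \leq \mathbb{E}[(Z_n)^+]$. Now $(Z_n)^+$ decreases monotonically to $0$ almost surely (since $L_n \uparrow \infty$ dominates the $n$-independent first summand) and is bounded above by $(Z_1)^+ \leq \bigl|\tfrac{\log e}{2}(X^2-Y^2)\bigr| + |L_1|$. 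The first summand is Gaussian-integrable, and $\mathbb{E}|L_1| = \mathbb{E}(L_1)^+ + \mathbb{E}(-L_1)^+ = \Upmm{X}(T_{(1)};X,Y) + \Upmm{Y}(T_{(1)};X,Y) < \infty$ by the hypothesis at $n=1$. The DCT therefore forces $\Umin{X}(T_{(n)};X,Y) \to 0$, giving (\ref{eq:GenericKernel.limits.UminX}).

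For (\ref{eq:GenericKernel.limits.UminY}), the normalized Gaussian structure gives $\mathbb{E}[\log(p_Y(Y)/p_X(X))] = h(X) - h(Y) = 0$, so the identity $\Umin{Y}_n - \Umin{X}_n = \mathbb{E}[I_{Y_n}(T) - I_{X_n}(T)] = -\mathbb{E}[Z_n] = \mathbb{E}[L_n]$ holds. The MCT applied to $(L_n)^+$, together with integrability of $(L_1)^-$ (bounded by $\Upmm{Y}(T_{(1)};X,Y)$), gives $\mathbb{E}[L_n] \to +\infty$; combined with $\Umin{X}_n \to 0$ this yields $\Umin{Y}_n \to +\infty$. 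The main obstacle is the Jensen step bounding $\Umin{X}_n$: it converts a conditional expectation against the shifting $\sigma$-algebra $\sigma(T_{(n)})$ into an ordinary integral against the fixed measure $\mu_{X,Y}$, which is what makes a clean DCT argument across a sequence of changing target variables feasible. All remaining convergences are routine applications of MCT/DCT whose hypotheses follow from the finiteness assumption at $n=1$ together with the symmetry of the normalized Gaussian marginals.
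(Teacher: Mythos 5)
Your proof is correct, and although it rests on the same foundation as the paper's --- the closed forms of Theorem~\ref{theorem:GenericKernel.UniqueInfo} followed by monotone/dominated convergence --- your handling of the $\Imin$ atoms takes a genuinely different route. For the $\Ipm$ atoms the two arguments coincide: writing $L_n = \log\bigl(|\partial_y g_{(n)}|/|\partial_x g_{(n)}|\bigr)(X,Y)$, both proofs use $\Upmm{X} = \mathbb{E}[(L_n)^+] \uparrow \infty$ by MCT, $\Upmm{Y} = \mathbb{E}[(-L_n)^+] \downarrow 0$ by DCT with dominating function $(-L_1)^+$, and constancy of the specificities in $n$. For $\Umin{X}$, however, the paper works pointwise with the events $A_n = \{I_X(T_{(n)}) \geq I_Y(T_{(n)})\}$: it argues $\mathds{1}_{A_n} \to 0$ almost surely by invoking conditional monotone convergence for $\mathbb{E}(L_n \mid T_{(n)})$ and an asserted a.s.-uniform-in-$n$ bound on $\mathbb{E}(X^2 - Y^2 \mid T_{(n)})$, then applies DCT to $\mathds{1}_{A_n}(I_X - I_Y)$ dominated by the $n=1$ integrand. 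That conditional step is delicate, since both the integrand and the conditioning $\sigma$-algebra $\sigma(T_{(n)})$ change with $n$, whereas conditional MCT is stated for a fixed $\sigma$-algebra. Your key inequality $\Umin{X} = \mathbb{E}\bigl[(\mathbb{E}[Z_n \mid T_{(n)}])^+\bigr] \leq \mathbb{E}[(Z_n)^+]$ (with $Z_n = \log(p_Y(Y)/p_X(X)) - L_n$) collapses the conditional expectation into an unconditional integral over the fixed space $(\Omega,\mu)$, where monotone domination by the integrable $(Z_1)^+$ is available; this sidesteps the changing-$\sigma$-algebra issue entirely and is, if anything, tighter than the paper's own argument. (One cosmetic remark: since $Z_n$ is only quasi-integrable, it is cleanest to justify the step by monotonicity of conditional expectation, $Z_n \leq (Z_n)^+$ together with $\mathbb{E}[(Z_n)^+ \mid T_{(n)}] \geq 0$, rather than by name-checking Jensen.) Similarly, for $\Umin{Y}$ the paper applies MCT to the integrand over the complementary events, while you use the difference identity $\Umin{Y} - \Umin{X} = I(T_{(n)};Y) - I(T_{(n)};X) = \mathbb{E}[L_n]$, valid because the standard-normal marginals give $\mathbb{E}\log(p_Y(Y)/p_X(X)) = 0$ and because $\Umin{X}$ is finite, and then conclude from $\mathbb{E}[L_n] \to \infty$ and $\Umin{X} \to 0$. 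What the paper's route buys is pointwise structural insight --- the minimizing predictor inside $\Imin$ eventually switches on almost every outcome --- while your route buys rigor and economy by reducing everything to unconditional integrals, at the cost of leaning somewhat more heavily on the normalized-Gaussian symmetry for the $\Umin{Y}$ limit.
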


\begin{proof}
In this proof, we will use $A_n$, $B_n$, and $C_n$ to denote the events defined in the statement of Theorem~\ref{theorem:GenericKernel.UniqueInfo}, corresponding to each kernel $g_n$.

We first demonstrate Eq.~(\ref{eq:GenericKernel.limits.UminX}), using expression for $\Umin{X}$ from Theorem~\ref{theorem:GenericKernel.UniqueInfo}.
Consider the events
\begin{align*}
E &= \left\{ \omega \text{ s.t. } \frac{\partial_x g_n}{\partial_y g_n}(X,Y)(\omega) \downarrow 0 \right\} \\
\tilde{E} &= E \cap \pi_{X,Y}^{-1} U
\end{align*}
and note that $\tilde E$ has full measure.
We thus restrict ourselves to $\tilde E$.

We want to demonstrate that for a.e. $\omega$, there exists some $N(\omega)$ such that $\mathds{1}_{A_n}(\omega)=0$ for $n \geq N(\omega)$.
\begin{align*}
I_X(T_{(n)}) - I_Y(T_{(n)}) &= \mathbb{E} \left( c (X^2 - Y^2) - \log \frac{|\partial_y g_n|}{|\partial_x g_n|} \:\middle\vert\: T_{(n)} \right)\\
&= \mathbb{E} \left(  X^2 - Y^2 \:\middle\vert\: T_{(n)} \right)\\
&\hspace{5mm}- \mathbb{E} \left( \log \frac{|\partial_y g_n|}{|\partial_x g_n|} \:\middle\vert\: T_{(n)} \right)
\end{align*}
It is clear that, for a.e. $T(\omega)$, the first term is uniformly bounded for all $n$.
Since $\log\frac{|\partial_y g_n|}{|\partial_x g_n|} \uparrow \infty$ a.e., it follows by the monotone convergence theorem for conditional expectation that
$$
\mathbb{E} \left( \log\frac{|\partial_y g_n|}{|\partial_x g_n|} \:\middle\vert\: T_{(n)} \right) \uparrow \infty \text{ a.s.}
$$
Thus, for almost every $\omega$, there is some $N_1(\omega)$ such that $I_X(T_{(n)}) \leq I_Y(T_{(n)})$ for $n \geq N_1(\omega)$, i.e. $\mathds{1}_{A_n}(\omega)=0.$
We may now apply the Dominated Convergence Theorem to the integral sequence (as it is monotonically decreasing a.e., by our assumption, it is bounded above by the integrable $\Umin{X}(T_{(1)}; X,Y)$)
\begin{align*}
\Umin{X}(T_{(n)}; X,Y) &= \mathbb{E} \left[ \mathds{1}_{A_n} (I_X(T_{(n)})  - I_Y(T_{(n)}) )
\right].
\end{align*}
Thus, we have the desired convergence $\Umin{X} \to 0$ as $n \to \infty$.

We demonstrate (\ref{eq:GenericKernel.limits.UminY}) similarly.
The expression for $\Umin{Y}$ given by Theorem~\ref{theorem:GenericKernel.UniqueInfo}, Eq.~(\ref{eq:GenericKernel.Thm.Umin.Gaussian}), can be written
\begin{equation*}
    \Umin{Y} =
\mathbb{E} \left[ \mathds{1}_{A_n} \left( \frac{\log(e)}{2} (Y^2 - X^2) + \log \frac{|\partial_y g| }{|\partial_x g| } \right) \right]
\end{equation*}
A similar argument to before may be used to demonstrate that $\mathds{1}_{A_n} \uparrow 1$ a.s. and thus
$$
\mathds{1}_{A_n} \left( \frac{\log(e)}{2} (Y^2 - X^2) + \log \frac{|\partial_y g| }{|\partial_x g| } \right) \uparrow \infty \text{ a.s.}
$$
It follows from the usual MCT that $\Umin{Y} \to \infty$ as $n \to \infty$.

We will now demonstrate Eqs.~(\ref{eq:GenericKernel.limits.UpmmX}) \& (\ref{eq:GenericKernel.limits.UpmX}).
From Theorem~\ref{theorem:GenericKernel.UniqueInfo}, Eq.~(\ref{eq:GenericKernel.Thm.Upm.Gaussian}), we have
\begin{equation}
    \Upm{X}(T_{(n)};X,Y)
        = \frac{\log(e)}{2} \sqrt{1-\rho^2} -
        \underbrace{\mathbb{E} \left[
        \mathds{1}_{C_n} \log \frac{|\partial_y g_{(n)}|}{|\partial_x g_{(n)}|}
         \right]}_{\Upmm{X}}
\end{equation}
Consider again the limit in Eq.~(\ref{eq:GenericKernel.limits.assumed_limit}), which defines our event $E$.
Recall that, by definition of the set $C_{(n)}$,
$$
\mathds{1}_{C_n} \log \frac{|\partial_y g_{(n)}|}{|\partial_x g_{(n)}|} = \left( \log \frac{|\partial_y g_{(n)}|}{|\partial_x g_{(n)}|} \right)^{+} .
$$
Thus, the limit from assumption implies that, for every $\omega \in \tilde{E}$,
$$
\left[
        \mathds{1}_{C_n} \log \frac{|\partial_y g_{(n)}|}{|\partial_x g_{(n)}|}
\right](\omega) \uparrow \infty .
$$
Since this random variable is non-negative (by the indicator), it follows from the Monotone Convergence Theorem that $\Upmm{X} \to \infty$, i.e. Eq.~(\ref{eq:GenericKernel.limits.UpmmX}).
Hence, Eq.~(\ref{eq:GenericKernel.limits.UpmX}) as well.

To demonstrate (\ref{eq:GenericKernel.limits.UpmmY}), consider the other direction:
\begin{equation}
\label{eq:GenericKernel.limits.proof.Upmy}
    \Upm{Y}(T_{(n)};X,Y)
        = \frac{\log(e)}{2} \sqrt{1-\rho^2} -
        \underbrace{\mathbb{E} \left[
        \mathds{1}_{C_n} \log \frac{|\partial_x g_{(n)}|}{|\partial_y g_{(n)}|}
         \right]}_{\Upmm{Y}}
\end{equation}
We have that for every $\omega \in \tilde{E}\cap C$, there is some $N_3(\omega)$ such that for
$$
\log \frac{|\partial_y g_{(n)}|}{|\partial_x g_{(n)}|}(\omega) < 0 \text{ for } n \geq N_3(\omega)
$$
and so $\mathds{1}_{C_n} = 0$.
Thus, using our assumption on $\Upm{Y}$, the DCT gives us (\ref{eq:GenericKernel.limits.UpmmY}), and (\ref{eq:GenericKernel.limits.UpmY}) follows from (\ref{eq:GenericKernel.limits.proof.Upmy}).
\end{proof}

With this result, we are now able to better explain the differences between the $\Imin$ and $\Ipm$ PID explored in Experiment III of Subsec.~\ref{subsection:experimentC}.
Recall that, for small $\alpha$, the $\Ipm$ PID assigned most of the joint predictor-interaction information $I(T;X,Y)$ to unique information $U_Y$ (Figs.~\ref{fig:experiment3},\ref{fig:supp.expIII}).
The $\Ipm$ PID, by contrast, assigned more information to redundancy and synergy, accounted for by negative $U_X$ (Fig.~\ref{fig:supp.expIII}).
Such a (continuous) sequence of sigmoidal kernels, parametrized $g_\alpha(x,y)$, corresponds to a (countable) sequence $g_n$, as in the statement of Corollary~\ref{corollary:GenericKernel.limits}.
Observe that
\begin{equation}
\frac{|\partial_x g_{\alpha}|}{|\partial_y g_{\alpha}|}(X,Y) 
= 
\frac{Y e^{\alpha - X}}{1+e^{\alpha - X}} \downarrow 0 \text{ as } \alpha \to -\infty \text{ almost surely}
\end{equation}
Thus, we may apply Cor.~\ref{corollary:GenericKernel.limits} to explain the differences between the $\Imin$ and $\Ipm$ PID observed in Sec.~\ref{section:SynergyNetworks}.
We will do so presently in Subsec.~\ref{subsection:GenericResults.applyCor2SigmoidalKernel}.

First, in order to better grasp the intuitive consequences of Theorem~\ref{theorem:GenericKernel.UniqueInfo} and Cor.~\ref{corollary:GenericKernel.limits}, we will begin with a simpler NFBI for which we can explicitly compute the full $\Imin$ and $\Ipm$ PIDs.

\subsection{Example: Continuous $\Imin$ and $\Ipm$ PIDs of Linear Interaction}
\label{subsection:GenericResults.linearPID}

We now present a fully worked out example of the PIDs of a continuous interaction, namely the simple linear interaction kernel that can be solved analytically without much difficulty (Theorem~\ref{theorem:linearIntxn}).
Moreover, this example will serve to clearly illustrate the implications of Corollary~\ref{corollary:GenericKernel.limits} for the difference between the $\Imin$ and $\Ipm$ PIDs.
In this subsection, we will take a high-level view.
In Supp.~\ref{subsection:appendix.computeLinearPIDs}, we provide a direct computation of these PIDs from the definitions in Sec.~\ref{section:preliminaries}.

These computations will prove quite instructive in elucidating the difference between these two measures --- that is, between $\Imin$ and $\Ipm$.
As noted in Sec.~\ref{section:previousWork}, our results in this section for the $\Imin$ PID overlap with those in \cite{Barrett2015}, although the approach is different.
In the following example, we will see how the continuous $\Imin$ induces a fully non-negative PID $\dmin$, while $\Ipm$ allows for negative atoms.
Crucially, the limiting properties of these PIDs will analytically support the specificity (resp., or lack thereof, resp.) in $\Smin$ (resp., of $\Spm$) that we observed in our experiments in Sec.~\ref{section:SynergyNetworks}.

Consider the following noiseless Gaussian interaction $X,Y \to_\rho T$:
\begin{align}
\label{eq:defn:linearIntxn}
    T &= aX + bY &\text{ where } 0 < a < b
\end{align}
It immediately follows that
\begin{align}
\label{eq:linearIntxn.TGaussian}
T &\sim N(0, \sigma_T^2)\\
\sigma_T^2 &= a^2 + b^2 + 2 \rho a b
\end{align}
Since this interaction is noiseless, we know that $I(T;X,Y) = \infty$ (Corollary \ref{corollary:infiniteMI.sumOfDiracs}). We affirm that $I(T;X)$ and $I(T;Y)$ are finite, however.
\begin{proposition}
\label{prop:linearIntxn.MI}
Let $X,Y \to_\rho T$ be the linear interaction as in Eq.~(\ref{eq:defn:linearIntxn}). Then the pairwise mutual informations between $X,Y,$ and $T$ are
\begin{align}
    \label{eq:linearIntxn.MI.XY}
    I(X;Y) &= - \log \sqrt{1-\rho^2}\\
    \label{eq:linearIntxn.MI.XT}
    I(T;X) &= - \log b + \log \sigma_T + I(X;Y) \\
    \label{eq:linearIntxn.MI.YT}
     I(T;Y) &= - \log a + \log \sigma_T + I(X;Y)
\end{align}
\end{proposition}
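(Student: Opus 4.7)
The plan is to exploit the fact that $(X,Y,T)$ is jointly Gaussian, since $T=aX+bY$ is a linear combination of jointly Gaussian variables. I can then reduce each pairwise MI to the one-parameter bivariate Gaussian formula
\[
I(U;V) \;=\; -\tfrac{1}{2}\log\!\bigl(1-\rho_{UV}^{2}\bigr),
\]
which holds for any zero-mean jointly Gaussian pair $(U,V)$, and which itself follows immediately from the standard differential-entropy formulas $h(U)=\tfrac{1}{2}\log(2\pi e\sigma_U^2)$ and $h(U,V)=\tfrac{1}{2}\log\bigl((2\pi e)^2 \det\Sigma_{UV}\bigr)$ together with the identity $I(U;V)=h(U)+h(V)-h(U,V)$.

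First, I would establish Eq.~(\ref{eq:linearIntxn.MI.XY}) as a warm-up: since $\rho_{XY}=\rho$ by hypothesis, the bivariate Gaussian MI formula gives $I(X;Y)=-\tfrac{1}{2}\log(1-\rho^2)=-\log\sqrt{1-\rho^2}$ directly. Next, for Eq.~(\ref{eq:linearIntxn.MI.XT}), the key computation is the correlation $\rho_{TX}$. Using bilinearity of covariance and $\mathrm{Var}(X)=1$, $\mathrm{Cov}(X,Y)=\rho$, I find $\mathrm{Cov}(T,X) = a + b\rho$, and by hypothesis $\mathrm{Var}(T)=\sigma_T^{2}=a^{2}+b^{2}+2\rho ab$, so
\[
1-\rho_{TX}^{2} \;=\; \frac{\sigma_T^{2} - (a+b\rho)^{2}}{\sigma_T^{2}} \;=\; \frac{b^{2}(1-\rho^{2})}{\sigma_T^{2}},
\]
where the last equality is an algebraic simplification in which the $a^{2}$ and $2\rho ab$ terms cancel. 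Applying the Gaussian MI formula and splitting the logarithm yields $I(T;X)=-\log b+\log\sigma_T-\tfrac{1}{2}\log(1-\rho^{2})$, and substituting the expression for $I(X;Y)$ already established gives Eq.~(\ref{eq:linearIntxn.MI.XT}).

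Finally, Eq.~(\ref{eq:linearIntxn.MI.YT}) follows by the evident symmetry of the statement under the simultaneous interchange $X\leftrightarrow Y$, $a\leftrightarrow b$: the model in Eq.~(\ref{eq:defn:linearIntxn}), the normalized Gaussian assumption, and the expression for $\sigma_T^{2}$ are all invariant under this swap, so the derivation of $I(T;X)$ transfers verbatim to $I(T;Y)$ with $a$ and $b$ exchanged.

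There is no real obstacle here; the only step requiring care is the algebraic cancellation showing $\sigma_T^{2}-(a+b\rho)^{2}=b^{2}(1-\rho^{2})$, which is where the strict positivity $0<a<b$ of the coefficients plays no role (the identity is formal) but where one must be sure to use the hypothesized form of $\sigma_T^{2}$. I should note in passing that the argument implicitly uses Def.~\ref{defn:MI} (via entropic differences) rather than the supremum definition Def.~\ref{defn:MI.general}, which is justified because the pair $(T,X)$ is jointly continuous: the joint density exists whenever $b\neq 0$, which is guaranteed by the hypothesis $0<a<b$.
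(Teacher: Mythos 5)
Your proposal is correct and follows essentially the same route as the paper's proof: both reduce each pair to a jointly Gaussian bivariate system, compute the correlation $\rho_{TX}=(a+\rho b)/\sigma_T$ (you via bilinearity of covariance, the paper via its linear-transformation rule for Gaussian vectors, which is the same computation), and then apply $I=-\tfrac{1}{2}\log(1-\rho_{TX}^2)$, with $I(T;Y)$ obtained by the $a\leftrightarrow b$ symmetry. The only cosmetic difference is that you derive the bivariate Gaussian MI formula from entropy identities where the paper cites it as a standard result.
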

\begin{proof}
Eq.~(\ref{eq:linearIntxn.MI.XY}) is a standard result for jointly Gaussian variables, e.g. \cite[Ex.~8.5.1]{CoverThomas2005}.

Consider $(X,Y) \to (X,T)$ as defined by (\ref{eq:defn:linearIntxn}) as an invertible linear interaction from one bivariate Gaussian vector to another.
Per Rule~\ref{rule:Gaussian.linearTransformation} in the Appendix, we have that
\begin{align*}
\bm{\mu}_{X,T} &=  (0,0), \\
\Sigma_{X,T}
&=
\begin{bmatrix}
1 & a + \rho b \\
a + \rho b & \sigma_T^2
\end{bmatrix},\\
\sigma_T^2 &= a^2 + b^2 + 2 \rho a b .
\end{align*}
Moreover, we have the correlation coefficient:
\begin{align}
\label{eq:linearIntxn.rho.XT}
\rho_{X,T} &= \frac{a + \rho b}{\sigma_T} .
\end{align}
We use the same result for the MI of jointly Gaussian variables as before:
\begin{equation}
    \label{eq:MI.2Gaussians}
    I(X; T) = - \frac{1}{2} \log (1 - \rho_{X,T}^2).
\end{equation}
Using Eq.~(\ref{eq:linearIntxn.rho.XT}) and expanding, we arrive at Eq.~(\ref{eq:linearIntxn.MI.XT}).
The argument for Eq.~(\ref{eq:linearIntxn.MI.YT}) is identical, with $a$ and $b$ permuted in the expressions for $\Sigma_{Y,T}$ and $\rho_{Y,T}$.
\end{proof}

\begin{theorem}[PIDs $\dmin$ and $\dpm$ for Linear Interaction]
\label{theorem:linearIntxn}
Let $X,Y \to_\rho T$ be a linear interaction as in Eq.~(\ref{eq:defn:linearIntxn}). Then, using $I_{\cap}^{\min}$ as defined in Def.~\ref{defn:redundant-info.Imin}, we have the following PID $\dmin$:
\begin{subequations}
\begin{align}
    \label{eq:linearIntxn.WB.lattice.R}
    \Rmin &= - \log b + \log \sigma_T + I(X;Y) \\
     \nonumber &= \log \frac{\sqrt{a^2 + b^2 + 2 a b \rho}}{b \sqrt{1-\rho^2}} \\
    \label{eq:linearIntxn.WB.lattice.Ux}
    \Umin{X} &= 0, \\
    \label{eq:linearIntxn.WB.lattice.Uy}
    \Umin{Y} &= \log \frac{b}{a}, \\
    \label{eq:linearIntxn.WB.lattice.S}
    \Smin &= \infty,\\
    \intertext{where}
    \nonumber I(X;Y) &= - \log \sqrt{1-\rho^2}.
\end{align}
\end{subequations}
Using $I_{\PM}$ as defined in Def.~\ref{defn:redundant-info.Ipm}, we have the following PID $\dpm$:
\begin{subequations}
\begin{align}
    \label{eq:linearIntxn.PM.lattice.R}
    \Rpm
    &= -\log a + \log \sigma_T +I(X;Y) - \frac{1}{\pi} \sqrt{1-\rho^2} \\
    \label{eq:linearIntxn.PM.lattice.Ux}
    \Upm{X} &= \frac{1}{\pi} \sqrt{1-\rho^2} - \log \left( \frac{b}{a} \right) \\
    \label{eq:linearIntxn.PM.lattice.Uy}
    \Upm{Y} &= \frac{1}{\pi} \sqrt{1-\rho^2} \\
    \label{eq:linearIntxn.PM.lattice.S}
    \Spm &= \infty
\end{align}
\end{subequations}
\end{theorem}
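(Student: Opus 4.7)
The plan is to deduce Theorem~\ref{theorem:linearIntxn} as a direct application of Theorem~\ref{theorem:GenericKernel.UniqueInfo} to $g(x,y) = ax+by$. The partials are constant, $\partial_x g \equiv a$ and $\partial_y g \equiv b$, so Condition~\ref{condition} holds on $U = \mathbb{R}^2$, and crucially the ratio $|\partial_y g|/|\partial_x g| \equiv b/a > 1$ is deterministic; this trivializes the derivative-factor expectations in Theorem~\ref{theorem:GenericKernel.UniqueInfo}. The pairwise mutual informations are supplied by Proposition~\ref{prop:linearIntxn.MI}, and $I(T;X,Y) = \infty$ by Corollary~\ref{corollary:infiniteMI.sumOfDiracs}, so both $\Smin = \Spm = \infty$ will fall out once the other three atoms are shown finite via Eqs.~\ref{eq:PID_1}--\ref{eq:PID_3}.

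For the $\Imin$ atoms, the central task is to identify the event $A = \{I_X(T) \geq I_Y(T)\}$. Since $(X,T)$ and $(Y,T)$ are jointly Gaussian with correlation coefficients $\rho_{XT} = (a+\rho b)/\sigma_T$ and $\rho_{YT} = (b+\rho a)/\sigma_T$, I would write the specific information in closed form,
\[
I_X(t) = -\tfrac{1}{2}\log(1-\rho_{XT}^2) + \tfrac{\rho_{XT}^2}{2}\bigl(t^2/\sigma_T^2 - 1\bigr),
\]
with an analogous expression for $I_Y(t)$. Using the identity $\rho_{YT}^2 - \rho_{XT}^2 = (b^2-a^2)(1-\rho^2)/\sigma_T^2 > 0$ together with the elementary convexity inequality $\log(1-u) - \log(1-v) \geq v - u$ for $0 \leq u < v < 1$, I would show that $I_Y(t) > I_X(t)$ pointwise. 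Hence $\mathbb{P}(A) = 0$, so Theorem~\ref{theorem:GenericKernel.UniqueInfo} yields $\Umin{X} = 0$; the symmetric computation on $A^c$ (full measure) gives $\Umin{Y} = \mathbb{E}[\tfrac{\log e}{2}(Y^2 - X^2)] + \log(b/a) = \log(b/a)$, since $X$ and $Y$ share second moments. Then $\Rmin = I(T;X)$ follows via Proposition~\ref{prop:linearIntxn.MI}.

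For the $\Ipm$ atoms, the event $C = \{|\partial_x g| \leq |\partial_y g|\}$ has full measure, so the ambiguity in Theorem~\ref{theorem:GenericKernel.UniqueInfo} collapses to $(\Upm{X})^- = \log(b/a)$. With standard normal marginals, $B = \{p_X(X) \leq p_Y(Y)\} = \{X^2 \geq Y^2\}$, and the $X \leftrightarrow Y$ symmetry of the joint law reduces the specificity to $(\Upm{X})^+ = \tfrac{1}{4}\mathbb{E}|X^2 - Y^2|$. Factoring $X^2 - Y^2 = (X-Y)(X+Y)$ and observing that $X - Y \sim N(0, 2(1-\rho))$ and $X + Y \sim N(0, 2(1+\rho))$ are uncorrelated and hence independent Gaussians, the expectation decouples as $\mathbb{E}|X-Y|\cdot\mathbb{E}|X+Y| = 4\sqrt{1-\rho^2}/\pi$, giving $\Upm{X} = \sqrt{1-\rho^2}/\pi - \log(b/a)$ (in nats). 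The analogous computation with $X$ and $Y$ swapped has an empty $C' = \{|\partial_y g| \leq |\partial_x g|\}$ and yields $\Upm{Y} = \sqrt{1-\rho^2}/\pi$. Finally $\Rpm = I(T;Y) - \Upm{Y}$, and combining with Proposition~\ref{prop:linearIntxn.MI} recovers the stated formula.

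The main obstacle is the pointwise inequality $I_Y(t) > I_X(t)$, which is essential for the $\Imin$ computation. The averaged statement $I(T;Y) > I(T;X)$ is immediate from Proposition~\ref{prop:linearIntxn.MI}, but without the pointwise version one could only conclude $\Rmin \leq I(T;X)$. Since $I_Y - I_X$ is affine in $t^2$ with positive slope (as $\rho_{YT}^2 > \rho_{XT}^2$), nonnegativity reduces to checking the minimum at $t = 0$, which is exactly the convexity inequality above. Once this is in hand, every remaining step is a routine Gaussian moment calculation.
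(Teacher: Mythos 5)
Your proposal is correct, but its architecture differs from the paper's own proof, which is a self-contained computation in the supplement: the paper first derives the specific-information functions $I_X(t)$, $I_Y(t)$ in closed form (Lemma~\ref{lemma:linearIntxn.WB.specificInfo}, via Rules~\ref{rule:conditionalGaussian} and~\ref{rule:KLdivergence.twoGaussians}), proves the pointwise inequality $I_Y(t)>I_X(t)$ by substituting $b=\gamma a$ and showing the resulting function of $\gamma$ is increasing through an explicit derivative computation (Computation~\ref{computation.linearIntxn.WB.specificInfo.differenceFunctionIncreasing}), and then computes the \emph{entire} specificity and ambiguity sublattices $\mathfrak{d}^{+}_{\PM}$, $\mathfrak{d}^{-}_{\PM}$ directly from Def.~\ref{defn:redundant-info.Ipm} (Lemmas~\ref{lemma:linearIntxn.PM.lattice.specificity} and~\ref{lemma:linearIntxn.PM.lattice.ambiguity}). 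You instead obtain the theorem as a corollary of Theorem~\ref{theorem:GenericKernel.UniqueInfo} applied to the constant-derivative kernel, recover the redundancies and synergies from the consistency equations (\ref{eq:PID_1})--(\ref{eq:PID_3}) rather than computing them, and replace the paper's monotonicity computation with the cleaner concavity bound $\log(1-u)-\log(1-v)\geq \tfrac{v-u}{1-u}\geq v-u$ (valid since $0\le u<v<1$), checked only at the minimum $t=0$ of the affine-in-$t^2$ difference; your closed forms for $I_X(t)$, $I_Y(t)$ are algebraically identical to the paper's, just parametrized by $\rho_{XT},\rho_{YT}$. This route is legitimate and non-circular, since the proof of Theorem~\ref{theorem:GenericKernel.UniqueInfo} relies on Lemma~\ref{lemma:linearIntxn.PM.lattice.specificity} (which is kernel-independent) but not on Theorem~\ref{theorem:linearIntxn} itself. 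What each approach buys: yours is shorter and exhibits the linear case as a genuine instance of the general machinery, while the paper's lemmas additionally deliver every atom of the $\pm$ sublattices (e.g.\ $R^{\pm}$, $U^{\pm}$, $S^{-}=-\infty$), which have independent interest. One further point in your favor: you wisely did not cite the Gaussian specialization (\ref{eq:GenericKernel.Thm.Upm.Gaussian}) verbatim, whose coefficient $\tfrac{\log(e)}{2}$ is inconsistent with the $\tfrac{1}{\pi}$ appearing in the theorem you are proving; by recomputing $(\Upm{X})^{+}$ from the event-$B$ formula in (\ref{eq:GenericKernel.Thm.Upm}) as $\tfrac14\mathbb{E}|X^2-Y^2|=\tfrac{1}{\pi}\sqrt{1-\rho^2}$ (using independence of $X-Y$ and $X+Y$, exactly the device in the paper's Lemma~\ref{lemma:linearIntxn.PM.lattice.specificity}), you land on the value consistent with both the statement and the paper's supplementary computation.
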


We prove this Theorem in Supp.~\ref{subsection:appendix.computeLinearPIDs}.
Moreover, Eqs.~(\ref{eq:linearIntxn.WB.lattice.R}-\ref{eq:linearIntxn.WB.lattice.S}) are specific limiting case of the work in \cite{Barrett2015}.
There are two crucial differences that are immediately identifiable, however.
First, $\Rmin$ contains the term $-\log b$, as part of the difference $\log \sigma_T - \log b$. 
This difference shrinks to zero if $b$ grows relatively faster than $a$, as it does also when $a$ tends to zero faster than $b$.
In both cases limiting cases, $\Rmin$ reduces to $I(X,Y)$.
By contrast, for $a \to 0^+$ (faster than $b$), $\Rpm$ becomes arbitrarily large, on the order of $\log 1/a$.

As $a \to 0^+$, we approach an inadmissible extreme of the admissible kernels for Condition~\ref{condition} and our generic results, in which $X$ is becoming conditionally independent of $T$ given $Y$.
This is analagous to $\alpha \to -\infty$ for the non-linear sigmoidal kernel (\ref{eq:sigSw}) that we used for our simulations.
We may compute the $\Imin$ and $\Ipm$ PIDs of this limit by taking the limits of the expressions in Theorem~\ref{theorem:linearIntxn}.

\begin{corollary}
\label{corollary:linearIntxn.limits}
Let $\{X,Y \to_\rho T\}_a$ be the collection of linear interactions as in Eq~(\ref{eq:defn:linearIntxn}), where $b>0$ and $0 < \rho < 1$ are fixed and only $a\in(0,b)$ varies. 
Then as $a \to 0^+$, we have the following limits:
{\begin{subequations}
\label{eq:linearIntxn.limits.WB}
\begin{align}
\label{subeq:linearIntxn.limits.WB.R}
\Rmin &\to  I(X;Y) \\
\notag&= - \log \sqrt{1 - \rho^2}
\hspace{0.3in }\\
\label{subeq:linearIntxn.limits.WB.Ux}
\Umin{X} &= 0
\hspace{0.3in }\\
\label{subeq:linearIntxn.limits.WB.Uy}
\Umin{Y} &\to \infty
     \hspace{0.3in }\\
\label{subeq:linearIntxn.limits.WB.S}     
\Smin &= \infty 
    \hspace{0.3in }
\end{align}
\end{subequations}}
{\begin{subequations}
\label{eq:linearIntxn.limits.PM}
\begin{align}
\label{subeq:linearIntxn.limits.PM.R}
\Rpm & \to \infty \\
\label{subeq:linearIntxn.limits.PM.Ux}
\Upm{X} &\to - \infty \\
\label{subeq:linearIntxn.limits.PM.Uy}
\Upm{Y} &= \frac{1}{\pi} \sqrt{1-\rho^2} \\
\label{subeq:linearIntxn.limits.PM.S}
\Spm &= \infty
\end{align}
\end{subequations}}
where equality denotes a fixed value for all $a \in (0,b)$ (and hence trivial limit).
Moreover, the following ratios have limits:

{\begin{subequations}
\label{eq:linearIntxn.limitRatios.WB}
\begin{align}
\label{subeq:linearIntxn.limitRatios.WB.Uy}
\frac{\Umin{Y}}{I(T;Y)} &\to 1 \\
\label{subeq:linearIntxn.limitRatios.WB.R}
\frac{\Rmin}{I(T;Y)} &\to 0
\end{align}
\end{subequations}}
and
{\begin{subequations}
\label{eq:linearIntxn.limitRatios.PM}
\begin{align}
\label{subeq:linearIntxn.limitRatios.PM.Ux}
\frac{\Upm{X}}{I(T;Y)},\frac{\Upm{X}}{\Rpm} &\to -1\\
\label{subeq:linearIntxn.limitRatios.PM.R}
\frac{\Rpm}{I(T;Y)} &\to 1
\end{align}
\end{subequations}}
Combined, these provide
\begin{align}
\frac{\Upm{X}}{\Umin{Y}} &\to -1\\
\frac{\Umin{Y}}{\Rpm} &\to 1
\end{align}
\end{corollary}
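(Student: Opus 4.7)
The plan is to carry out a direct limit computation, substituting the closed-form expressions from Theorem~\ref{theorem:linearIntxn} and Proposition~\ref{prop:linearIntxn.MI} and letting $a \to 0^+$ with $b > 0$ and $\rho \in (0,1)$ fixed. The key structural observation is that $\sigma_T^2 = a^2 + b^2 + 2ab\rho \to b^2$ continuously, so $\log \sigma_T \to \log b$; thus every bounded term in our formulas either cancels against $-\log b$ or remains $O(1)$ in the limit. The only divergent contributions come from $\log(b/a)$ (which appears in $\Umin{Y}$ and, with opposite sign, in $\Upm{X}$) and from $-\log a$ (which appears in $\Rpm$ and in $I(T;Y)$). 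With this asymptotic bookkeeping in hand, every limit in the statement reduces to a short verification.

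For the $\Imin$ atoms, $\Umin{X} = 0$ and $\Smin = \infty$ are $a$-independent. For $\Rmin = -\log b + \log \sigma_T + I(X;Y)$, the continuity of $\sigma_T$ gives $\Rmin \to I(X;Y) = -\log\sqrt{1-\rho^2}$, which yields (\ref{subeq:linearIntxn.limits.WB.R}). Since $\Umin{Y} = \log(b/a) \to \infty$, we obtain (\ref{subeq:linearIntxn.limits.WB.Uy}). For the $\Ipm$ atoms, $\Upm{Y} = \frac{1}{\pi}\sqrt{1-\rho^2}$ and $\Spm = \infty$ are constant in $a$. The atom $\Upm{X} = \frac{1}{\pi}\sqrt{1-\rho^2} - \log(b/a)$ tends to $-\infty$ via the divergent logarithm, while $\Rpm = -\log a + \log\sigma_T + I(X;Y) - \frac{1}{\pi}\sqrt{1-\rho^2}$ tends to $+\infty$ via the $-\log a$ term.

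For the ratios in (\ref{eq:linearIntxn.limitRatios.WB}) and (\ref{eq:linearIntxn.limitRatios.PM}), I would express each quantity in leading-order form and then pass to the limit. Using $I(T;Y) = -\log a + O(1)$, together with $\Umin{Y} = -\log a + \log b$, $\Rpm = -\log a + O(1)$, and $\Upm{X} = \log a + O(1)$, I obtain $\Umin{Y}/I(T;Y) \to 1$, $\Rpm/I(T;Y) \to 1$, and $\Upm{X}/I(T;Y) \to -1$; the ratio $\Rmin/I(T;Y)$ vanishes since the numerator is $O(1)$. The remaining ratios $\Upm{X}/\Rpm$, $\Upm{X}/\Umin{Y}$, and $\Umin{Y}/\Rpm$ follow from the same leading-order arithmetic, each being a quotient of the form $(\pm \log a + O(1))/(\log a + O(1))$.

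The proof is entirely computational; there are no interchanges of limits or delicate convergence issues to navigate. The only real thing to watch is the sign bookkeeping, specifically the identity that $\log(b/a)$ and $-\log a$ differ only by the bounded constant $\log b$, so that $\Umin{Y}$, $\Rpm$, and $I(T;Y)$ share the same leading-order asymptotic $-\log a$, while $\Upm{X}$ carries the opposite sign. That single observation essentially delivers all the ratio limits simultaneously, which is why I do not expect any of the steps to be a genuine obstacle.
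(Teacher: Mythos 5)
Your proposal is correct and follows essentially the same route as the paper's proof: substitute the closed-form expressions from Theorem~\ref{theorem:linearIntxn} and Proposition~\ref{prop:linearIntxn.MI}, note that $\sigma_T \to b$, and then compute the ratio limits by writing each divergent quantity as $\pm\log a$ plus a term that stays bounded as $a \to 0^+$. This is precisely the paper's ``disentangling the logarithms'' argument, so there is nothing to add.
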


\begin{proof}
The limits are readily computable with calculus, given the expressions in Theorem~\ref{theorem:linearIntxn}.
The ratio limits are most easily computed via disentangling the logarithms:
\begin{align*}
    \Upm{X} &= \log a + c_1(b,\rho)\\
    \Umin{Y} &= - \log a + c_2(b)\\
    \Rpm &=  - \log a + c_3(a,b,\rho)\\
    I(T;Y) &= - \log a + c_4(a,b,\rho)
\end{align*}
where the $c_i$ are either fixed terms of $b$ and $\rho$ or functions of $a,b,\rho$ that converge to a finite term as $a\to 0^+$.
\end{proof}

We will now interpret these limits according to our understanding.
We begin with the $\Imin$ PID.
As $a\to 0^+$, the redundant information $\Rmin$ approaches the total shared information between $X$ and $Y$ (\ref{subeq:linearIntxn.limits.WB.R}).
Since $T$ becomes asymptoptically determined by $Y$ (that is, $I(T:Y) \to \infty$) as $a \to 0^+$, this makes intuitive sense: 
any shared uncertainty between $X$ and $Y$ becomes $X$'s shared uncertainty with any variable exactly determined by $Y$.
It is also intuitive that $\Umin{Y} \to \infty$ (\ref{subeq:linearIntxn.limits.WB.Uy}).
Since $|\rho| < 1$, the mutual uncertainty $X$ shares with $Y$ (and thus $T$) is not all of either's uncertainty: $X$ and $Y$ share finite, rather than perfect, information of each other.
The gap from finite to perfect information is infinite, formally rendered $I(T;Y) - \Rmin(T;X,Y) = \infty$, and so $\Umin{Y} = \infty$ (\ref{subeq:linearIntxn.limits.WB.Uy}), i.e. if $Y$ has perfect information regarding $T$, then it must be unique to $Y$.
Less intuitive, on the other hand, is that $\Umin{X} = 0$ even when $a>0$ (\ref{subeq:linearIntxn.limits.WB.Ux}).
Though we might desire that $\Umin{X} \to 0$ as $a \to 0^+$, so long as $a>0$ and $|\rho| < 1$, there is some portion of uncertainty that $X$ shares with $T$ and not with $Y$.
This is another manifestation of the common critique of $\Imin$, which is that it does not distinguish the same information contained in the predictors from the same amount of information given by the predictors regarding a given outcome $T=t$. 
Since the expected total information that $X$ gives about any $T=t$ will always be less than that of $Y$ (as $a<b$), $\Imin$ will assign the full uncertainty that $X$ shares with $T$ to redundant information.

Let us now consider the $\Ipm$ PID $\dpm$.
To contextualize the limits of unique information, we note that this PID pseudobounds\footnote{In the discrete case, redundancy and ambiguity atoms are non-negative, and $\Upm{X}$ is indeed bound above by $\Upmp{X}$. This is no longer true in the continuous case, e.g. consider $|\rho| \to 1$ in Lemmas~\ref{lemma:linearIntxn.PM.lattice.specificity} \& \ref{lemma:linearIntxn.PM.lattice.ambiguity}.} $\Upm{X}$ and $\Upm{Y}$ above by the unique specificities $\Upmp{X}$ and $\Upmp{Y}$.
Since $Y$ has zero unique ambiguity, $\Upmm{Y} = 0$, we have that $\Upm{Y} = \Upmp{Y}$, a fixed quantity depending only upon the marginal distributions $p_X$ and $p_Y$.
We see that, as $a \to 0^+$ and $I(T;Y) \to \infty$, the $\Ipm$ PID treats the perfect information that $Y$ holds regarding $T$ as shared between the predictors (i.e., captured by the redundancy; see (\ref{subeq:linearIntxn.limitRatios.PM.R})), and it accounts for the finite information in $I(T;X)$, which approaches zero, as what we may think of as an \textit{infinite shortcoming} of $X$ as a predictor (see (\ref{subeq:linearIntxn.limitRatios.PM.Ux})).
Explicitly, for the expression for $\Upm{X}$ in Eq.~(\ref{eq:linearIntxn.PM.lattice.Ux}), this shortcoming is quantified with the term $-\log\frac{b}{a}$, which grows without bound in magnitude in proportion to the discrepancy between the information in $X$ and $Y$.
This is the exactly the opposite term by which $\Umin{Y} \to \infty$ for the $\Imin$ PID in Eq.~(\ref{eq:linearIntxn.WB.lattice.Uy}).

The allocation of this term, $\log (b/a)$, distinguishes the $\Imin$ limit from the $\Ipm$ limit, i.e. whether $U_Y \to \infty$ or $U_X \to -\infty$ as $X$ becomes conditionally independent of $T$ given $Y$.
As to the significance of the term $\log(b/a)$, observe that this is the target MI difference between the predictors:
\begin{equation}
\label{eq:MI_discrep.linearIntxn}
    I(T;Y) - I(T;X) = \log(b/a).
\end{equation}
This difference becomes infinite as $a \to 0^+$ because, at that point, the difference is one of perfect versus finite information, whereas, when $b>a>0$, only $(X,Y)$ together have perfect information regarding $T$.
In the more general case of an NFBI with identical predictor marginals $p_X(X) = p_Y(X)$, this difference takes the form
\begin{equation}
\label{eq:MI_discrep.genericKernel}
    I(T;Y) - I(T;X) = \mathbb{E} \log \frac{|\partial_y g|}{|\partial_x g|}(X,Y).
\end{equation}
A key distinction, then, between the bivariate $\Imin$ and $\Ipm$ PIDs is in the allocation of this information discrepancy (\ref{eq:MI_discrep.linearIntxn}-\ref{eq:MI_discrep.genericKernel}), which becomes infinite as $Y$ becomes the sole determinant of $T$.
The $\Imin$ PID will assign it to $\Umin{Y}$, while the $\Ipm$ will assign it to $\Rpm$, with a corresponding negative atom in $\Upmm{X}$.

Now that we have provided both PIDs of the linear kernel and demonstrated their limits, we may turn our attention to applying our general results to the sigmoidal kernel (\ref{eq:sigSw}).

\subsection{Asymptotic Conditional Independence for Sigmoidal Interactions}
\label{subsection:GenericResults.applyCor2SigmoidalKernel}

We will conclude this section with an analytic investigation into noise-free bivariate interactions with the sigmoidal switch interaction kernel (\ref{eq:sigSw}), as used for the simulations in Sec.~\ref{section:SynergyNetworks}.
This subsection will demonstrate the application of the generic approach that we have been developing throughout this section, in Theorem~\ref{theorem:GenericKernel.UniqueInfo} and Corollary~\ref{corollary:GenericKernel.limits}.
Our application of Cor.~\ref{corollary:GenericKernel.limits} will demonstrate the non-specificity of the $\Ipm$ PID, in contrast to the $\Imin$ PID.

In Experiment~III in Sec.~\ref{subsection:experimentC}, we conduct a series of network simulations with the sigmoidal kernel, in which we allowed the switching parameter $\alpha$ to vary.
We then examine the effect this had upon the PID of interacting pairs of genes.
Altering $\alpha$ has the effect of re-centering the predictor $X$ about $X=\alpha$ in the interaction.
This has the effect of shifting the relative sensitivity of $g(X,Y)$ upon each argument $X$ and $Y$ near the mean $\mu_X = \mu_Y = 0$, and thus in expectation as well.
More precisely, we saw that the expansion of the sigmoidal kernel takes the form:
\begin{align}
\tag{\ref{eq:sigSw.2ndOrderExpansion}}
    g(x,y) \approx \underbrace{\frac{1}{1 + e^{\alpha}}}_{\partial_{y} g} y + \underbrace{\frac{e^{\alpha}}{(1 + e^{\alpha})^2}}_{\partial_{x,y} g} xy .
\end{align}
for $(x,y) \approx (0,0)$, which is the probable scenario.
Thus, as $\alpha \to -\infty$, we have that $g(x,y) \approx y$ and $g(X,Y)$ approaches conditional independence of $X$ given $Y$.
We noted that $\Upm{X}$ was consistently negative for a range of $\alpha$ (Supp. Fig.~\ref{fig:supp.expIII}). 
For a linear kernel, we saw in Sec.~\ref{subsection:GenericResults.linearPID} that $\Upm{X} \to - \infty$ as $X$ approaches conditional independence.
A core contention of ours in this work is that, in such a scenario, we ought to see unique information $U_X$ approach zero as $X$ becomes less conditionally dependent on $T$, as otherwise the bivariate PID will inflate both redundancy and synergy scores for spurious edges during network inference.
The $\Imin$ PID behaves in this way, as seen in Sec.~\ref{section:SynergyNetworks}, but the $\Ipm$ PID does not.

Using Corollary~\ref{corollary:GenericKernel.limits}, we may now demonstrate that, similar to the linear interaction limits in Cor.~\ref{corollary:linearIntxn.limits}, we have that $\Umin{X} \to 0$ and $\Upm{X} \to -\infty$ as $\alpha \to -\infty$.
Thus, the themes explored regarding how the $\Imin$ and $\Ipm$ PIDs treat conditionally independent, `false' interactions can be made analytically apparent in the sigmoidal interaction kernel as they did for the linear interaction kernel in Sec.~\ref{subsection:GenericResults.linearPID}.
Moreover, since we are now examining the sigmoidal kernel, the work in this section can better explain the behavior we saw in our experiments in Sec.~\ref{section:SynergyNetworks}.

\begin{proposition}
\label{prop:sigSw.limits}
Let $\{X, Y \to_\rho T_{(\alpha)} \}_{\alpha \in \mathbb{R}}$ be the family of noise-free Gaussian interactions with kernel $g_{(\alpha)}$ from Eq.~\ref{eq:sigSw}, parametrized by the real parameter $\alpha$.

Then we have the following limits:
\begin{align}
\Umin{X}(\alpha) &\to 0 &\text{ as } \alpha \to -\infty\\
\Upmm{X}(\alpha) &\to -\infty &\text{ as } \alpha \to \infty\\
\Upm{X}(\alpha) &\to -\infty &\text{ as } \alpha \to -\infty
\end{align}
\end{proposition}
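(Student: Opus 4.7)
The plan is to reduce the three limits to Corollary~\ref{corollary:GenericKernel.limits}, applied along subsequences $\{\alpha_n\}$ of the real parameter $\alpha$. The critical input is the ratio of partial derivatives: from (\ref{eq:sigSw}),
\[
\partial_y g_{(\alpha)} = \frac{1}{1+e^{\alpha-x}}, \qquad \partial_x g_{(\alpha)} = \frac{y\,e^{\alpha-x}}{(1+e^{\alpha-x})^2},
\]
so that
\[
\frac{|\partial_x g_{(\alpha)}|}{|\partial_y g_{(\alpha)}|}(X,Y) \;=\; |Y|\cdot\frac{e^{\alpha-X}}{1+e^{\alpha-X}}.
\]
Because the sigmoid $t\mapsto e^{t}/(1+e^{t})$ is strictly increasing and vanishes at $-\infty$, along any $\alpha_n\downarrow -\infty$ this expression decreases pointwise to $0$ on the almost-sure event $\{|Y|>0\}$. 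This verifies the monotone hypothesis (\ref{eq:GenericKernel.limits.assumed_limit}) of the corollary in the $\alpha\to -\infty$ direction.

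The remaining hypothesis of Cor.~\ref{corollary:GenericKernel.limits} is that $\Umin{X}, \Umin{Y}, \Upm{X}, \Upm{Y}$ are finite at some baseline $\alpha_1$; I would take $\alpha_1 = 0$. Via the integral representations of Theorem~\ref{theorem:GenericKernel.UniqueInfo}, this reduces to integrability of $\tfrac{\log e}{2}(X^2 - Y^2)$ and of $\log(|\partial_y g_{(0)}|/|\partial_x g_{(0)}|) = \log(1+e^{-X}) - \log|Y| + X\log e$ against the standard bivariate Gaussian density. Both are routine: the Gaussian has all polynomial moments, $\log(1+e^{-X}) \leq \log 2 + |X|\log e$, and $\log|Y|$ is locally integrable with Gaussian-decaying tails. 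With both hypotheses verified, Cor.~\ref{corollary:GenericKernel.limits} immediately delivers the first and third claims: $\Umin{X}(\alpha) \to 0$ via (\ref{eq:GenericKernel.limits.UminX}) and $\Upm{X}(\alpha) \to -\infty$ via (\ref{eq:GenericKernel.limits.UpmX}) as $\alpha\to -\infty$.

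The main obstacle is the middle claim, $\Upmm{X}(\alpha) \to -\infty$ as $\alpha \to +\infty$. In this direction the ratio $|Y|\cdot e^{\alpha-X}/(1+e^{\alpha-X})$ tends pointwise to the \emph{finite, nondegenerate} limit $|Y|$, so Cor.~\ref{corollary:GenericKernel.limits} does not apply. I would therefore work directly from the representation
\[
\Upmm{X}(\alpha) \;=\; \mathbb{E}\!\left[\mathds{1}_{C_\alpha}\,\log\frac{|\partial_y g_{(\alpha)}|}{|\partial_x g_{(\alpha)}|}\right]
\]
of Theorem~\ref{theorem:GenericKernel.UniqueInfo}, invoking dominated convergence with the pointwise limits $\mathds{1}_{C_\alpha} \to \mathds{1}_{\{|Y|\leq 1\}}$ and $\log(|\partial_y g_{(\alpha)}|/|\partial_x g_{(\alpha)}|) \to -\log|Y|$, and a uniform envelope of the form $|X| + |{\log|Y|}| + C$ derived from the explicit formula for the logarithm. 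The principal difficulty I expect is conceptual rather than technical: since $C_\alpha = \{|\partial_x g|\leq|\partial_y g|\}$ is by construction the event on which the logarithm is nonnegative, the integrand is nonnegative for every finite $\alpha$, and identifying the precise mechanism by which $\Upmm{X}$ could diverge to $-\infty$ along $\alpha \to +\infty$ --- as opposed to diverging to $+\infty$ along $\alpha \to -\infty$ where Cor.~\ref{corollary:GenericKernel.limits}(\ref{eq:GenericKernel.limits.UpmmX}) would apply directly --- is the step I expect will consume the bulk of the proof effort, likely requiring a careful re-examination of the sign convention in $\Upm{X} = \Upmp{X} - \Upmm{X}$.
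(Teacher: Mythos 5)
Your handling of the first and third limits is exactly the paper's proof. The paper likewise computes $\frac{|\partial_x g|}{|\partial_y g|}(X,Y) = \frac{|Y|\,e^{\alpha-X}}{1+e^{\alpha-X}}$, notes that it decreases pointwise to $0$ as $\alpha \to -\infty$, takes $\alpha=0$ as the baseline, verifies finiteness of the baseline atoms by dominating $\tfrac{1}{2}|X^2-Y^2| + \bigl|\log\bigl(|\partial_y g|/|\partial_x g|\bigr)\bigr|$ by a function with finite Gaussian moments --- the same envelope you describe, using $\log(1+e^{-X}) \le \log 2 + |X|\log e$ and integrability of $\log|Y|$ --- and then invokes Corollary~\ref{corollary:GenericKernel.limits} via the sequential characterization of continuous limits. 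One small point: the corollary's hypothesis requires finiteness of all four atoms, including $\Umin{Y}$ and $\Upm{Y}$; your envelope covers these by symmetry (and, in fairness, the paper itself only explicitly checks the $X$ atoms).

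On the middle claim, your suspicion is correct and you should trust it: the line ``$\Upmm{X}(\alpha) \to -\infty$ as $\alpha \to \infty$'' is a typo in the statement, not a theorem awaiting a clever proof. As you observe, by Theorem~\ref{theorem:GenericKernel.UniqueInfo} the integrand of $\Upmm{X}$ is $\bigl(\log\bigl(|\partial_y g|/|\partial_x g|\bigr)\bigr)^{+} \ge 0$, so $\Upmm{X} \ge 0$ for every $\alpha$ and divergence to $-\infty$ is impossible; moreover your dominated-convergence computation in the $\alpha \to +\infty$ direction shows $\Upmm{X}(\alpha)$ converges to the \emph{finite} constant $\mathbb{E}\bigl[\mathds{1}_{\{|Y|\le 1\}}\log(1/|Y|)\bigr]$, so the printed claim is not salvageable in that direction either. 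The intended statement is $\Upmm{X}(\alpha) \to +\infty$ as $\alpha \to -\infty$, which is precisely Eq.~(\ref{eq:GenericKernel.limits.UpmmX}) of Corollary~\ref{corollary:GenericKernel.limits} and therefore comes for free from the application you already set up; it is also forced by bookkeeping, since $\Upm{X} = \Upmp{X} - \Upmm{X}$ with $\Upmp{X} = \tfrac{\log(e)}{2}\sqrt{1-\rho^2}$ fixed, so the third limit and the (corrected) second are equivalent. The paper's own proof makes no separate argument for the middle line --- it simply applies the corollary at $\alpha \to -\infty$ --- which confirms the typo reading. With the middle claim read as intended, your proposal \emph{is} the paper's proof, and the effort you budgeted for re-examining the sign convention is unnecessary.
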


\begin{proof}
We may apply Corollary~\ref{corollary:GenericKernel.limits} to our continuously parametrized family of interactions via the sequential characterization of continuous limits.
Observe that
\begin{align*}
    \frac{|\partial_x g|}{|\partial_y g|} &= \frac{|Y| e^{\alpha-X}}{1 + e^{\alpha-X}}
\end{align*}
It is easy to see that for any $(x,y) = (X,Y)(\omega)$, $\alpha \leq 0$, 
\begin{align*}
    \frac{|\partial_x g|}{|\partial_y g|} \downarrow 0 \text{ as } \alpha \to - \infty.
\end{align*}
We take $\alpha=0$ as our starting point, so that all that remains to be shown is that $\Umin{X}(0)$ and $\Upm{X}(0)$ are finite.

Consider the following series of expansions
\begin{align*}
    &\frac{1}{2}|x^2 - y^2| + \left| \ln \frac{|\partial_x g|}{|\partial_y g|} \right|
    \\
    =& \frac{1}{2}|x^2 - y^2| + \left| \ln|y| + (\alpha - x) - \log(1 + e^{\alpha - x}) \right| \\
    \leq&  \frac{1}{2}|x^2 - y^2| + \left|\ln|y| \right| + \alpha + |x| + \overbrace{\log(1 + e^{\alpha - x})}^{\leq e^{\alpha-x}} \\
    \leq& \frac{x^2}{2} + \frac{y^2}{2} + \alpha + |x| + \left|\ln|y| \right| + e^{\alpha-x}
\end{align*}

Set $\alpha = 0$, and consider the functions $f_1$ and $f_2$:
\begin{align*}
    f_1(x,y) &:= \frac{1}{2}|x^2 - y^2| + \left| \ln \frac{|\partial_x g|}{|\partial_y g|} \right|\\
    &\leq \frac{x^2}{2} + \frac{y^2}{2} + \alpha + |x| + \left|\ln|y| \right| + e^{\alpha-x}
    =: f_2(x,y)
\end{align*}

All the terms in $f_2$ have finite moments with respect to the marginal
$$
p_X(z) = p_Y(z) = \frac{1}{\sqrt{2 \pi}} e^{-\frac{z^2}{2}}
$$
and so $\mathbb{E} f_1(X,Y) \leq \mathbb{E} f_2(X,Y) < \infty$.

Moreover, from the forms of $\Umin{X}$ and $\Upm{X}$ in Theorem~\ref{theorem:GenericKernel.UniqueInfo}, we have that
\begin{align*}
    |\Umin{X}(0)| \leq \mathbb{E} f_1(X,Y),\\
    |\Upm{X}(0)| \leq \mathbb{E} f_1(X,Y)+c,
\end{align*}
for a finite constant $c>0$.
Thus, $\Umin{X}(0)$ and $\Upm{X}(0)$ are finite.
\end{proof}

\section{Concluding Thoughts}
\label{section:conclusion}
In this work, we have considered the $\Imin$ and $\Ipm$ PID frameworks as tools for edge nomination in response/interaction networks.
We take a multifaceted approach, simulating interaction networks in Section~\ref{section:SynergyNetworks} in order to build intuition, before formally investigating noise-free bivariate interactions in the general setting of Section~\ref{section:GenericResults}.
Our investigations have demonstrated that the $\Ipm$ PID is ill-suited to the task of edge/interaction nomination in synergy networks.
Analytically and in simulation, we see that the bivariate $\Ipm$ PID suffers from a non-specific sensitivity to any pair of high-information predictors, and does not distinguish between true interactions and strong univariate signals.
This is also a drawback of classical pair-wise MI.
By contrast, the $\Imin$ PID demonstrates behavior more appropriate to the edge nomination task, in that
$\Imin$ properly distinguishes true interactions from univariate signals.
This is accomplished by shifting its allocation of information in the bivariate PID in accordance with the relative analytic sensitivity of the response variable to the predictors.
Crucially, the $\Imin$ PID respects the conditional independence of predictors (Prop.~\ref{prop:PID.NN_implies_zero_synergy}), and this is further demonstrated in the asymptotic analysis of Cor.~\ref{corollary:GenericKernel.limits} for noise-free interactions.

In future work, we would conduct a similar analysis of PIDs besides $\Imin$ and $\Ipm$.
Both the continuous $\Ibroja$ PID from \cite{Pakman2021estimating} and the $\Isx$ PID from \cite{Schick2021partial} ought to be considered from a similar perspective.
From the simulations conducted in \cite{Pakman2021estimating}, particularly for neuronal Model 2, we suspect that the continuous $\Ibroja$ would resemble the $\Imin$ PID in demonstrating specificity without sufficient sensitivity to synergistic interactions.
This would align with the earlier work from \cite{Barrett2015} suggesting that both are identical to the MMI PID for Gaussian systems.
The $\Isx$ PID, on the other hand, is akin to the $\Ipm$ PID, and thus may similarly apportion negative information to $U_X$ for an irrelevant predictor $X$, as $\Ipm$ did in our work.
Besides these existing continuous PIDs, we also note that the $\Iccs$ PID performed well in our experiment.
To our knowledge, the $\Iccs$ PID has not been examined in continuous variables.
Since $\Iccs$ does not obey the monotonicity WB axiom, and allows for negative information, it would be interesting to discover whether this PID avoids any of the drawbacks of the $\Ipm$ PID that we have explored in this work.

\vfill

\nocite{*}

\bibliography{refs}

\clearpage

\section{Supplemental Material}
\beginsupplement

 \subsection{Computation of Continuous $\Imin$ and $\Ipm$ PIDs for Linear Interactions}
\label{subsection:appendix.computeLinearPIDs}

In this section, we compute the $\Imin$ and $\Ipm$ PIDs in Theorem~\ref{theorem:linearIntxn}, for the continuous interaction in Eq.~(\ref{eq:defn:linearIntxn}).
The structure of this section is as follows.
First, we restate both PIDs in Theorem~\ref{theorem:linearIntxn}.
Next, since the $\Imin$ PID depends upon the computation of the specific information functions $I_X(t)$ and $I_Y(t)$, we present these in Lemma~\ref{lemma:linearIntxn.WB.specificInfo}, followed by its proof.
This allows us to prove the $\Imin$ PID in Theorem~\ref{theorem:linearIntxn}, which follows immediately from Lemma~\ref{lemma:linearIntxn.WB.specificInfo} and the MIs from Prop.~\ref{prop:linearIntxn.MI}.
It is worth highlighting that this computation is particularly simple due to the convenient property that $I_Y(t) \geq I_X(t)$ for all $t$, which is a necessary and sufficient condition for the coincidence of $\Imin$ redundancy and minimal predictor MI, i.e. $I(T;X) = \Rmin(T;X,Y)$.
This is synonymous with the MMI PID from \cite{Barrett2015}.
We then turn to the $\Ipm$ PID.
Since this PID is the difference of the decomposition of specificity $I^+(T;X,Y)$ and ambiguity $I^-(T;X,Y)$, we compute each of those lattices in turn, in Lemmas~\ref{lemma:linearIntxn.PM.lattice.specificity}~\&~\ref{lemma:linearIntxn.PM.lattice.ambiguity}.
With those proven, we conclude the computation of our PIDs.

\begin{claim}[PID $\dmin$ for Linear Interaction]
\label{claim:linearIntxn.WB.lattice}
Let $X,Y \to_\rho T$ be a linear interaction as in Eq.~(\ref{eq:defn:linearIntxn}). Then, using $I_{\cap}^{\min}$ as defined in Def.~\ref{defn:redundant-info.Imin}, it has the following PID $\dmin$:
\begin{align}
    \tag{\ref{eq:linearIntxn.WB.lattice.R}}
    R &= - \log b + \log \sigma_T + I(X;Y) \\
    \nonumber &= \log \frac{\sqrt{a^2 + b^2 + 2 a b \rho}}{b \sqrt{1-\rho^2}} \\
    \tag{\ref{eq:linearIntxn.WB.lattice.Ux}}
    U_X &= 0 \\
    \tag{\ref{eq:linearIntxn.WB.lattice.Uy}}
    U_Y &= \log \frac{b}{a} \\
    S &= \infty
    \tag{\ref{eq:linearIntxn.WB.lattice.S}}
\end{align}
\end{claim}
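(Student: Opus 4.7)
The plan is to compute $\Rmin$ directly from Definition~\ref{defn:redundant-info.Imin} by establishing the pointwise domination $I_X(t)\le I_Y(t)$, so that the minimum of the two specific informations collapses to $I_X(T)$ and $\Rmin$ reduces to the smaller marginal mutual information $I(T;X)$. All four atoms of $\dmin$ then fall out of Eqs.~(\ref{eq:PID_1})--(\ref{eq:PID_3}) combined with Proposition~\ref{prop:linearIntxn.MI} and Corollary~\ref{corollary:infiniteMI.sumOfDiracs}.

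First I would write $I_X(t)$ and $I_Y(t)$ in closed form. Since $T=aX+bY$ is a linear combination of jointly Gaussian predictors, both $(X,T)$ and $(Y,T)$ are bivariate Gaussian (Rule~\ref{rule:Gaussian.linearTransformation}), so the conditionals $X\mid T{=}t$ and $Y\mid T{=}t$ are Gaussian with mean and variance determined by $\rho_{XT}$ and $\rho_{YT}$ from Eq.~(\ref{eq:linearIntxn.rho.XT}). The standard Gaussian--Gaussian KL formula, applied against the standard normal marginals, gives
\[
I_X(t)\;=\;\tfrac{\rho_{XT}^2}{2}\!\left(\tfrac{t^2}{\sigma_T^2}-1\right)\;-\;\tfrac{1}{2}\log(1-\rho_{XT}^2),
\]
and the analogous expression for $I_Y(t)$ with $\rho_{YT}$ in place of $\rho_{XT}$.

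The hard part -- and the main obstacle -- is the pointwise inequality $I_X(t)\le I_Y(t)$ for every $t\in\mathbb{R}$. Since $0<a<b$ and $|\rho|<1$, a direct comparison via Eq.~(\ref{eq:linearIntxn.rho.XT}) gives the strict inequality $\rho_{XT}^2<\rho_{YT}^2$ (reducing to $(a-b)(a+b)(1-\rho^2)\le 0$). Writing $s=t^2/\sigma_T^2\ge 0$ and $u=\rho_{XT}^2<v=\rho_{YT}^2$, the difference $2(I_X(t)-I_Y(t))=-(v-u)s+(v-u)+\log\tfrac{1-v}{1-u}$ is affine in $s$ with nonpositive slope, hence maximized at $s=0$. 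The value there, $(v-u)+\log\tfrac{1-v}{1-u}$, is nonpositive by the elementary inequality $-\log(1-w)\ge w$ applied to $w=(v-u)/(1-u)\in[0,1)$, after using $(v-u)=(1-u)w\le w$. This yields $\min(I_X(T),I_Y(T))\oeq I_X(T)$, so taking expectation gives $\Rmin=I(T;X)$, and substituting Eq.~(\ref{eq:linearIntxn.MI.XT}) produces the closed form in Eq.~(\ref{eq:linearIntxn.WB.lattice.R}).

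The remaining atoms are then bookkeeping. Eq.~(\ref{eq:PID_2}) forces $\Umin{X}=I(T;X)-\Rmin=0$. Eq.~(\ref{eq:PID_3}) together with Eqs.~(\ref{eq:linearIntxn.MI.XT})--(\ref{eq:linearIntxn.MI.YT}) gives $\Umin{Y}=I(T;Y)-I(T;X)=\log(b/a)$. Finally, since $T$ is a non-singular deterministic function of $(X,Y)$, Corollary~\ref{corollary:infiniteMI.sumOfDiracs} yields $I(T;X,Y)=\infty$, and Eq.~(\ref{eq:PID_1}) forces $\Smin=\infty$. (As a sanity check, one could instead invoke Theorem~\ref{theorem:GenericKernel.UniqueInfo} with constant partials $|\partial_x g|=a,\,|\partial_y g|=b$: the event $A=\{I_X(T)\ge I_Y(T)\}$ has measure zero by the inequality above, so the formula in Eq.~(\ref{eq:GenericKernel.Thm.Umin.Gaussian}) collapses to $\Umin{X}=0$ and symmetrically $\Umin{Y}=\log(b/a)$ using $\mathbb{E} X^2=\mathbb{E} Y^2=1$.)
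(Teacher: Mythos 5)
Your proposal is correct, and its overall architecture coincides with the paper's: compute the specific informations $I_X(t)$, $I_Y(t)$ in closed form from the bivariate-Gaussian conditionals and the Gaussian--Gaussian KL formula (the paper's Lemma~\ref{lemma:linearIntxn.WB.specificInfo}, built from Rules~\ref{rule:Gaussian.linearTransformation}--\ref{rule:KLdivergence.twoGaussians}), establish the pointwise domination $I_X(t)\le I_Y(t)$ so that $\Rmin=\mathbb{E}\min(I_X(T),I_Y(T))=I(T;X)$, and then read off $\Umin{X}=0$, $\Umin{Y}=\log(b/a)$, $\Smin=\infty$ from Eqs.~(\ref{eq:PID_1})--(\ref{eq:PID_3}), Prop.~\ref{prop:linearIntxn.MI}, and Cor.~\ref{corollary:infiniteMI.sumOfDiracs}; your closed form for $I_X(t)$ in terms of $\rho_{XT}$ is algebraically identical to the paper's Eq.~(\ref{eqn:linearIntxn.WB.specificInfo.Ix}) once one uses $1-\rho_{XT}^2=b^2(1-\rho^2)/\sigma_T^2$.

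Where you genuinely diverge is the proof of the key inequality. Both arguments observe that the difference is monotone in $t^2$ and hence worst at $t=0$, but at that point the paper substitutes $b=\gamma a$ and proves positivity of $f(\gamma)=\log\gamma-\tfrac{(1-\rho^2)(\gamma^2-1)}{2(\gamma^2+2\rho\gamma+1)}$ by a derivative computation (Computation~\ref{computation.linearIntxn.WB.specificInfo.differenceFunctionIncreasing}), whereas you reparametrize by the squared correlations $u=\rho_{XT}^2<v=\rho_{YT}^2$ and dispatch the boundary case with the elementary bound $\log(1-w)\le -w$ applied to $w=(v-u)/(1-u)$, giving $(v-u)+\log\tfrac{1-v}{1-u}\le -uw\le 0$. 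Your route is cleaner and more portable: it eliminates the auxiliary calculus computation entirely, and since it is phrased purely in terms of the two correlations it shows the monotonicity $\rho_{XT}^2\le\rho_{YT}^2\Rightarrow I_X(t)\le I_Y(t)$ for \emph{any} pair of jointly Gaussian predictors of a Gaussian target, not just this particular linear kernel---which is precisely the MMI-PID phenomenon from \cite{Barrett2015} that the paper alludes to. What the paper's explicit parametrization buys instead is the exact formula for the gap $I_Y(t)-I_X(t)$ in Eq.~(\ref{eqn:linearIntxn.WB.specificInfo.difference}), which is reused in its discussion; your cross-check via Theorem~\ref{theorem:GenericKernel.UniqueInfo} with constant partials $|\partial_x g|=a$, $|\partial_y g|=b$ is also valid and non-circular, since that theorem's $\Imin$ statement is proved independently of this claim.
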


\begin{claim}[PID $\dpm$ for Linear Interaction]
\label{claim:linearIntxn.PM.lattice}
Let $X,Y \to_\rho T$ be a linear interaction as in Eq.~(\ref{eq:defn:linearIntxn}). Then, using $I_{\PM}$ as defined in Def.~\ref{defn:redundant-info.Ipm}, it has the following PID $\dpm = \mathfrak{d}^+_{\PM} - \mathfrak{d}^-_{\PM}$:
\begin{align}
    \tag{\ref{eq:linearIntxn.PM.lattice.R}}
    R &= \log \frac{\sqrt{a^2 + b^2 + 2 a b \rho}}{a \sqrt{1-\rho^2}} - \frac{1}{\pi} \sqrt{1-\rho^2} \\
    \tag{\ref{eq:linearIntxn.PM.lattice.Ux}}
    U_X &= \frac{1}{\pi} \sqrt{1-\rho^2} - \log \left( \frac{b}{a} \right) \\
    \tag{\ref{eq:linearIntxn.PM.lattice.Uy}}
    U_Y &= \frac{1}{\pi} \sqrt{1-\rho^2} \\
    S &= \infty
    \tag{\ref{eq:linearIntxn.PM.lattice.S}}
\end{align}
\end{claim}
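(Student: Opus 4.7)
The plan is to compute the specificity sublattice $\mathfrak{d}_{\PM}^+$ and the ambiguity sublattice $\mathfrak{d}_{\PM}^-$ separately, then form the $\Ipm$ PID atoms by subtraction via Eqs.~(\ref{eq:redundant-info.Ipm:eqn:PID.R})--(\ref{eq:redundant-info.Ipm:eqn:PID.S}). The entropic components $H^+(T;\bm{X})=H(\bm{X})$ and $H^-(T;\bm{X})=H(\bm{X}\,|\,T)$ reduce to standard Gaussian (conditional) entropies once one uses the joint Gaussianity of $(X,T)$ and $(Y,T)$, which gives $\sigma_{X|T}^2=b^2(1-\rho^2)/\sigma_T^2$ and $\sigma_{Y|T}^2=a^2(1-\rho^2)/\sigma_T^2$. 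The substantive work is therefore in the two sublattice redundancies $\Ipmp = \mathbb{E}\min(s_X(X), s_Y(Y))$ and $\Ipmm = \mathbb{E}\min(s_{X|T}(X|T), s_{Y|T}(Y|T))$, which I would isolate as two lemmas.

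For $\Ipmp$: since $p_X$ and $p_Y$ are both the standard normal density, $\min(s_X(X), s_Y(Y))$ is an affine function of $\min(X^2, Y^2)$. The trick is to write $\min(X^2, Y^2) = \tfrac{1}{2}(X^2 + Y^2 - |X^2 - Y^2|)$ and factor $X^2 - Y^2 = (X-Y)(X+Y)$. For a jointly Gaussian pair with equal marginal variances, $X-Y$ and $X+Y$ are uncorrelated and hence independent, so the mixed moment factors as $\mathbb{E}|X^2-Y^2| = \mathbb{E}|X-Y|\cdot\mathbb{E}|X+Y|$. Each factor is a standard Gaussian mean-absolute value $\sigma\sqrt{2/\pi}$, giving $\mathbb{E}|X^2-Y^2|=\tfrac{4}{\pi}\sqrt{1-\rho^2}$ and hence $\Ipmp = H(X) - \tfrac{1}{\pi}\sqrt{1-\rho^2}$, from which $R^+$, $U_X^+$, $U_Y^+$, $S^+$ follow mechanically.

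For $\Ipmm$: the noise-free constraint $aX + bY = T$ forces $(X,Y)\,|\,T$ onto a one-dimensional affine subspace. Writing $\tilde X := X - \mathbb{E}[X\,|\,T]$ and $\tilde Y := Y - \mathbb{E}[Y\,|\,T]$, this implies $a\tilde X + b\tilde Y = 0$ almost surely, which combined with the two conditional variance formulas gives the pointwise identity $\tilde X^2/\sigma_{X|T}^2 = \tilde Y^2/\sigma_{Y|T}^2$. The quadratic pieces of the two conditional surprisals therefore agree, and their difference collapses to the deterministic positive constant $\tfrac{1}{2}\log(\sigma_{X|T}^2/\sigma_{Y|T}^2) = \log(b/a)$. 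Consequently the pointwise minimum is always $s_{Y|T}(Y|T)$, so $\Ipmm = H(Y\,|\,T)$. The same degeneracy yields $H^-(T;X,Y) = H(X,Y\,|\,T) = -\infty$, which delivers $S^- = -\infty$ and hence $S=\infty$.

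Assembling the sublattice atoms $R = R^+ - R^-$, $U_X = U_X^+ - U_X^-$, $U_Y = U_Y^+ - U_Y^-$, and using $H(Y\,|\,T) = H(Y) - I(T;Y)$ together with Proposition~\ref{prop:linearIntxn.MI}, produces the asserted closed forms of Eqs.~(\ref{eq:linearIntxn.PM.lattice.R})--(\ref{eq:linearIntxn.PM.lattice.S}). The main obstacle is the specificity lemma: the tractability of $\mathbb{E}|X^2-Y^2|$ relies on the non-obvious observation that $X+Y$ and $X-Y$ are independent — a property that requires equal marginal variances and which would fail for the analogous ambiguity integral on $(X|T, Y|T)$ if those had not already collapsed to a common random quadratic form. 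The ambiguity side, by contrast, is almost entirely bookkeeping once the line constraint $a\tilde X + b\tilde Y = 0$ is written down.
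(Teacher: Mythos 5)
Your proposal is correct and follows essentially the same route as the paper's own proof: the paper likewise splits the computation into a specificity lemma (using $2\min(u,v)=u+v-|u-v|$, the factorization $X^2-Y^2=(X-Y)(X+Y)$, independence of the uncorrelated Gaussian pair $X\pm Y$, and $\mathbb{E}|W|=\sigma\sqrt{2/\pi}$) and an ambiguity lemma, then assembles the atoms by subtraction, concluding $S^-=-\infty$ from the degeneracy exactly as you do. The only cosmetic difference is in how the key pointwise identity is obtained: the paper derives $p_{X|T}=\tfrac{1}{b}\,p_{X,Y}/p_T$ and $p_{Y|T}=\tfrac{1}{a}\,p_{X,Y}/p_T$ by a Jacobian change of variables, whereas you get the same constant surprisal gap $s_{X|T}-s_{Y|T}=\log(b/a)$ from the conditional residual constraint $a\tilde{X}+b\tilde{Y}=0$ together with the conditional variances --- both arguments show the ambiguity minimum is always attained by $Y$, and both yield identical sublattices.
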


We begin with the $\Imin$ lattice $\dmin$.
In order to demonstrate Claim~\ref{claim:linearIntxn.WB.lattice}, we need to first compute the specific information functions for each predictor variable.

\begin{lemma}[Specific Information Functions for Linear Interaction]
\label{lemma:linearIntxn.WB.specificInfo}
Let $X,Y \to_\rho T$ be a linear interaction as in Eq.~(\ref{eq:defn:linearIntxn}). Then the specific information functions $I_X(t)$ and $I_Y(t)$ as given in Def.~\ref{defn:specificInfo} are given as the following:
\begin{subequations}
\label{eqns:linearIntxn.WB.specificInfo}
\begin{align}
    I_X(t) &= - \log b + \frac{1}{2 \sigma_T^4}
\Big[ b^2 (1 - \rho^2) \sigma_T^2\notag\\ 
&\hspace{20mm}+ (a + \rho b)^2 t^2 \Big] + c \label{eqn:linearIntxn.WB.specificInfo.Ix} \\
    I_Y(t) &= - \log a + \frac{1}{2 \sigma_T^4}
\Big[ a^2 (1 - \rho^2) \sigma_T^2\notag\\
&\hspace{20mm}+ (b + \rho a)^2 t^2 \Big]  + c \label{eqn:linearIntxn.WB.specificInfo.Iy}\\
    \intertext{where}
    \label{eqn:linearIntxn.WB.specificInfo.constant}
    c &= \log \sigma_T - \log \sqrt{1-\rho^2} - \frac{1}{2} \\
    \label{eqn:linearIntxn.WB.specificInfo.sigmaT}
    \sigma_T^2 &= a^2 + b^2 + 2 \rho a b
    \intertext{Moreover, their difference $I_Y - I_X$ is positive, given by:}
     \label{eqn:linearIntxn.WB.specificInfo.difference}
    I_Y(t) - I_X(t) &= \log \frac{b}{a}
    + \frac{1}{2 \sigma_T^4}
    \left[
    (1-\rho^2)(b^2 - a^2) (t^2 - \sigma_T^2)
    \right]
\end{align}

\end{subequations}

\end{lemma}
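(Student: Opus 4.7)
The plan is to exploit the Gaussianity of the whole system and reduce everything to the closed-form KL divergence between two univariate Gaussians. Since $T = aX + bY$ is a linear combination of the jointly Gaussian pair $(X,Y)$, the pair $(X,T)$ is itself jointly Gaussian with the moments already obtained inside the proof of Prop.~\ref{prop:linearIntxn.MI}: $\mathrm{Cov}(X,T) = a + \rho b$ and $\sigma_T^2 = a^2 + b^2 + 2\rho ab$. Consequently, the conditional law $X \mid T=t$ is Gaussian with mean $\mu_{X|t} = (a+\rho b)\, t/\sigma_T^2$ and variance $\sigma_{X|T}^2 = 1 - (a+\rho b)^2/\sigma_T^2$. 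Invoking the single algebraic identity
\begin{equation*}
\sigma_T^2 - (a+\rho b)^2 = b^2(1-\rho^2)
\end{equation*}
the conditional variance collapses to $b^2(1-\rho^2)/\sigma_T^2$.

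I would then substitute these parameters into the standard formula
\begin{equation*}
D\bigl(N(\mu_1,\sigma_1^2) \,\|\, N(0,1)\bigr) \;=\; -\log\sigma_1 + \tfrac{1}{2}(\sigma_1^2 + \mu_1^2) - \tfrac{1}{2},
\end{equation*}
collect the $t$-independent contributions $\log\sigma_T - \log\sqrt{1-\rho^2} - 1/2$ into the constant $c$ of Eq.~(\ref{eqn:linearIntxn.WB.specificInfo.constant}), and factor $1/\sigma_T^4$ out of the remaining terms; this directly produces Eq.~(\ref{eqn:linearIntxn.WB.specificInfo.Ix}). The expression for $I_Y(t)$ in Eq.~(\ref{eqn:linearIntxn.WB.specificInfo.Iy}) follows by the structural symmetry $(X,a) \leftrightarrow (Y,b)$ of the interaction (\ref{eq:defn:linearIntxn}). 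Subtracting term by term, the $c$'s cancel, the log prefactors combine as $-\log b - (-\log a) = \log(a/b) = -\log(b/a)$ but with sign flipped to $+\log(b/a)$, and the coefficient of $t^2$ simplifies via $(b+\rho a)^2 - (a+\rho b)^2 = (1-\rho^2)(b^2-a^2)$; combining this with the matching quadratic constants $a^2(1-\rho^2)\sigma_T^2 - b^2(1-\rho^2)\sigma_T^2$ yields the factored form $(1-\rho^2)(b^2-a^2)(t^2 - \sigma_T^2)/(2\sigma_T^4)$ in Eq.~(\ref{eqn:linearIntxn.WB.specificInfo.difference}).

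The only non-bookkeeping step is the pointwise positivity claim $I_Y(t) \geq I_X(t)$, which is the structural input needed downstream to conclude $\Rmin = I(T;X)$ and $\Umin{X} = 0$. Since the coefficient of $t^2$ in $I_Y(t) - I_X(t)$ is strictly positive when $b > a$, the difference is minimized at $t = 0$, where positivity reduces to the scalar inequality $\log(b/a) \geq (1-\rho^2)(b^2-a^2)/(2\sigma_T^2)$. The elementary bound $\log x \geq 1 - 1/x$ applied to $x = b^2/a^2$ gives $\log(b/a) \geq (b^2 - a^2)/(2 b^2)$, so it suffices to verify $\sigma_T^2 \geq (1-\rho^2) b^2$, which rearranges neatly into the perfect square $(a + \rho b)^2 \geq 0$. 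I expect the main obstacle to be purely bookkeeping — keeping the $\sigma_T^2$ versus $\sigma_T^4$ denominators correctly tracked while distributing terms into the constant $c$ — rather than any conceptual difficulty.
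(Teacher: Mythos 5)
Your proposal is correct, and the computational core is the same as the paper's: both derive the joint Gaussian law of $(X,T)$ from the linear transformation, read off the conditional law $X \mid T=t$ (with the same identity $\sigma_T^2-(a+\rho b)^2 = b^2(1-\rho^2)$ collapsing the conditional variance), plug into the closed-form Gaussian KL divergence, and obtain $I_Y$ by the symmetry $a \leftrightarrow b$; the difference formula then follows from $(b+\rho a)^2-(a+\rho b)^2=(1-\rho^2)(b^2-a^2)$ exactly as in the paper. Where you genuinely diverge is the positivity claim. The paper reduces to the worst case $t=0$ (as you do), then substitutes $b=\gamma a$ and proves via a derivative computation (its Computation~\ref{computation.linearIntxn.WB.specificInfo.differenceFunctionIncreasing}) that
\begin{equation*}
f(\gamma)=\log\gamma-\frac{(1-\rho^2)(\gamma^2-1)}{2(\gamma^2+2\rho\gamma+1)}
\end{equation*}
is increasing on $[1,\infty)$ with $f(1)=0$. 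You instead chain two one-line inequalities,
\begin{equation*}
\log\frac{b}{a} \;\geq\; \frac{b^2-a^2}{2b^2} \;\geq\; \frac{(1-\rho^2)(b^2-a^2)}{2\sigma_T^2},
\end{equation*}
the first from $\log x \geq 1-1/x$ at $x=b^2/a^2$, the second equivalent to $\sigma_T^2-(1-\rho^2)b^2=(a+\rho b)^2\geq 0$. This is strictly more elementary: it eliminates the auxiliary calculus computation entirely and makes the mechanism transparent (a perfect square). What it gives up is the monotonicity-in-$\gamma$ information the paper's argument incidentally provides, which is not needed for the lemma or its downstream use ($\Rmin = I(T;X)$, $\Umin{X}=0$); note also that the strict form $\log x > 1-1/x$ for $x\neq 1$ recovers strict positivity, matching the paper's conclusion. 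One cosmetic slip: in describing the difference you wrote the prefactor subtraction in the order $-\log b-(-\log a)$ and then flipped the sign; for $I_Y-I_X$ it is directly $-\log a+\log b=\log(b/a)$, which is what you use.
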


\begin{proof}[Proof of Lemma~\ref{lemma:linearIntxn.WB.specificInfo}]
As in the proof of Prop~\ref{prop:linearIntxn.MI}, we have that
\begin{align*}
\bm{\mu}_{X,T} &=  (0,0), \\
\Sigma_{X,T}
&=
\begin{bmatrix}
1 & a + \rho b \\
a + \rho b & \sigma_T^2
\end{bmatrix},\\
\sigma_T^2 &= a^2 + b^2 + 2 \rho a b, \\
\rho_{X,T} &= \frac{a + \rho b}{\sigma_T}.
\end{align*}
Using Rule~\ref{rule:conditionalGaussian}, this gives us a conditional distribution $X_{T=t} = (X | {T=t})$ for any choice of $t$:
\begin{align}
\nonumber X_{T=t} &\sim N(\mu_{X|{T=t}}, \sigma^2_{X|T=t} )\\
\intertext{where}
\label{eqn:proof:linearIntxn.WB.specificInfo.ref1}
\mu_{X|T=t} &= \frac{ (a + \rho b) t}{\sigma_T^2} \tag{$*$}\\
\label{eqn:proof:linearIntxn.WB.specificInfo.ref2}
\sigma^2_{X|T=t} &= \frac{b^2 (1 - \rho^2)}{\sigma_T^2} \tag{$**$}
\end{align}

This allows us to compute the specific information at $T=t$ as the KL divergence between $X |_{T=t}$ and $X$. Both are univariate Gaussians, so we use Rule~\ref{rule:KLdivergence.twoGaussians}, and have that
\begin{align*}
I_X(t) &= D( X_{T=t} || X)\\
&= \log \left( \frac{1}{\sigma_{X|T=t}} \right) +
\frac{1}{2} \left( 
\sigma^2_{X|T=t} + \mu^2_{X|T=t}
  \right) - \frac{1}{2} .
\end{align*}
We plug in Eqs~(\ref{eqn:proof:linearIntxn.WB.specificInfo.ref1}) and (\ref{eqn:proof:linearIntxn.WB.specificInfo.ref2}) to arrive at the form in Eq.~\ref{eqn:linearIntxn.WB.specificInfo.Ix}. In particular, with $c$ as in Eq.~\ref{eqn:linearIntxn.WB.specificInfo.constant}, note that
\begin{align*}
    \log \left( \frac{1}{\sigma_{X|T=t}} \right) - \frac{1}{2}
    &= - \log b + c,\\
\sigma^2_{X|T=t} + \mu^2_{X|T=t}
    &= \frac{1}{\sigma_T^4} \left[ b^2 (1 - \rho^2)\sigma_T^2 + (a + \rho b)^2 t^2 \right].
\end{align*}

The proof for Eq.~\ref{eqn:linearIntxn.WB.specificInfo.Iy} is identical under the transposition of $a$ and $b$.

From Eq.~\ref{eqn:linearIntxn.WB.specificInfo.Ix} and Eq.~\ref{eqn:linearIntxn.WB.specificInfo.Iy}, we have that Eq.~\ref{eqn:linearIntxn.WB.specificInfo.difference} follows by simply expanding terms and cancelling. To demonstrate that this difference is strictly positive, it suffices to consider the worst-case scenario of (\ref{eqn:linearIntxn.WB.specificInfo.difference}) where $t=0$:
\begin{align*}
      I_Y(0) - I_X(0) &= \log \frac{b}{a}
    - \frac{1}{2 \sigma_T^2}
    \left[
    (1-\rho^2)(b^2 - a^2)
    \right]
\end{align*}

We make the substitution $b = \gamma a$, and the above becomes a function $f$ of $\gamma$ (and $\rho$, implicitly):

\begin{align}
\label{eqn:proof:linearIntxn.WB.specificInfo.specialFunction}
     f(\gamma) &:=  \log \gamma
- \frac{1 (1-\rho^2)(\gamma^2 - 1)}{2 (\gamma^2 + 2 \rho \gamma + 1)}
\end{align}

Note that $\gamma > 1$ since $0 < a < b$. In Computation~\ref{computation.linearIntxn.WB.specificInfo.differenceFunctionIncreasing}, we demonstrate this function is strictly increasing for $\gamma \geq 1$, under the assumption $\rho \in [-1,1]$. Since $f(1) = 0$, it follows that $f(\gamma)>0$ for any $\gamma>1$ and $\rho \in [-1,1]$. 
\end{proof}

We are now ready to prove Claim \ref{claim:linearIntxn.WB.lattice}, as part of Theorem~\ref{theorem:linearIntxn}.  
\begin{proof}[Proof of Claim  \ref{claim:linearIntxn.WB.lattice}]
The form of $\Imin$ follows from $I_X(t)\leq I_Y(t)$ for all $t$ and hence $\Imin=\mathbb{E}(I_X(T))$ where $I_X(T)$ has the form of Eq.\@ (\ref{eqn:linearIntxn.WB.specificInfo.constant}).
It is immediate that $\Umin{X}=0$ and the form of $\Umin{Y}$ follows from combining equations Eq.\@ (\ref{eq:linearIntxn.MI.YT}) and Eq.\@ (\ref{eqn:linearIntxn.WB.specificInfo.constant}).
Lastly, $\Smin=\infty$ follows from $I(T;X,Y)=\infty$ and Eq.\@ (\ref{eq:PID_1}).
\end{proof}

We now turn to the proof of the $\Ipm$ PID for Theorem~\ref{theorem:linearIntxn}, i.e. Claim~\ref{claim:linearIntxn.PM.lattice}.
This PID follows readily from computing the specificity and ambiguity lattices, which we do in Lemmas~\ref{lemma:linearIntxn.WB.specificInfo} and \ref{lemma:linearIntxn.PM.lattice.ambiguity}, respectively.
The specificity lattice does not depend on the interaction itself, only the distribution of the predictors $X$ and $Y$. Thus, we also use Lemma~\ref{lemma:linearIntxn.WB.specificInfo} for NFBIs with similarly distributed predictors, as we do in Theorem~\ref{theorem:GenericKernel.UniqueInfo}.

\begin{lemma}[$I_{\PM}$ Specificity Lattice for Gaussian PID]
\label{lemma:linearIntxn.PM.lattice.specificity}
Let $X,Y \to_\rho T$ be a Gaussian bivariate interaction, noisy or noiseless. Then the specificity lattice $\mathfrak{d}^+_{\PM}$ as defined by Def.~\ref{defn:redundant-info.Ipm} has the following form:
\begin{align}
    R^+ &= \log(\sqrt{2 \pi e}) - \frac{1}{\pi} \sqrt{1-\rho^2} \\
    U_X^+ &= \frac{1}{\pi} \sqrt{1-\rho^2} \\
    U_Y^+ &= \frac{1}{\pi} \sqrt{1-\rho^2} \\
    S^+ &= \log(\sqrt{2 \pi e}) + \log(\sqrt{1 - \rho^2}) - \frac{1}{\pi} \sqrt{1-\rho^2}
\end{align}
\end{lemma}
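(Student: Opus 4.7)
The plan is to compute the redundant specificity $R^+ = \Ipmp(T;X,Y)$ directly from Definition~\ref{defn:redundant-info.Ipm}, and then derive the remaining three atoms from Eqs.~(\ref{eq:redundant-info.Ipm:eqn:sublattice.Ux})--(\ref{eq:redundant-info.Ipm:eqn:sublattice.S}) using the standard expressions for univariate and bivariate Gaussian differential entropies. A crucial preliminary observation is that $\Ipmp$ depends only on the predictor surprisals $s_{X_k}(X_k)$ and hence only on the marginal distributions of $X$ and $Y$, so the interaction kernel and the presence or absence of noise between $(X,Y)$ and $T$ play no role --- this justifies the lemma's claim for both noisy and noiseless Gaussian NFBIs.

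First, I would instantiate the surprisals using the standard normal marginals $p_X(z)=p_Y(z)=(2\pi)^{-1/2}e^{-z^2/2}$, so that
\begin{equation*}
s_X(X) = \log\sqrt{2\pi} + \tfrac{1}{2}X^2, \qquad s_Y(Y) = \log\sqrt{2\pi} + \tfrac{1}{2}Y^2.
\end{equation*}
This reduces the redundant specificity to
\begin{equation*}
R^+ = \log\sqrt{2\pi} + \tfrac{1}{2}\,\mathbb{E}\min(X^2,Y^2).
\end{equation*}
Using the identity $\min(a,b) = \tfrac{1}{2}(a+b) - \tfrac{1}{2}|a-b|$, together with $\mathbb{E}[X^2]=\mathbb{E}[Y^2]=1$, this further reduces to evaluating $\mathbb{E}|X^2-Y^2|$.

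The main technical step is this expectation. Writing $X^2-Y^2=(X-Y)(X+Y)$ and setting $U:=X-Y$, $V:=X+Y$, the pair $(U,V)$ is jointly Gaussian with zero mean, variances $2-2\rho$ and $2+2\rho$, and covariance $\mathrm{Cov}(U,V)=\mathbb{E}[X^2-Y^2]=0$. Since uncorrelated jointly Gaussian variables are independent, $U$ and $V$ are independent, so $\mathbb{E}|UV|=\mathbb{E}|U|\cdot\mathbb{E}|V|$. Applying the elementary formula $\mathbb{E}|Z|=\sigma\sqrt{2/\pi}$ for $Z\sim N(0,\sigma^2)$ to each factor yields
\begin{equation*}
\mathbb{E}|X^2-Y^2| = \sqrt{2-2\rho}\cdot\sqrt{2+2\rho}\cdot\tfrac{2}{\pi} = \tfrac{4}{\pi}\sqrt{1-\rho^2},
\end{equation*}
from which $\mathbb{E}\min(X^2,Y^2)=1-\tfrac{2}{\pi}\sqrt{1-\rho^2}$ and thus $R^+=\log\sqrt{2\pi e}-\tfrac{1}{\pi}\sqrt{1-\rho^2}$, matching the claimed expression.

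For the remaining atoms I would invoke the decomposition identities in Eqs.~(\ref{eq:redundant-info.Ipm:eqn:sublattice.Ux})--(\ref{eq:redundant-info.Ipm:eqn:sublattice.S}) with $H^+(T;\XX)=H(\XX)$. Plugging in $H(X)=H(Y)=\log\sqrt{2\pi e}$ immediately gives $U^+_X=U^+_Y=\tfrac{1}{\pi}\sqrt{1-\rho^2}$. For $S^+$, I would use the standard bivariate Gaussian entropy $H(X,Y)=\log(2\pi e)+\log\sqrt{1-\rho^2}$ and combine with $-H(X)-H(Y)+R^+$ to obtain $S^+=\log\sqrt{2\pi e}+\log\sqrt{1-\rho^2}-\tfrac{1}{\pi}\sqrt{1-\rho^2}$. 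The only step that requires any real work is the decorrelation argument for $\mathbb{E}|X^2-Y^2|$; everything else is bookkeeping against the already-established Gaussian entropy formulas.
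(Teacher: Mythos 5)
Your proposal is correct and follows essentially the same route as the paper's own proof: both reduce $R^+$ to $\log\sqrt{2\pi}+\tfrac{1}{2}\mathbb{E}\min(X^2,Y^2)$, evaluate $\mathbb{E}|X^2-Y^2|$ by factoring into the independent (zero-covariance) Gaussians $X-Y$ and $X+Y$ and applying $\mathbb{E}|W|=\sigma\sqrt{2/\pi}$, and then obtain $U_X^+$, $U_Y^+$, $S^+$ by bookkeeping with the standard Gaussian entropies via Eqs.~(\ref{eq:redundant-info.Ipm:eqn:sublattice.Ux})--(\ref{eq:redundant-info.Ipm:eqn:sublattice.S}). Your explicit remark that $\Ipmp$ depends only on the predictor marginals (hence covering both the noisy and noiseless cases) is a point the paper leaves implicit, but it is not a different argument.
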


\begin{proof}

We begin by computing the redundant specificity
\begin{align*}
R^+ &= \mathbb{E} r^+(X,Y)\\
\intertext{where}
r^+(x,y) &= \min (-\log(p_X(x)), - \log(p_Y(y))).
\end{align*}
We take the expectation over $X \times Y$ space, $(\mathbb{R}^2, \mu_{X \times Y})$. Since $p_X = p_Y \sim N(0,1)$, we have that
\begin{align*}
p_X(u) = p_Y(u) = \frac{1}{\sqrt{2\pi}} e^{\frac{-u^2}{2}}
\intertext{and so }
    - \log p_X(x) \leq - \log p_Y(y) \Leftrightarrow x^2 \leq y^2.
\end{align*}
This yields that (where we use the fact that $2\min(x,y)=x+y-|x-y|$
\begin{align*}
\mathbb{E} r^+(X,Y)&=\log(\sqrt{2\pi})+\frac{1}{2}\mathbb{E}(\min(X^2,Y^2))\\
&=\log(\sqrt{2\pi})+\frac{1}{4}\left(\mathbb{E}X^2+\mathbb{E}Y^2-\mathbb{E}(|X^2-Y^2|)\right)\\
&=\log(\sqrt{2\pi})+\frac{1}{4}\left(2-\mathbb{E}(|X-Y|)\mathbb{E}(|X+Y|)\right)\\
&=\log(\sqrt{2\pi})+\frac{1}{4}\left(2-\frac{2}{\pi}\sqrt{4-4\rho^2}\right)\\
&=\log(\sqrt{2\pi})+\frac{1}{2}-\frac{1}{\pi}\sqrt{1-\rho^2},
\end{align*}
where we used above the fact that $X-Y\sim N(0,2-2\rho)$ is independent of $X+Y\sim N(0,2+2\rho)$ (indeed, the covariance is Cov$(X-Y,X+Y)=0$), and for $W\sim$Norm$(0,\sigma^2)$, $\mathbb{E}(|W|)=\sigma\sqrt{2/\pi}$.

We now turn to the unique specificities $U^+_X$ and $U^+_Y$. The specificity components $I^+(T;X)$ and $I^+(T;Y)$ of the mutual informations are just the entropies $h(X)$ and $h(Y)$, by definition (Def.~\ref{defn:redundant-info.Ipm}).
Since $X$ and $Y$ are standard normal variables in their marginal distributions, we have that
$$
I^+(T;X) = I^+(T;Y) = \log(\sqrt{2 \pi e}).
$$
Thus, unique specificity becomes
\begin{align*}
    U_X^+ = U_Y^+ &= I^+(T;Y) - R^+\\
    &= \frac{1}{\pi} \sqrt{1-\rho^2}
\end{align*}

Finally, we compute synergistic specificity. Again, the specificity component of the mutual information $I(T;X,Y)$ is the joint entropy:
$$
I^+(T;\{X,Y\}) = h(X,Y) = \log(2 \pi e) + \log(\sqrt{1 - \rho^2}).
$$
So, by Eq.~(\ref{eq:redundant-info.Ipm:eqn:sublattice.S}), we have that
\begin{align*}
    S^+ &= \log(\sqrt{2 \pi e}) + \log(\sqrt{1 - \rho^2}) - \frac{1}{\pi} \sqrt{1-\rho^2}
\end{align*}
\end{proof}

Whereas the specificity lattice depends only upon the distribution of the predictors, the ambiguity lattice depends upon the actual interaction. This is also the component where the synergistic component blows up for a noiseless interaction.

\begin{lemma}[$I_{\PM}$ Ambiguity Lattice for Linear Interaction]
\label{lemma:linearIntxn.PM.lattice.ambiguity}
Let $X,Y \to_\rho T$ be a linear interaction as in Eq.~(\ref{eq:defn:linearIntxn}). Then the ambiguity lattice $\mathfrak{d}^-_{\PM}$, as defined by Def.~\ref{defn:redundant-info.Ipm} has the following form:
\begin{align}
    R^- &= \log(\sqrt{2 \pi e}) + \log \sqrt{1 - \rho^2} + \log \left( \frac{a}{\sqrt{a^2 + b^2 + 2 a b \rho}} \right) \\
    U_X^- &= \log \left( \frac{b}{a} \right) \\
    U_Y^- &= 0 \\
    S^- &= - \infty
\end{align}
\end{lemma}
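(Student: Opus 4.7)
The target quantities are the ambiguity atoms $R^-, U^-_X, U^-_Y, S^-$, which by Eqs.~(\ref{eq:redundant-info.Ipm:eqn:sublattice.R}--\ref{eq:redundant-info.Ipm:eqn:sublattice.S}) are determined by the redundant ambiguity
$R^- = \Ipmm(T;X,Y) = \mathbb{E}\min\bigl(s_{X|T}(X|T),\, s_{Y|T}(Y|T)\bigr)$
together with the unary entropic components $H^-(T;X)=h(X|T)$, $H^-(T;Y)=h(Y|T)$, and $H^-(T;X,Y)=h(X,Y|T)$. My plan is to derive these four quantities directly from the Gaussian conditionals of the linear interaction, and then assemble the lattice.

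First, I would reuse the conditional Gaussian computation already performed in the proof of Lemma~\ref{lemma:linearIntxn.WB.specificInfo}: $X\mid T{=}t \sim N(\mu_{X|T},\sigma_{X|T}^2)$ with $\mu_{X|T} = (a+\rho b)t/\sigma_T^2$ and $\sigma_{X|T}^2 = b^2(1-\rho^2)/\sigma_T^2$, and the analogous formulas for $Y\mid T{=}t$ by swapping $a \leftrightarrow b$. The Gaussian entropy formula then gives $h(X|T) = \log\sqrt{2\pi e} + \log\bigl(b\sqrt{1-\rho^2}/\sigma_T\bigr)$ and $h(Y|T) = \log\sqrt{2\pi e} + \log\bigl(a\sqrt{1-\rho^2}/\sigma_T\bigr)$. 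Since $T = aX+bY$ pins $(X,Y)$ to a one-dimensional affine subspace of $\mathbb{R}^2$ given $T$, the conditional joint law admits no density on $\mathbb{R}^2$, hence $h(X,Y|T) = -\infty$. Plugging into Eq.~(\ref{eq:redundant-info.Ipm:eqn:sublattice.S}) will immediately yield $S^- = -\infty$.

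The key step is computing $R^-$. The plan is to show that $s_{Y|T}(Y|T) \le s_{X|T}(X|T)$ \emph{almost surely}, so that the pointwise minimum is simply $s_{Y|T}(Y|T)$ and $R^- = \mathbb{E}\, s_{Y|T}(Y|T) = h(Y|T)$. To get this, I would exploit the deterministic constraint $bY = T - aX$ to write
\begin{align*}
Y - \mu_{Y|T} &= \tfrac{1}{b}(T-aX) - \tfrac{(b+\rho a)T}{\sigma_T^2} = -\tfrac{a}{b}(X - \mu_{X|T}),
\end{align*}
after the algebraic simplification $\sigma_T^2/b - (b+\rho a) = a(a+\rho b)/b$. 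Combined with $\sigma_{Y|T}^2/\sigma_{X|T}^2 = a^2/b^2$, this shows the standardized squared deviations coincide a.s., $(Y-\mu_{Y|T})^2/\sigma_{Y|T}^2 \oeq (X-\mu_{X|T})^2/\sigma_{X|T}^2$. The pointwise surprisal difference then collapses to the log-ratio of standard deviations, $s_{Y|T}(Y|T) - s_{X|T}(X|T) \oeq \tfrac{1}{2}\log(\sigma_{Y|T}^2/\sigma_{X|T}^2) = \log(a/b) < 0$ (since $a<b$), establishing the claim.

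With $R^- = h(Y|T)$ in hand, the remaining atoms follow instantly from Eqs.~(\ref{eq:redundant-info.Ipm:eqn:sublattice.Ux}--\ref{eq:redundant-info.Ipm:eqn:sublattice.Uy}): $U^-_Y = h(Y|T) - R^- = 0$ and $U^-_X = h(X|T) - R^- = \log(b/a)$, matching the stated expressions. The main obstacle is the pointwise comparison of $s_{X|T}$ and $s_{Y|T}$ — once the identity of standardized fluctuations is spotted, the rest reduces to bookkeeping on Gaussian entropies; without it one would have to evaluate $\mathbb{E}\min(\cdot,\cdot)$ by integration, which is unnecessarily laborious.
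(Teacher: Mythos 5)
Your proof is correct, and its skeleton matches the paper's: establish that the conditional surprisal of $Y$ is almost surely the smaller of the two, conclude $R^- = \mathbb{E}\, s_{Y|T}(Y|T)$, and then recover $U_X^-$, $U_Y^-$, $S^-$ by bookkeeping. Where you genuinely diverge is in how the pointwise comparison is obtained. You compute the conditional Gaussian laws $X \mid T$ and $Y \mid T$ explicitly (reusing Lemma~\ref{lemma:linearIntxn.WB.specificInfo}) and observe that the constraint $T = aX + bY$ forces the standardized fluctuations to coincide, $(Y-\mu_{Y|T})/\sigma_{Y|T} \oeq -(X-\mu_{X|T})/\sigma_{X|T}$, so the surprisal gap collapses to the constant $\log(a/b) < 0$, and $R^-$ becomes the conditional entropy $h(Y|T)$ evaluated by the Gaussian entropy formula; your algebra here checks out. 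The paper instead invokes the change-of-variables identities $p_{X|T}(x|t) = \tfrac{1}{b}\,p_{X,Y}(x,y)/p_T(t)$ and $p_{Y|T}(y|t) = \tfrac{1}{a}\,p_{X,Y}(x,y)/p_T(t)$ (the Jacobian factors of Prop.~\ref{proposition:GenericKernel.CoV} specialized to the linear kernel), which makes the constant gap $\log(a/b)$ immediate with no conditional means or variances at all, and evaluates $R^- = \log a + h(X,Y) - h(T)$ from the joint and marginal Gaussian entropies; since $h(Y|T) = \log a + h(X,Y) - h(T)$, the two computations agree. Your route costs a bit more algebra but stays entirely within the conditional-distribution picture; the paper's route is shorter and is the same Jacobian mechanism that powers the general result in Theorem~\ref{theorem:GenericKernel.UniqueInfo}, so it scales to non-linear kernels. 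One last difference: for $S^-$ you assert $h(X,Y \mid T) = -\infty$ directly from singularity of the conditional law (a measure-zero line in $\mathbb{R}^2$), whereas the paper deduces $I^-(T;X,Y) = -\infty$ from $I(T;X,Y) = \infty$ (Cor.~\ref{corollary:infiniteMI.sumOfDiracs}) together with finiteness of the specificity component --- the same conclusion, but derived inside the paper's formalism rather than by appeal to the convention that singular distributions have differential entropy $-\infty$.
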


\begin{proof}
We begin with redundant ambiguity. A change of variables reveals that, for a given allowable triplet $(X,Y,T) = (x,y,t)$:
\begin{align*}
    \label{eqn:proof:linearIntxn-Ambig.ref1}
    \frac{p_{X,T}(x,t)}{p_T(t)}&= \frac{1}{b}\frac{\frac{}{} p_{X,Y}(x,y)}{p_T(t)} \tag{$*$} \\
    \label{eqn:proof:linearIntxn-Ambig.ref2}
    \frac{p_{Y,T}(y,t)}{p_T(t)} &= \frac{1}{a}\frac{\frac{}{} p_{X,Y}(x,y)}{p_T(t)} \tag{$**$}
\end{align*}
We see that, since $0 < a < b$, we have that
\begin{align}
    \nonumber - \log \frac{p_{Y,T}(y,t)}{p_T(t)} &\leq - \log \frac{p_{X,T}(x,t)}{p_T(t)} \\
    \intertext{and so, using Def.~\ref{defn:redundant-info.Ipm}:}
    \label{eqn:proof:linearIntxn-Ambig.ref3}
    R^- &= \mathbb{E} \left( - \log \frac{p_{Y,T}(y,t)}{p_T(t)} \right) \tag{$***$} \\
    \nonumber &= \log a + h(X,Y) - h(T)
\end{align}
Using standard formulae for the entropy of Gaussians \cite[ch.~7]{CoverThomas2005}, as well as Eq.~\ref{eq:linearIntxn.TGaussian}, we have that
\begin{align*}
    h(T) &=  \log \sqrt{2 \pi e} + \log(\sigma_T) \\
    h(X,Y) &= \log(2 \pi e) + \log(\sqrt{1-\rho^2})
\end{align*}
Thus, redundant ambiguity becomes
$$
R^- = \log \sqrt{2 \pi e} + \log \sqrt{1-\rho^2} + \log \left( \frac{a}{a^2 + b^2 + 2 \rho a b} \right).
$$

We can now turn to the unique ambiguity atoms $U_X^-$ and $U_Y^-$. By Def.~\ref{defn:redundant-info.Ipm}, Eqs~(\ref{eq:redundant-info.Ipm:eqn:sublattice.Ux}) and (\ref{eq:redundant-info.Ipm:eqn:sublattice.Uy}), we have that
\begin{align*}
    U_X^- &= \mathbb{E} \left( - \log \frac{p_{X,T}(x,t)}{p_T(t)} \right) - R^- \\
    U_Y^- &= \mathbb{E} \left( - \log \frac{p_{Y,T}(y,t)}{p_T(t)} \right) - R^-
\end{align*}
By plugging in Eq.~(\ref{eqn:proof:linearIntxn-Ambig.ref3}), we arrive at
\begin{align*}
    U_Y^- &= 0\\
    U_X^- &= \mathbb{E} \left( - \log \frac{p_{X,T}(x,t)}{p_T(t)} \right) - \mathbb{E} \left( - \log \frac{p_{Y,T}(y,t)}{p_T(t)} \right) \\
\end{align*}
We expand $U_X^-$ with Eqs~(\ref{eqn:proof:linearIntxn-Ambig.ref1}) and (\ref{eqn:proof:linearIntxn-Ambig.ref2}):
\begin{align*}
    U_X^- &= \mathbb{E} \left( - \log \frac{p_{X,T}(x,t)}{p_T(t)} \right) - \mathbb{E} \left( - \log \frac{p_{Y,T}(y,t)}{p_T(t)} \right) \\
    &= (\log b + h(X,Y) - h(T)) - (\log a + h(X,Y) - h(T)) \\
    &= \log \left( \frac{b}{a} \right)
\end{align*}

We have computed finite values for the atoms $R^-, U_X^-,$ and $U_Y^-$. Since $I(T;X,Y) = \infty$, we note that $I^{-}(T;X,Y) = - \infty$, and thus Eq.~(\ref{eq:redundant-info.Ipm:eqn:sublattice.S}), when rewritten in the following form:
$$
I^{-} = R^{-} + S^{-} + U_X^{-} + U_Y^{-}
$$
makes it determinate that $S^{-} = -\infty$.

\end{proof}

Combining Lemmas~\ref{lemma:linearIntxn.PM.lattice.specificity} \& \ref{lemma:linearIntxn.PM.lattice.ambiguity}, we arrive at the PID in Claim~\ref{claim:linearIntxn.PM.lattice}.
Thus, we have the exact values of the continuous PIDs $\dmin$ and $\dpm$ for the linear interaction in Theorem~\ref{theorem:linearIntxn}.

\subsection{Commonplace Rules and Auxiliary Computations}
\label{subsection:appendix.rulesAndComputations}

\subsubsection{Rules for Gaussian Interactions}
\begin{newrule}
\label{rule:Gaussian.linearTransformation}
 In general, if $\bm{U} \sim N(\bm{\mu}_{\bm{U}}, \Sigma_{\bm{U}})$ and $\bm{V} \sim N(\bm{\mu}_{\bm{V}}, \Sigma_{\bm{V}})$ are $k$-dimensional Gaussian vectors such that $\bm{V} = A \bm{U}$ for an $n \times n$ real, nonsingular transformation $A$, then
\begin{align*}
    \bm{\mu}_V &= A \bm{\mu}_U, \\
    \Sigma_{\bm{V}} &= A \Sigma_{\bm{U}} A^T.
\end{align*}
\end{newrule}

\begin{newrule}
\label{rule:conditionalGaussian}
Let $X_1, X_2$ be two Gaussians with correlation $\rho$, means $\mu_1, \mu_2$, standard deviations $\sigma_1, \sigma_2$.
Then
$$
X_1 |_{X_2 = x^{(2)}} \sim N(\mu, \sigma^2  ),
$$
where
$$
\mu = \mu_X + \rho \frac{\sigma_1}{\sigma_2}(x^{(2)} - \mu_2),
$$ $$
\sigma^2 = \sigma_1^2 (1 - \rho^2) .
$$
\end{newrule}

\begin{newrule}
\label{rule:KLdivergence.twoGaussians}
Let $p$ and $q$ be two Gaussian densities, $p\sim N(\mu_1, \sigma_1^2)$ and $q\sim N(\mu_2, \sigma_2^2)$.
Then the Kullback-Liebler (KL) Divergence between them is given by
$$
D(p || q) =
\log \left( \frac{\sigma_2}{\sigma_1} \right) +
\frac{1}{2} \left( \frac{\sigma_1^2 + (\mu_1 - \mu_2)^2}{\sigma_2^2}  \right) - \frac{1}{2} .
$$
\end{newrule}

\subsubsection{Auxiliary Computations}

In this section, we keep details for computations that use elementary methods of little technical interest.

\begin{computation}[The function $f(\gamma)$ in Eq~\ref{eqn:proof:linearIntxn.WB.specificInfo.specialFunction} is increasing.]
\label{computation.linearIntxn.WB.specificInfo.differenceFunctionIncreasing}
For any $\rho \in (-1,1)$, the function $f(\gamma)$ from Eq~\ref{eqn:proof:linearIntxn.WB.specificInfo.specialFunction}, reproduced below, is increasing in $\gamma$ on $[1, \infty)$:

\begin{align*}
\tag{\ref{eqn:proof:linearIntxn.WB.specificInfo.specialFunction}}
     f(\gamma) &=  \log \gamma
- \frac{\log(e) (1-\rho^2)(\gamma^2 - 1)}{2 (\gamma^2 + 2 \rho \gamma + 1)}
\end{align*}
\end{computation}

\begin{proof}
When we take the derivative of this function with respect to $\gamma$, we have
\begin{align*}
f'(\gamma)
&=\frac{4 (u(\gamma))^2-
4\gamma^2 (1-\rho^2) u(\gamma) + v(\gamma) }{4\gamma (u(\gamma))^2 },\\
\intertext{where}
u(\gamma) &= \gamma^2 + 2 \rho \gamma + 1,\\
v(\gamma) &= 4 \gamma(1-\rho^2)(\gamma^2 - 1)(\rho+\gamma).
\end{align*}
For $\gamma > 1$, note that $u(\gamma)$ and $v(\gamma)$ are positive.
Thus, positivity of $f'(\gamma)$ then follows from 
\begin{align*}
& 4 (u(\gamma))^2-
4\gamma^2 (1-\rho^2) u(\gamma) \\
&=u(\gamma)(4+8\rho\gamma+4\gamma^2-4\gamma^2+4\gamma^2\rho^2)\\
&= 4 u(\gamma) (1+\rho\gamma)^2 > 0.
\end{align*}

\end{proof}

\subsection{Figures for Network Simulations}

\begin{figure*}[b!]
    \includegraphics[scale=0.45]{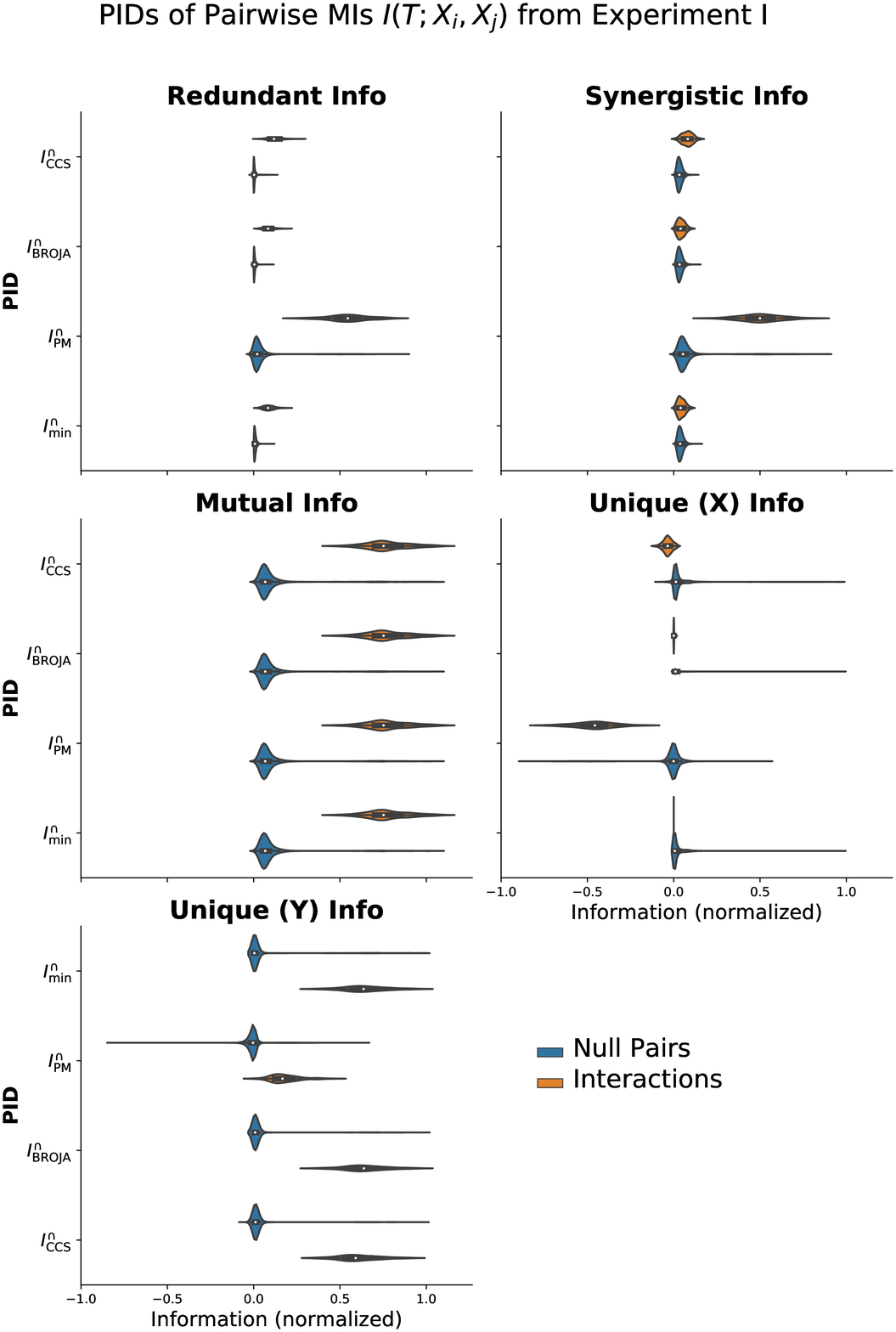}
    \caption{\textbf{Bivariate PID atoms for interactions in Experiment I} For the same bivariate PID data as in Fig.~\ref{fig:experiment1}, we present the full distributions of all PID atoms and mutual information, for both true interactions $(i,j) \in \mathcal{E}'$ and null pairs $(i,j) \notin \mathcal{E}'$. We can see that the $\Ipm$ PID assigns more synergy $\Spm$ to the interactions and negative unique information $\Upm{X}$ to the switch gene.}
    \label{fig:supp.expI}
\end{figure*}

\begin{figure*}
    \includegraphics[scale=0.4]{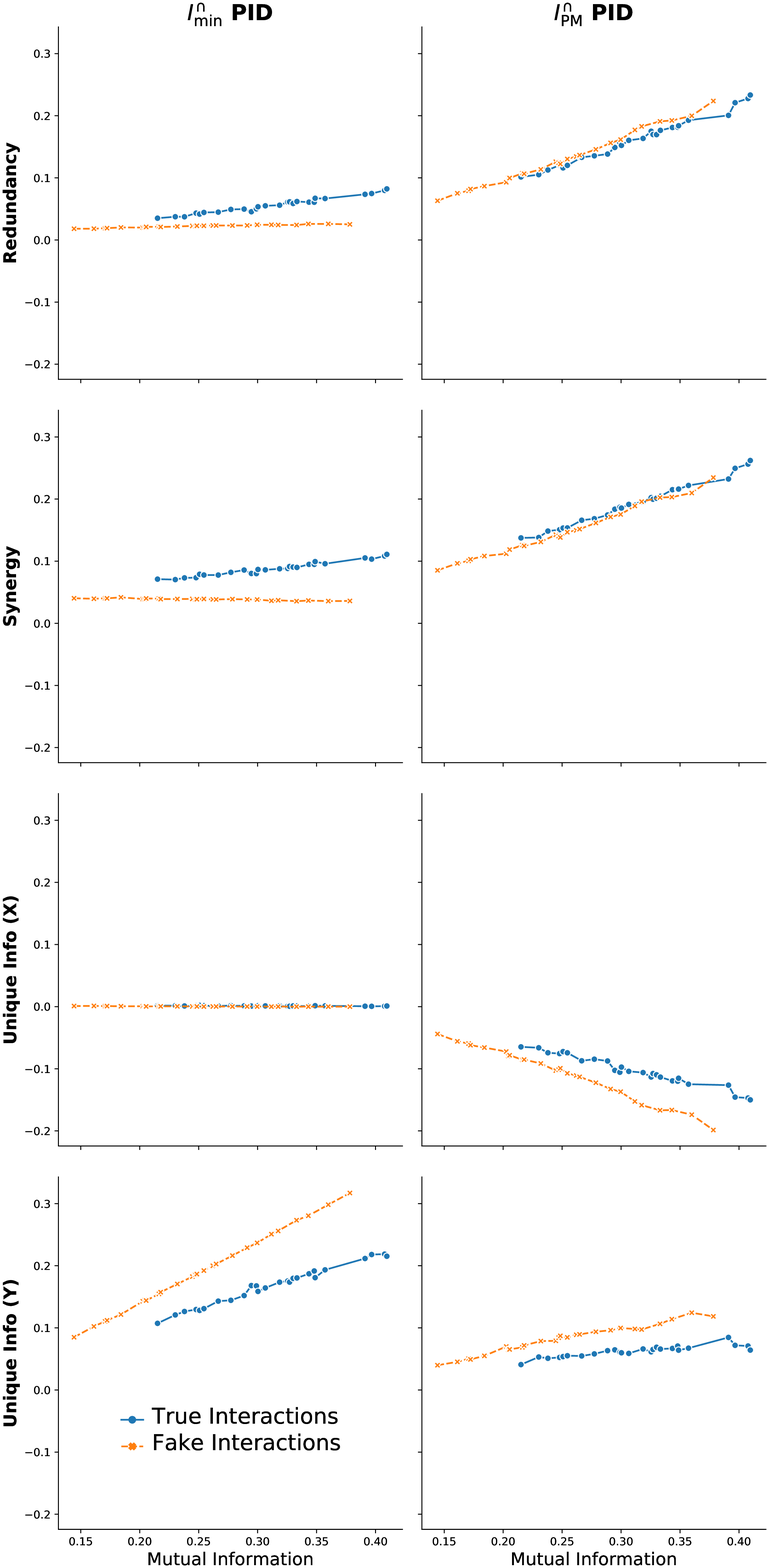}
    \caption{\textbf{Relationship between mutual information and bivariate $\Imin$ and $\Ipm$ atoms, for mixed interactions in Experiment~II.}
    In this figure, we display the relationship between mutual information and the (unranked) PID information atoms for Experiment II (Fig.~\ref{fig:experiment2}). Each connected scatter plot maps mutual information against a PID atom, for true and false interactions. Note that mutual information varies due to the parameter $\beta$ in Eq.~(\ref{eq:networkResponse.networkB}). We see that the $\Ipm$ PID tends to track closely with mutual information, with true and false interactions following a near-identical pattern for both synergy and redundancy. By contrast, the $\Imin$ PID is consistent in its assignment of the $\Rmin, \Smin,$ and $\Umin{X}$ atoms. Only the unique information assigned to the gene $Y_2$, which responsible for the univariate response signal $\beta Y_2$, increases with mutual information (corresponding to increasing $\beta$).
    }
    \label{fig:supp.expII}
\end{figure*}

\begin{figure*}
     \includegraphics[scale=0.4]{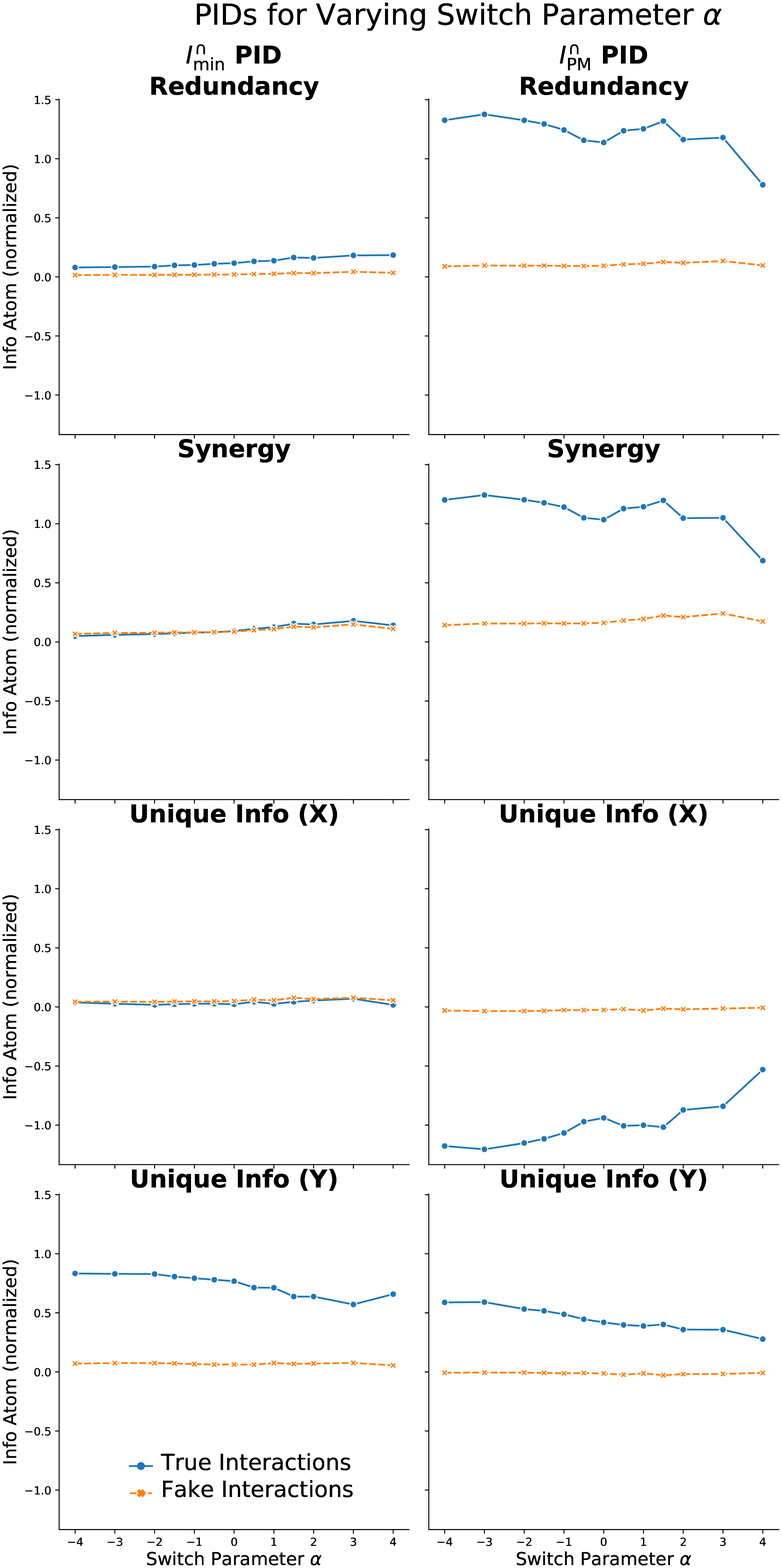}
     \caption{
     \textbf{Bivariate PID atoms (normalized) as a function of the switch parameter $\alpha$ from Experiment~III}
     We present the four PID atoms of the $\Imin$ and $\Ipm$ PIDs for Exp.~III, in which we vary the parameter $\alpha$ in the response (Eq.~\ref{eq:networkResponse.networkA}). Here, each atom has been normalized by the average MI of the interaction pairs, so as to discount for the effect of total MI varying with $\alpha$.}
     \label{fig:supp.expIII}
 \end{figure*}

\end{document}